\documentclass[a4paper, USenglish, cleveref, autoref, thm-restate,numberwithinsect]{lipics-v2021}
 
\usepackage{todonotes}
\usepackage{booktabs}
\usepackage{multirow}
\usepackage{array}
\usepackage{xcolor}

\newif\ifshort
\shortfalse 
\newcommand{\defproblem}[3]{
\vspace{3mm}
\noindent\fbox{
  \begin{minipage}{0.96\textwidth}
  \begin{tabular*}{\textwidth}{@{\extracolsep{\fill}}lr} \textsc{#1} \\ \end{tabular*}
  {\bf{Input:}} #2  \\
  {\bf{Question:}} #3
  \end{minipage}
  }
  \vspace{2mm}
}

\newcommand{\defparproblem}[4]{
\vspace{3mm}
\noindent\fbox{
  \begin{minipage}{0.96\textwidth}
  \begin{tabular*}{\textwidth}{@{\extracolsep{\fill}}lr} \textsc{#1} \\ \end{tabular*}
  {\bf{Input:}} #2  \\
  {\bf{Parameter:}} #3\\
  {\bf{Question:}} #4
  \end{minipage}
  }
  \vspace{2mm}
}

\newcommand{\CFCut}{{\sc Conflict-Free Edge Cut}}
\newcommand{\mincut}{{\sc Min Conflict-Free Edge Cut}}

\newcommand{\cfcut}{{\sc CF-Cut}}
\newcommand{\mincfcut}{{\sc Min-CF-Cut}}
\newcommand{\ThreeColor}{{\sc $3$-Colorability}}
\newcommand{\nphard}{\textsf{NP-hard}}
\newcommand{\yesinst}{{\sf yes}-instance}
\newcommand{\noinst}{{\sf no}-instance}

\newcommand{\intra}{intra-partite}
\newcommand{\inter}{inter-partite}
\newcommand{\MBC}{{\sc Multicolored Biclique}}
\newcommand{\UnitMIS}{{\sc Unit 2-Track Multicolored Independent Set}}
\newcommand{\gen}{{\textsf{general}}}
\newcommand{\nka}{{{\rm NP} $\subseteq$ {\rm coNP/poly}}}
\newcommand{\cOO}{{\mathcal O}}

\title{On the Parameterized Complexity of Conflict-Free Edge Cut in Undirected Graphs} 

\author{Sourav Das}{Indraprastha Institute of Information Technology Delhi, New Delhi, India}{souravd@iiitd.ac.in}{}{}

\author{Ashwin Jacob}{National Institute of Technology Calicut, Kozhikode, India \and \url{https://ashwinjacob.github.io}}{ashwinjacob@nitc.ac.in}{https://orcid.org/0000-0003-4864-043X}{}

\author{Arpit Kumar}{Indraprastha Institute of Information Technology Delhi, New Delhi, India \and \url{https://sites.google.com/iiitd.ac.in/arpit-gang}}{arpitk@iiitd.ac.in}{https://orcid.org/0009-0002-0741-5682}{}

\author{Diptapriyo Majumdar}{Indraprastha Institute of Information Technology Delhi, New Delhi, India \and \url{https://diptapriyomajumdar.wixsite.com/toto}}{diptapriyo@iiitd.ac.in}{https://orcid.org/0000-0003-2677-4648}{}

\authorrunning{S. Das et al.} 

\Copyright{S. Das, A. Jacob, A. Kumar, D. Majumdar} 

\ccsdesc[100]{Mathematics of computing $\rightarrow$ Discrete mathematics $\rightarrow$ Graph theory; Theory of computation $\rightarrow$ Design and analysis of algorithms $\rightarrow$ Parameterized complexity and exact algorithms} 

\keywords{Cut Problems, Conflict-Free Edge Cut} 

\category{} 

\relatedversion{} 

\nolinenumbers 

\EventEditors{John Q. Open and Joan R. Access}
\EventNoEds{2}
\EventLongTitle{42nd Conference on Very Important Topics (CVIT 2016)}
\EventShortTitle{CVIT 2016}
\EventAcronym{CVIT}
\EventYear{2016}
\EventDate{December 24--27, 2016}
\EventLocation{Little Whinging, United Kingdom}
\EventLogo{}
\SeriesVolume{42}
\ArticleNo{23}

\begin{document}

\maketitle

\begin{abstract}
In this paper, we study {\CFCut} which is a recently introduced conflict-free version of the {\sc Min-Cut} problem that asks to find the minimum number of edges to disconnect a connected graph.
The {\CFCut} ({\cfcut} in short) takes as input a connected undirected graph $G = (V, E)$, a conflict graph $\widehat{G}$ such that $E(G) = V(\widehat{G})$, and the objective is to decide whether there exists $F \subseteq E(G)$ such that $G - F$ is disconnected and $F$ is an independent set in $\widehat{G}$.
Rauch, Rautenbach, and Souza [IPL-2025] proved that {\CFCut} is NP-Complete and also provided some results on the parameterized complexity of {\CFCut}.
A related variant {\mincut} ({\mincfcut} in short) takes a connected graph $G$, a conflict graph $\widehat{G}$ such that $V(\widehat{G}) = E(G)$, and an integer $k$ as input and asks if there is a set $F$ of at most $k$ edges such that $G - F$ is disconnected and $F$ is an independent set in $\widehat{G}$.
In this paper, we extend the work by Rauch et al. [IPL-2025], and provide a systematic study on the {\CFCut} and {\mincfcut} from the perspective of parameterized complexity and polynomial kernelization.
We prove that {\CFCut} is NP-hard when the dissociation number (${\sf diss}$) of the input graph is at most two.
We also complement it by proving that {\CFCut} is polynomial-time solvable when the dissociation number of an input graph is at most one.
Additionally, for {\mincfcut}, we consider both solution size ($k$), as well as various structural parmaeters of the input graph as parameters, and provide fixed-parameter tractability and W[1]-hardness results when the conflict graph $\widehat{G}$ is restricted to various graph classes.
We also prove that unless {\nka}, {\mincfcut} admits no polynomial kernel when parameterized by the vertex cover number (${\sf vc}$) of the input graph; and also when parameterized by the sum of the solution size ($k$) and the vertex integrity (${\sf vi}$) of the input graph.
\end{abstract}



\section{Introduction}
\label{sec:intro}


Graph modification, graph protection, and graph separation problems are among the fundamental problems and have been widely explored from the perspective of classical and parameterized complexity.
In the recent years, several graph theoretic and combinatorial optimization problems have been explored by introducing {\em conflict constraints} --
restrictions ensuring that certain pairs of objects, vertices, or edges cannot be chosen simultaneously, typically to model real-world incompatibilities such as resource clashes or scheduling interference.
Introduction of conflict constraints can add an additional layer in the ``computational complexity'' of a combinatorial optimization problem.
This framework has been applied to various classical problems e.g. {\sc Knapsack} \cite{BettinelliCM17}, {\sc Shortest Path, Maximum Matching} \cite{DarmannPSW11}, {\sc Spanning Tree} \cite{DarmannPSW11,ZhangKP2011}.
Therefore, it is natural to ask -- ``{\em how does the complexity of an optimization problem change when conflict constraints are introduced?}''
From the perspective of parameterized complexity, the question is relevant for various reasons.
If a parameterized problem $L \subseteq \Sigma^* \times \mathbb{N}$ is FPT with respect to some parameter, then does it remain so once conflict constraints are added, or does it become W-hard?
In the latter case, then under what structural restrictions on the conflict graph can tractability be restored?
Studying conflict-free variants of parameterized problems in this way yields a precise map of which conflict-graph classes preserve tractability.

Despite this general interest, the conflict-free variant of graph separation problems has received comparatively little attention.
In this paper, we study {\CFCut} which asks to find the existence of an edge cut with conflict constraints on the edge set.
Rauch et al. \cite{RauchRS25} studied {\CFCut} from the classical and parameterized complexity perspective by considering the vertex cover number, clique cover number, and feedback vertex number of the primary graph and clique cover number of the conflict graph as parameters.
However, there are several structural parameters of the primary graph other than the vertex cover number and feedback vertex number, as well as various structural restrictions on the conflict graphs are unexplored.
In this paper, we study the {\CFCut} problem with respect to various natural and structural parameters, and graph classes, extending the parameterized complexity landscape of this problem.

A {\em cut} in a connected undirected graph $G = (V, E)$ is a partition $(A, B)$ of $V$ into nonempty parts. 
The {\em edge cut} of $(A, B)$, denoted by $E_G(A, B)$ is the set of edges $uv$ such that $u \in A$ and $v \in B$.
A cut $(A, B)$ of $G$ is a {\em matching cut} if every $u \in A$ has at most one neighbor in $B$ and every $u \in B$ has at most one neighbor in $A$.
The decision problem {\sc Matching Cut} asks if a connected undirected graph has a matching cut.
{\sc Matching Cut} is NP-Complete in various special classes of graphs, and it is well-studied both in classical  \cite{FeghaliLPR25,LuckePR23,LuckePR24} and parameterized complexity \cite{AravindS21,KomusiewiczKL20}.
Observe that in {\sc Matching Cut}, the edge cut of the corresponding cut is required to form a matching - i.e., a collection of edges no two of which share an endpoint.
Recently, Rauch et al. \cite{RauchRS25} generalized this notion and defined {\CFCut} as follows.

\defproblem{{\CFCut} ({\cfcut})}{A connected undirected graph $G = (V, E)$ and a conflict graph $\widehat{G}$ such that $V(\widehat{G}) = E(G)$.}{Does there exist a subset $F \subseteq E(G)$ such that $F$ is an edge cut of $G$ and is independent in $\widehat{G}$?}

In the above definition, we say that $G$ is the {\em primary graph} and $\widehat{G}$ is the {\em conflict graph}.
Observe that the {\sc Matching Cut} problem is the same as {\CFCut} problem while the conflict graph is the line graph of the primary graph.
Additionally, note that the definition of {\CFCut} asks whether there exists an edge cut of the primary graph that is independent in the conflict graph, but does not require the size of the edge cut to be minimized.
To distinguish between these two objectives, we define the following problem variant, in which we additionally wish to minimize the number of edges crossing the conflict-free cut.

\defproblem{{\mincut} ({\mincfcut})}{A connected undirected graph $G = (V, E)$, a conflict graph $\widehat{G}$ such that $V(\widehat{G}) = E(G)$, and an integer $k$.}{Is there $F \subseteq E(G)$ such that $F$ is an edge cut of $G$, $|F| \leq k$ and $F$ is independent in $\widehat{G}$?}

It is important to observe that {\mincfcut} is as hard as {\cfcut}.
To see this, observe that if {\mincfcut} is polynomial-time solvable by an algorithm $\mathcal{A}$, then we run $\mathcal{A}$ for $k =|E(G)|$.
Subsequently, if the algorithm $\mathcal{A}$ answers yes, then we output that $(G, \widehat{G})$ has a conflict-free edge cut, and otherwise $(G, \widehat{G})$ does not have a conflict-free edge cut.
If $\widehat{G}$ is an edgeless graph, then {\mincfcut} is the same as the {\sc Min-Cut} problem which is polynomial-time solvable \cite{diestel2025graph}.
Motivated by the works on the classical and parameterized complexity of {\CFCut} reported in \cite{KalyanasundaramK26,RauchRS25}, together with the broader line of work on conflict-free variants of combinatorial \cite{gupta2026computational} and graph theoretic problems \cite{AgrawalJKL018,AgrawalJKS20,JacobMR21,BhyravarapuHHKR24,BodlaenderKP2021,JainKM20}, we initiate a systematic study of {\CFCut} and {\mincut} from the perspective of parameterized complexity and kernelization.
As {\cfcut} asks to check the existence of a conflict-free edge cut in a graph, the natural choices of parameters for {\cfcut} 
are various {\em structural parameters of the primary graph}; e.g. vertex cove number (${\sf vc}$), dissociation number (${\sf diss}$), vertex integrity (${\sf vi}$) etc.
However {\mincfcut}, we ask to check if there exists a conflict-free edge cut of size at most $k$.
Hence, the choice of parameters are of two types: (i) one natural choice of parameter is $k$ (the solution size), and (ii) the alternate possible choices are various structural parameters of the input graph.
In the case of {\CFCut}, structural parameters are natural parameter choices.

\subparagraph*{Our Contribution.} Rauch et al. \cite{RauchRS25} have initiated a systematic study on the {\CFCut} when parameterized by solution size and also with respect to structural parameters.
In particular, one of their results impled that {\mincfcut} is W[1]-hard when the size of the cut is considered as the parameter.
Additionally, they proved that {\CFCut} is FPT when parameterized by the vertex cover number (${\sf vc}$) of the primary graph and para-NP-hard when parameterized by the  deletion distance to a graph of degree at most two.
Our first result closes this gap by proving that {\cfcut} is para-NP-hard when the dissociation number (${\sf diss}$) of the primary graph is considered as the parameter.
More specifically, we prove the following result.

\begin{restatable}{theorem}{DissNPhard}
\label{thm:diss-hardness}
{\CFCut} is NP-hard when the primary graph has dissociation number (${\sf diss}$) at most two.
Additionally, {\CFCut} is polynomial-time solvable when the primary graph has dissociation number at most one.
\end{restatable}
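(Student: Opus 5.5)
For the NP-hardness part I would reduce from \ThreeColor{}. The plan is to build a primary graph in which two hub vertices $s,t$ are joined by internally disjoint paths of length three. Deleting $s$ and $t$ leaves a perfect matching, so ${\sf diss}\le 2$. Each path then offers exactly three choices of which edge to cut, which will encode the three colors. For the polynomial part I would enumerate the polynomially many inclusion-minimal cuts (bonds).

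Both parts use one general observation, proved first. Let $F$ be the edge cut of a cut $(A,B)$ of a connected graph $G$, and let $C$ be a connected component of $G[B]$. Since $G$ is connected, $A\cup(B\setminus C)$ induces a connected graph, so $E_G(V\setminus C, C)\subseteq F$ is a bond. Independent sets of $\widehat{G}$ are closed under taking subsets. Hence $(G,\widehat{G})$ is a yes-instance if and only if some bond of $G$ is independent in $\widehat{G}$, and it suffices to analyse bonds.

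\emph{Hardness.} Let $H$ be an instance of \ThreeColor{}. Create vertices $s,t$, and for each $v\in V(H)$ create $a_v,b_v$ and the path $s a_v b_v t$. Name its edges $e_v^1=sa_v$, $e_v^2=a_vb_v$, $e_v^3=b_vt$. The graph $G$ is connected, and $G-\{s,t\}$ is a matching, so ${\sf diss}(G)\le 2$. The conflict graph $\widehat{G}$ has two kinds of edges:
\begin{itemize}
\item a triangle on $\{e_v^1,e_v^2,e_v^3\}$ for every $v$;
\item the edge $e_u^i e_v^i$ for every $uv\in E(H)$ and every $i\in\{1,2,3\}$.
\end{itemize}

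Now I classify the bonds $E_G(A,B)$ with $s\in A$.
\begin{itemize}
\item If $t\in A$, then $B$ is connected and avoids $s,t$, so $B\subseteq\{a_v,b_v\}$ for a single $v$. The bond is then $\{e_v^1,e_v^2\}$, $\{e_v^2,e_v^3\}$ or $\{e_v^1,e_v^3\}$, and each of these is killed by the triangle.
\item If $t\in B$, the bond contains at least one edge from every path. If it is conflict-free, it contains exactly one edge $e_v^{c(v)}$ from each path. The conflicts $e_u^ie_v^i$ then force $c$ to be a proper $3$-coloring of $H$.
\end{itemize}
Conversely, given a proper coloring $c$, put into $A$ the vertex $s$ together with the vertices of each path that precede the edge $e_v^{c(v)}$. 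This yields the conflict-free cut $\{e_v^{c(v)}: v\in V(H)\}$. The main care point is verifying this bond classification exhaustively; the triangles are chosen precisely so that every bond keeping $s$ and $t$ together is non-independent.

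\emph{Polynomial case.} Suppose ${\sf diss}(G)\le 1$. If ${\sf diss}(G)=0$, then $G$ is $K_1$ or $K_2$ and the instance is trivial. Otherwise, find a vertex $x$ with $G-x$ of maximum degree at most one by trying all $n$ candidates. Each component of $G-x$ is either a single vertex $u$ or an edge $yz$. Take a bond with $x\in A$; then $B$ is connected and contained in one component of $G-x$. The possible bonds are therefore:
\begin{itemize}
\item $\{xu\}$ for an isolated vertex $u$ of $G-x$;
\item for an edge component $yz$ with both ends adjacent to $x$: $\{xy,yz\}$, $\{xz,yz\}$, $\{xy,xz\}$;
\item for an edge component $yz$ with only $y$ adjacent to $x$: $\{xy\}$, $\{yz\}$, $\{xy,yz\}$.
\end{itemize}
There are $O(n)$ such sets. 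The algorithm checks each one for independence in $\widehat{G}$ and answers yes if and only if some check succeeds, which is correct by the bond observation.
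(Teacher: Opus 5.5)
Your hardness reduction is exactly the paper's construction, and both halves of your argument are correct in outline. However, the proof of your bond observation contains a false step. It is not true that $A\cup(B\setminus C)$ induces a connected graph. For the path $a_1ba_2$ with $A=\{a_1,a_2\}$ and $B=C=\{b\}$, the side $A$ has no edges, and $E_G(V\setminus C,C)=\{a_1b,ba_2\}$ is not a bond.

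The conclusion you need, that every edge cut of a connected graph contains a bond, is true and standard, and one more step repairs the proof. Let $D$ be any component of $G-C$. Every edge leaving $D$ ends in $C$, so $E_G(D,V\setminus D)\subseteq E_G(C,V\setminus C)\subseteq F$. Moreover, $G-D$ is connected, because $C$ is connected and every other component of $G-C$ sends an edge to $C$. Hence $E_G(D,V\setminus D)$ is a bond contained in $F$, and everything downstream goes through.

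A cosmetic slip: in your third bullet, $\{xy,yz\}$ is not a bond, since deleting it also isolates $z$. It is, however, the genuine edge cut $E_G(\{y\},V\setminus\{y\})$, so keeping it in the candidate list is harmless.

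\textbf{Hardness part.} The paper never uses bonds here. It shows directly that every conflict-free cut separates $s$ from $t$: otherwise some internal path vertex lies on the side opposite to $s$ and $t$, which forces two edges of one path into the cut. Arguing this way would make your case $t\in A$ independent of the bond lemma.

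\textbf{Polynomial part.} The paper takes a genuinely different route. It first takes the edge at a pendant vertex as a one-edge cut. It then uses $\delta(G)\ge 2$ to deduce that every vertex other than $x$ has degree two, so $G$ consists of triangles through $x$. Finally, it checks whether some triangle has two non-conflicting edges. Your enumeration of bonds over the components of $G-x$ is more uniform: it covers ${\sf diss}(G)=0$ and pendant configurations without separate cases. Your bond observation (once repaired) also makes explicit a step the paper leaves implicit. The paper only characterizes conflict-free cuts of size two, so its ``no'' answer needs exactly your argument that any conflict-free cut contains a conflict-free bond, and all bonds here have at most two edges.
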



Our next result first proves that {\mincfcut} is fixed-parameter tractable but admits no polynomial kernel unless {\nka} when the solution size ($k$) is considered as the parameter and the conflict graph is $\alpha$-degenerate for some fixed constant $\alpha$.

\begin{restatable}{theorem}{SolSizeFPTMinCFC}
\label{thm:degen-FPT-sol-size}
{\mincfcut} when parameterized by $k$  is fixed-parameter tractable when the conflict graph is $\alpha$-degenerate for some fixed constant $\alpha$.
\end{restatable}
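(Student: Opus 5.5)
The plan is to separate the two constraints. An independence covering family for the $\alpha$-degenerate conflict graph will take care of conflict-freeness, and a polynomial-time min-cut computation will take care of the cut constraint with size at most $k$.

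\textbf{Covering family.} Let $(G,\widehat{G},k)$ be an instance with $\widehat{G}$ $\alpha$-degenerate. I would invoke the independence covering families of Lokshtanov, Panolan, Saurabh, Sharma and Zehavi for degenerate graphs; this is an outside result, not one stated earlier in this paper. Given an $\alpha$-degenerate graph $H$ on $N$ vertices and an integer $k$, it produces a family $\mathcal{J}$ of independent sets of $H$ with the following property: for every independent set $I$ of $H$ with $|I|\le k$, some $J\in\mathcal{J}$ satisfies $I\subseteq J$. The family has size $k^{\mathcal{O}(\alpha k)}\log N$ and is computed in time $k^{\mathcal{O}(\alpha k)}N^{\mathcal{O}(1)}$. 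We apply this with $H=\widehat{G}$ and $N=|E(G)|$.

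\textbf{Reduction to min-cut.} For each $J\in\mathcal{J}$ we ask whether $G$ has an edge cut $F\subseteq J$ with $|F|\le k$. Since $J$ is independent in $\widehat{G}$, every subset of $J$ is independent as well, so the conflict constraint holds automatically.

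To decide this, contract every edge of $E(G)\setminus J$, since such edges may not be cut. Let $G_J$ be the resulting multigraph, with loops discarded. A partition $(A,B)$ of $V(G)$ has $E_G(A,B)\subseteq J$ exactly when no edge of $E(G)\setminus J$ crosses it. This holds exactly when $(A,B)$ is a union of classes of the contraction, that is, when it corresponds to a cut of $G_J$ with the same crossing edges.

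Therefore:
\begin{itemize}
\item if $G_J$ has a single vertex, $J$ yields no cut;
\item otherwise, compute a global minimum cut of $G_J$ with unit edge weights in polynomial time, for example by Stoer--Wagner or by $n-1$ max-flow computations, and accept if its value is at most $k$.
\end{itemize}
We answer yes if and only if some $J\in\mathcal{J}$ accepts.

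\textbf{Correctness and running time.}
\begin{itemize}
\item Soundness: any accepted cut $F$ satisfies $F\subseteq J$, so $F$ is conflict-free, has $|F|\le k$, and is an edge cut of $G$.
\item Completeness: if a solution $F$ exists, it is an independent set of $\widehat{G}$ of size at most $k$. By the covering property some $J\supseteq F$ lies in $\mathcal{J}$, and for that $J$ the min-cut value of $G_J$ is at most $|F|\le k$.
\end{itemize}
The total running time is $k^{\mathcal{O}(\alpha k)}\cdot |E(G)|^{\mathcal{O}(1)}$, which is FPT in $k$ for fixed $\alpha$.

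\textbf{Main obstacle.} The main obstacle is the covering-family step: correctly importing, or re-deriving, the deterministic construction for degenerate graphs. If a self-contained argument is preferred, I would first give a randomized version. Take a degeneracy ordering of $\widehat{G}$ and sample each vertex into a set $X$ independently with probability $1/(\alpha+1)$, say. Then let $J$ be the set of vertices of $X$ none of whose earlier neighbours in the ordering lie in $X$; this $J$ is independent.

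A fixed independent set $I$ of size at most $k$ survives, meaning $I\subseteq J$, if all of $I$ is sampled and none of the at most $\alpha k$ earlier neighbours of $I$ is sampled. Earlier neighbours of $I$ cannot lie in $I$, since $I$ is independent. This event has probability at least $(\alpha+1)^{-k}(1-\tfrac1{\alpha+1})^{\alpha k}\ge (e(\alpha+1))^{-k}$. The sampling would then be derandomized with standard splitters or universal sets.
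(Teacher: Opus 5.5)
Your proof is correct and follows the paper's own argument: the paper also builds a $k$-independence covering family for the $\alpha$-degenerate conflict graph via the deterministic lemma of Lokshtanov et al.\ (stated there as a proposition in the preliminaries), then runs Stoer--Wagner once for each member $S$ of the family. The only difference is cosmetic: the paper stops edges outside $S$ from being cut by giving them weight $k+1$, whereas you contract $E(G)\setminus J$, and the two are equivalent.
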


\begin{restatable}{theorem}{SolSizeNoKernelCFC}
\label{thm:degen-no-kernel-sol-size}
Unless \nka, {\mincut} admits no polynomial compression when parameterized by the sum of the solution size $k$ and the vertex integrity of the primary graph even when the primary graph is 3-degenerate and the conflict graph is 1-regular.
\end{restatable}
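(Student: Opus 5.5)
We prove Theorem~\ref{thm:degen-no-kernel-sol-size} by an OR-cross-composition (in the sense of Bodlaender, Jansen and Kratsch). The source is an NP-hard problem whose instances are pairs of a connected primary graph and a perfect-matching conflict graph, with a small conflict-free cut. The natural choice is the NP-hardness of {\mincfcut} with a 1-regular conflict graph that follows from Rauch et al.~\cite{RauchRS25}. Failing that, we reduce from {\sc Multicolored Clique} or {\sc Exact Cover}, encoding the choices by pairs of edges in conflict. Under a polynomial equivalence relation, instances with the same number of vertices, edges and the same budget $k$ are equivalent. So we may assume that we get $t$ instances $(G_1,\widehat{G}_1,k),\dots,(G_t,\widehat{G}_t,k)$ with $|V(G_i)|=n$ each. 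The goal is a single instance $(G,\widehat{G},k')$ with $k'$ and ${\sf vi}(G)$ bounded by $\mathrm{poly}(n+\log t)$, $G$ 3-degenerate, and $\widehat{G}$ 1-regular.

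The composition glues the $t$ instances along a small hub. Take a set $H$ of $\mathrm{poly}(n)$ hub vertices. Attach each $G_i$ to $H$ by many parallel-free gadget paths of length two, so that separating any $G_i$ from $H$ is expensive. Then redirect the ``sides'' so that a cut is cheap only if it is essentially a cut inside a single $G_i$. Concretely, for each instance we pick a root $r_i\in V(G_i)$ and join every vertex of $G_i$ except through this root to $H$ via degree-2 subdivision vertices. This keeps the graph 3-degenerate, since the subdivision vertices have degree 2 and each $G_i$ can be preprocessed to be 3-degenerate, or we subdivide its edges and pair the halves in $\widehat{G}$.

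The budget is set to $k'=k+c\cdot n$, where the term $c\cdot n$ accounts for cutting the connector edges from the vertices of the small side of $G_i$ to $H$. Vertex integrity is bounded as follows: deleting $H$ leaves components of size $O(n)$, namely each $G_i$ with its attached subdivision vertices. Hence ${\sf vi}\le |H|+O(n)$.

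Conflicts are made 1-regular by pairing every edge with a partner. Connector edges incident to the same subdivision vertex are paired with each other, so a cut can never take both edges of one connector path. Edges of $G_i$ keep their original matching conflict. Any leftover unpaired edges get fresh pendant-like gadgets, such as triangles attached to $H$, whose edges are paired so that they are never usefully cut.

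For correctness, in the forward direction a solution $F_i$ in $G_i$ with small side $A\subseteq V(G_i)$ extends by cutting one edge of each connector path at vertices of $A$. These choices are conflict-free because the edges of each connector pair lie on the same path and we take only one of them. In the backward direction, the argument is that any conflict-free cut of size at most $k'$ cannot separate $H$ internally and cannot split two different $G_i$. Its restriction to one $G_i$ is then a conflict-free cut of size at most $k$.

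The main obstacle is this backward direction. We must guarantee that the small side lies within one $G_i$ plus its subdivision vertices. We must also guarantee that the connector cost is exactly accounted for, so the residual budget is at most $k$. Otherwise the budget slack $c\cdot n$ could instead pay for cheaply isolating a single subdivision vertex of degree 2, whose two edges are in conflict, which is good. The danger is isolating a small set elsewhere. To rule this out, we first try making $H$ a large clique-like structure built of 3-degenerate gadgets with high edge connectivity, namely more than $k'$. We then normalize the count of connector edges per vertex by counting degrees, so that the connector cost is the same for every vertex.
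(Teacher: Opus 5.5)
Your high-level plan is the right one: an OR-cross-composition that glues the instances at a hub whose deletion bounds ${\sf vi}$. But the construction you chose fails at exactly the step you flag as the main obstacle, and your proposed remedy cannot repair it.

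Once every non-root vertex of $G_i$ has $c$ connector paths to the hub, a cut with small side $A\subseteq V(G_i)$ costs $|\delta_{G_i}(A)|+c|A|$. The connector cost therefore scales with $|A|$, and equalizing the number of connectors per vertex does nothing about that. With $k'=k+cn$, the backward direction only gives $|\delta_{G_i}(A)|\le k+c(n-|A|)$. Concretely, suppose all input instances are {\noinst}s, but some $G_i$ has a non-root vertex $v$ with $\delta_{G_i}(v)$ conflict-free (necessarily with more than $k$ edges). Then isolating $v$ and cutting one edge on each of its connector paths is a conflict-free cut of size $\deg(v)+c\le k'$ in the composed graph. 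Moreover, $k'$ is at least the cost of detaching an entire $G_i$ by cutting one edge on each of its connector paths, and that cut is conflict-free under your pairing. So unless each root is additionally tied to the hub by more than $k'$ edge-disjoint paths, every composed instance is a {\yesinst}; and if such a tie exists, the connectors serve no purpose.

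Two side remedies are also unsound. Subdividing the edges of $G_i$ and pairing the two halves destroys the original conflict between an edge and its partner, so equivalence is lost. A pendant triangle $h,a,b$ cannot be protected by a 1-regular conflict graph, since at most one of its three two-edge cuts $\{ha,ab\}$, $\{hb,ab\}$, $\{ha,hb\}$ can be a conflicting pair.

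The missing observation is that no connectors, gadgets, or budget offset are needed. The paper's construction is as follows.
\begin{enumerate}
\item \emph{Source.} It takes {\CFCut} on 3-degenerate primary graphs with 1-regular conflict graphs. You need hardness in exactly this class for the 3-degeneracy claim, and your subdivision fallback does not provide it.
\item \emph{Gluing.} For each $G_i$ it picks the first vertex $w_i$ of a 3-degeneracy ordering and identifies all $w_i$ into a single vertex $w$.
\item \emph{Conflicts.} The neighbourhoods are disjoint, so $E(H)$ is in bijection with $\bigcup_i E(G_i)$. Hence $\widehat{H}$ is the disjoint union of the $\widehat{G_i}$ and is 1-regular.
\item \emph{Degeneracy.} Ordering $w$ first and then each $G_i$ by its own ordering shows $H$ is 3-degenerate.
\item \emph{Parameter.} $w$ is a cut vertex whose removal leaves components on $n-1$ vertices, so ${\sf vi}(H)\le n$. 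The budget is $k^*=\max_i|E(G_i)|\le n^2$.
\end{enumerate}
Correctness is immediate. The $G_i$ are connected and share $w$, so any nontrivial cut of $H$ splits some $G_i$, and its restriction is a nonempty conflict-free cut of $G_i$. Conversely, a conflict-free cut of $G_j$ extends by placing every other $G_i$ on the side of $w$, which adds no edges. With your own source ({\mincfcut} with $k$ fixed by the equivalence relation), the same gluing with budget $k$ works verbatim.
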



Finally, we also consider the cases for {\mincfcut} when the solution size ($k$) is the parameter and the conflict graph is bipartite and also the case when the conflict graph is a unit interval graph.
In particular, we prove the following two results for {\mincfcut}.

\begin{restatable}{theorem}{UnitIntervalHard}
\label{thm:hardness-unit-interval}
    {\mincfcut} parameterized by the solution size $k$ is $W[1]$-hard when conflict graph is unit interval graph.
\end{restatable}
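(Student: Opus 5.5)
We plan a parameterized reduction from \UnitMIS{}, which we take to be W[1]-hard parameterized by the number $\ell$ of colors. In that problem we are given a vertex set partitioned into color classes $V_1,\dots,V_\ell$ and two unit interval graphs $H_1,H_2$ on it. We must pick one vertex per class so that the picked set is independent in both $H_1$ and $H_2$. The idea is to split the work between the two graphs of our instance. The primary graph $G$ forces every cheap cut to ``select'' one vertex per color, and forces each selection to appear twice, once on each of two ``tracks''. The conflict graph $\widehat{G}$ then only needs to encode $H_1$ on the first track and $H_2$ on the second. Each track is unit interval, and the disjoint union of two unit interval graphs is again unit interval.

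\textbf{Construction.} Take two blobs $X$ and $Y$, each a clique on $k+2$ vertices, with $k = 2\ell$; such blobs can never be split by a cut of size at most $k$. For each color $i$ we add two internally disjoint $X$--$Y$ chains $P_i$ and $Q_i$. Each chain is a sequence of blobs $C_{i,0},\dots,C_{i,n_i}$ (respectively $D_{i,0},\dots,D_{i,n_i}$), consecutive blobs being joined by a single edge. The edge of $P_i$ between positions $v-1$ and $v$ is labelled by the $v$-th vertex of $V_i$, and likewise on $Q_i$. We join $C_{i,v}$ and $D_{i,v}$ by $k+1$ disjoint ``rung'' edges, so no cut of size at most $k$ separates them; hence $P_i$ and $Q_i$ must be cut at the same position. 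In $\widehat{G}$ we handle the two tracks separately.
\begin{itemize}
\item First track: the label-edges of all $P_i$ receive the interval of their label in the unit interval model of $H_1$. Each edge of $P_i$ must also conflict with every other edge of $P_i$. To get both, we shift each color's intervals by a color-specific offset, or use an interleaving trick, so that each color class becomes a clique while the original $H_1$-adjacencies are kept.
\item Second track: the $Q_i$ edges are treated the same way using $H_2$, placed far to the right on the line.
\item All remaining edges (rungs, blob edges) are isolated vertices of $\widehat{G}$.
\end{itemize}

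\textbf{Correctness.} Suppose $F$ is a conflict-free cut with $|F|\le k$. Since every chain is a clique in $\widehat{G}$, $F$ contains at most one edge of each chain. A side of the cut containing neither $X$ nor $Y$ would be a union of chain blobs whose boundary uses two edges of some chain, which is impossible. So $F$ separates $X$ from $Y$. It therefore meets all $2\ell$ chains exactly once, at matching positions on $P_i$ and $Q_i$ because of the rungs. Reading off these positions gives one vertex per color, and conflict-freeness of $F$ is exactly independence in $H_1$ and $H_2$. Conversely, a multicolored independent set directly gives such a cut of size exactly $2\ell$. The parameter is $k=2\ell$, so this is a valid parameterized reduction.

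\textbf{Main obstacle.} The step I expect to be hardest is making $\widehat{G}$ simultaneously unit interval, a clique on each chain, and faithful to $H_1$ and $H_2$ across colors. Adding cliques per color class to a unit interval graph can destroy the unit interval property. I would first check whether the definition of \UnitMIS{} already makes each color class a clique in $H_1$ and $H_2$, i.e.\ intervals of one color are clustered; if so, no extra edges are needed. Otherwise I would drop the within-chain clique requirement. Instead I would prevent a chain from being cut twice using the primary graph: the budget $k$ is tight, and a component consisting only of interior blobs of one chain would leave the remaining budget unable to disconnect anything else. This would be argued via a more careful counting of which sides of the cut can exist.
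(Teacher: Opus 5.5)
Your primary-graph skeleton matches the paper's: one cuttable ``link'' per vertex of $V_i$, parallel link edges tied together by uncuttable connections, and separate edges carrying the $H_1$ and $H_2$ conflicts. The gap is the step you flag as the main obstacle, and neither of your proposed fixes closes it.

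First, you cannot in general make each color class a clique on the $P_i$-track while keeping the $H_1$-adjacencies and staying unit interval. Suppose $H_1$ contains an induced path $a\,b\,c\,d$ with $a,d\in V_1$ and $b,c$ in two other classes. Then adding $ad$ creates an induced $C_4$, so the result is not even an interval graph. Shifting intervals by a color-specific offset instead destroys the cross-color adjacencies you need.

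Second, the fallback counting argument is false. A cut isolating one interior merged blob $C_{i,v}\cup D_{i,v}$ uses only four edges, two on $P_i$ and two on $Q_i$. This fits the budget $k=2\ell$ once $\ell\ge 2$. It disconnects $G$ by itself, so no remaining budget needs to be spent elsewhere. It is conflict-free whenever the two consecutive vertices of $V_i$ are non-adjacent in both $H_1$ and $H_2$.

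Third, the definition of \UnitMIS{} you are reducing from does not say that each $V_i$ is a clique in $H_1$ or $H_2$. Relying on that would require re-proving hardness for a restricted version. As written, your correctness argument (``every chain is a clique in $\widehat{G}$'') rests on a property your construction does not deliver.

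The missing idea is to enforce ``at most one cut per color'' on a third, separate set of edges, not on the $H_1$/$H_2$ edges. In your language: add a third chain $R_i$ per color, tied to $P_i$ and $Q_i$ by rungs. Make all edges of $R_i$ pairwise adjacent in $\widehat{G}$ and adjacent to nothing else, and set $k=3\ell$. These edges form a cluster graph, which is unit interval. Then $\widehat{G}$ is a disjoint union of a copy of $H_1$, a copy of $H_2$, and a cluster graph, hence unit interval.

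This is exactly the paper's construction. Each rhombus has three edges $e_v^2,e_v^3,e_v^4$ from $v_1$ into the glued set $\{v_2,v_3,v_4\}$. The edges $e^2$ copy $H_1$, the edges $e^3$ copy $H_2$, and $\{e_v^4\mid v\in V_i\}$ is a clique for each $i$, with $k'=3k$. With that third track in place, the rest of your argument goes through as you wrote it: no side of the cut avoids both $X$ and $Y$, and each color is cut exactly once at matching positions.
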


\begin{restatable}{theorem}{BipartiteHardness}
\label{thm:bipartite-Whardness}
{\mincfcut} parameterized by the solution size $k$ is W[1]-hard when the conflict graph is bipartite.
\end{restatable}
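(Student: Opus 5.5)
We prove the statement by a parameterized reduction from \textsc{Multicolored Clique} parameterized by the number of colours $\ell$. The instance is a graph $H$ with colour classes $V_1,\dots,V_\ell$, and for $i<j$ we write $E_{ij}$ for the set of edges between $V_i$ and $V_j$. We will build $(G,\widehat{G},k)$ with $k=\ell+\binom{\ell}{2}$. The conflict graph will have edges only between two disjoint families of primary edges: \emph{vertex-selection edges} and \emph{edge-selection edges}. Since no conflict joins two edges of the same family, $\widehat{G}$ is bipartite by construction, and the bipartiteness requirement costs nothing.

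\textbf{Primary graph.} We take a source gadget $T$ and a sink gadget $R$, each a clique on $k+2$ vertices, so that neither can be split by at most $k$ edges. Between $T$ and $R$ we place $\ell+\binom{\ell}{2}$ internally disjoint chains.
\begin{itemize}
\item For colour $i$, the chain $P_i$ consists of $|V_i|+1$ cliques of size $k+2$ arranged in a row. Consecutive cliques are joined by a single edge $e_v$, one for each $v\in V_i$ in a fixed order. These $e_v$ are the vertex-selection edges. The end cliques are joined to $T$ and to $R$ by $k+1$ edges each.
\item For a pair $i<j$, the chain $P_{ij}$ is built in the same way, with one link edge $e_{uv}$ for each $uv\in E_{ij}$. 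These $e_{uv}$ are the edge-selection edges.
\end{itemize}

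\textbf{Conflicts.} For $uv\in E_{ij}$ with $u\in V_i$ and $v\in V_j$, we make $e_{uv}$ conflict with $e_w$ for every $w\in V_i\setminus\{u\}$ and every $w\in V_j\setminus\{v\}$.

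\textbf{Forward direction.} Let $C$ be a multicoloured clique. Cut $e_v$ for each $v\in C$ and $e_{uv}$ for each edge of $C$. This removes exactly one link from each chain, which separates $T$ together with the prefixes of all chains from $R$. The cut has size $k$. It is conflict-free: an edge-selection edge $e_{uv}$ only conflicts with $e_w$ for $w\ne u,v$, and every cut vertex-selection edge comes from a vertex of $C$ in the matching colour class.

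\textbf{Backward direction and main obstacle.} A cut of size at most $k$ never cuts inside a clique and never cuts the $(k+1)$-edge attachments, so it consists only of link edges. If the cut separates $T$ from $R$, it must contain at least one link of every chain. The budget $k$ then forces exactly one link per chain, and the conflicts force the selected edges to be consistent with the selected vertices, which yields a multicoloured clique.

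The step I expect to be the main obstacle is excluding cuts that do not separate $T$ from $R$. The typical example removes two links of a single chain and isolates a middle segment of cliques with only $2\le k$ edges. Since all links within a chain belong to the same side of the bipartition, I cannot add conflicts among them directly. I would handle this as follows.
\begin{itemize}
\item Double every chain into a twin chain ordered in reverse, and let the edge-selection links conflict with the \emph{complement} positions on the twin. Any two links of one vertex chain then share a common conflicting partner on the opposite side.
\item Alternatively, make every internal clique of a chain adjacent to both $T$ and $R$ through the opposite side of a bipartite "ladder". Isolating a segment would then cost more than $k$ edges.
\end{itemize}
Either way, every cut of size at most $k$ becomes a $T$–$R$ cut. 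The remaining steps are then routine counting.
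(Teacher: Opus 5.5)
Your selection layer is sound. Once each chain is cut at exactly one link, the conflicts between $e_{uv}$ and $e_w$ (for $w\in V_i\setminus\{u\}$ or $w\in V_j\setminus\{v\}$) force a multicoloured clique. Because conflicts only join vertex-selection edges to edge-selection edges, $\widehat{G}$ is bipartite.

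The gap is the step you flag, and as the construction stands it breaks the reduction. Whenever some $|V_i|\ge 2$, cutting two links $e_v,e_w$ of $P_i$ isolates the cliques between them at cost $2\le k$. The edges $e_v,e_w$ are non-adjacent in $\widehat{G}$, so every instance maps to a yes-instance. Neither fix repairs this:
\begin{itemize}
\item In the first fix, two links ``sharing a common conflicting partner'' imposes nothing. Independence in $\widehat{G}$ forbids only adjacent pairs, and the isolating cut simply leaves the partner out.
\item The second fix defeats itself for a budget bounded in $\ell$. If every internal clique of $P_i$ is attached to both $T$ and $R$, then in every $T$--$R$ cut each of the $|V_i|-1$ internal cliques loses its attachment to the far side. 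Conflicts never lower the cost, so once isolating a segment costs more than $k$, so does your intended solution.
\end{itemize}

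The missing idea is a link made of two edges on opposite sides of the bipartition, both of which must be cut, together with conflicts between halves of \emph{different} links of the same chain. This is what the paper's rhombi do. Each $v$ contributes $e^2_v=v_1v_2$ and $e^3_v=v_1v_3$, and $v_2,v_3,v_4$ are glued by uncuttable gadgets, so both edges must be cut. The paper then adds $\{e^2_u,e^3_w\}$ for $u\neq w$ in the same class, so isolating a segment would need $e^2_u,e^3_u,e^2_w,e^3_w$, which contains a conflict. Bipartiteness comes from the split $(\widehat{A_2}\cup\widehat{B_3},\,\widehat{A_3}\cup\widehat{B_2})$. Because the paper reduces from \textsc{Multicolored Biclique}, the selection conflicts are just the bipartite complement of $H$ placed on the $e^2$- and $e^3$-edges, and no edge-selection chains are needed.

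Your \textsc{Multicolored Clique} route can be repaired the same way:
\begin{itemize}
\item Join consecutive cliques by two disjoint edges $e'_x,e''_x$; since cliques cannot be split, both must be cut.
\item Add the conflicts $e'_xe''_y$ for distinct positions $x\neq y$ of the same chain.
\item Put $e'$ of vertex chains and $e''$ of edge chains on one side of the bipartition, and the rest on the other.
\item Keep your selection conflicts between $e'_{uv}$ and $e'_w$, and set $k=2(\ell+\binom{\ell}{2})$.
\end{itemize}
Any cut leaving $T$ and $R$ on one side must cross some chain at two positions $x\neq y$, and so it contains the conflicting pair $e'_x,e''_y$. After that, your counting goes through.
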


Finally, we also prove the following kernelization lower bound {\mincut}.

\begin{restatable}{theorem}
{MinCutVC}
\label{thm:min-cut-vc}
Unless {\nka}, {\mincfcut} parameterized by the vertex cover number admits no polynomial compression. 	
\end{restatable}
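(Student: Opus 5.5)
The plan is to show that \mincfcut\ parameterized by ${\sf vc}$ AND-cross-composes from an NP-hard problem. The natural choice is \cfcut\ itself restricted to primary graphs of small vertex cover, or \textsc{Matching Cut}. The key structural fact is that the OR of several \cfcut\ instances is easy to encode: a cut of a disjoint-union-like combination only has to separate one ``piece''. So we compose an OR, taking $t$ instances $(G_1,\widehat G_1),\dots,(G_t,\widehat G_t)$ of \cfcut. By the polynomial equivalence relation we may assume all have the same number $n$ of vertices and $m$ of edges. The source problem must be NP-hard, so we take \cfcut\ on general graphs: by Rauch et al.\ it is NP-complete, and \textsc{Matching Cut} is a special case.

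The construction is as follows. Create a set $C$ of $n$ ``hub'' vertices $c_1,\dots,c_n$. For each instance $i$, identify the vertices of $G_i$ with $c_1,\dots,c_n$ via an arbitrary bijection. However, an edge $uv$ of $G_i$ is not placed directly between hubs. Instead it becomes a path $c_u - x^i_{uv} - c_v$ through a private degree-two vertex $x^i_{uv}$. Then $C$ is a vertex cover of size $n$, which is polynomial in $\max_i|G_i|$ and independent of $t$.

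The main obstacle is that edges from different instances share hubs, so cutting a single instance does not disconnect the whole graph. To fix this we need a dedicated gadget per instance. I would attach, for each instance $i$, fresh copies: make the hubs instance-specific by also including a ``selector''. A cleaner option is to keep hubs shared and add heavy parallel connections between all hubs. Concretely, add for every pair $c_a,c_b$ many ($>k$) private paths $c_a - y - c_b$ whose edges are pairwise conflicting within each pair of paths, in such a way that at most one path per pair can be cut. Setting $k$ appropriately then forces a solution to isolate just a single subdivision vertex. Because the isolation of one vertex $x$ costs only 2 edges, the budget must forbid this. So instead I would make every $x^i_{uv}$ have its two edges conflicting ($c_ux$ conflicts with $xc_v$), which forbids isolating a subdivision vertex. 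Any conflict-free cut then partitions $C$ nontrivially.

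With the conflict graph $\widehat G$ we couple instances as follows. Make every edge of instance $i$ conflict with every edge of instance $j\neq i$. A conflict-free cut then uses edges of at most one instance, and all other instances' paths stay intact, so for the cut to separate $C$, each $G_j$ with $j\ne i$ would have to be disconnected. We therefore assume each $G_j$ is connected, or add one always-intact connected instance-free spanning structure on $C$ whose edges conflict with everything. The fundamental tension here is between ``other instances keep $C$ connected'' and ``the chosen instance's cut separates $C$''. I would resolve it by giving instance $i$ its own hub copy $C_i$ joined to a global vertex $r$ through conflicting edge bundles. This keeps the vertex cover at $O(n\cdot\log t)$ via a binary-encoding selector of $\log t$ extra cover vertices, in the standard style of ${\sf vc}$ compositions.

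Correctness then maps a conflict-free cut of $G_i$ to a cut of the composed graph by choosing, on each path, whichever of its two edges is compatible, with $k=m$. Conversely, a conflict-free cut restricts to a single instance and induces a conflict-free cut of $G_i$. The vertex cover bound is $\mathrm{poly}(n)+O(\log t)$, which yields the claim via the standard cross-composition theorem. Verifying the selector gadget is the step I expect to be hardest.
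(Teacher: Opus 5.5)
There is a genuine gap: you never give a composition that works, and the construction you finally settle on violates the parameter bound.

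\textbf{The shared-hub version.} It fails for the reason you identify yourself. Each $G_j$ is connected and spans $C$. Once the two edges at every $x^i_{uv}$ conflict, every conflict-free cut must split $C$, so it cuts an edge of some subdivided path of \emph{every} instance. If all cross-instance pairs conflict, the output is therefore always a no-instance for $t\ge 2$. If they do not conflict, you get an AND over one common partition of $C$ (under arbitrary bijections), not an OR.

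\textbf{The private-hub repair.} This cannot give ${\sf vc}=\cOO(n\log t)$. The subdivision of the connected graph $G_i$ on $C_i\cup\{x^i_{uv}\}$ contains a matching of size $n-1$: match each non-root vertex of a spanning tree with the subdivision vertex of its parent edge. These subgraphs are vertex-disjoint, so ${\sf vc}\ge t(n-1)$ regardless of any selector on $\cOO(\log t)$ extra vertices. More generally, if ${\sf vc}$ is to be independent of $t$, all but few vertices are pairwise non-adjacent with neighbours only in a small shared cover. That is exactly the shared-hub situation, so isolating one instance would have to come from $\widehat{G}$ (and presumably the budget), and you give no mechanism for this.

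Your backward direction also never uses $|F|\le k$. If the scheme worked, it would equally be a cross-composition into \cfcut\ parameterized by ${\sf vc}$, whose kernelization status the paper explicitly leaves open. The deferred selector gadget is therefore not a verification detail; it is the whole difficulty.

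\textbf{What the paper does.} It avoids composition and lets the budget carry the hardness. It gives a polynomial parameter transformation from \textsc{Independent Set} parameterized by distance to clique, which has no polynomial compression by Bodlaender, Jansen and Kratsch.
\begin{itemize}
\item Each $v\in V(H)$ becomes a non-cover vertex adjacent to $s,t_1,t_2$.
\item $H$ itself lives only in the conflict graph: $su$ conflicts with $sv$ for $uv\in E(H)$, and $sv$ conflicts with $vt_1$.
\item Triangles with pairwise-conflicting edges keep $\{s,a_s,b_s\}$ and $\{t,t_1,t_2\}$ each on one side. If both were on the same side, some $v$ on the other side would put the conflicting pair $sv$, $vt_1$ into the cut, so $s$ and $t$ are separated.
\item Putting $v$ on the $t$-side costs one cut edge and on the $s$-side costs two. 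The $t$-side vertices must be independent in $H$.
\end{itemize}
Hence a conflict-free cut with at most $2n-\ell$ edges exists if and only if $H$ has an independent set of size at least $\ell$, while ${\sf vc}(G)\le 2|X|+4$. The triangles on $X$ are never used in the correctness argument, so the same construction even gives NP-hardness with ${\sf vc}\le 4$.
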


\begin{table}[t]
  \centering
  \begin{tabular}{l|cccc}
    \toprule
    \multirow{2}{*}{\textbf{(primary, conflict)}} & \multicolumn{2}{c}{\textbf{\cfcut}} & \multicolumn{2}{c}{\textbf{\mincfcut}} \\
    \cmidrule(lr){2-3} \cmidrule(lr){4-5}
      & Parameter & Result & Parameter & Result \\
    \midrule
    ({\gen}, {\gen}) & ${\sf diss}$ & para-NP-hard  & ${\sf diss}$ & para-NP-hard \\
    ({\gen}, {\gen}) & ${\sf vc}$ & FPT \cite{RauchRS25} & ${\sf vc}$ & No poly kernel \\
    ({\gen}, {\sf bipartite}) & -- & -- & ${k}$ & W[1]-hard \\
    ({\gen}, {\sf $\alpha$-degen}) & -- & -- & $k$ & FPT for fixed $\alpha$ \\
    ({\sf 3-degen}, {\sf 1-reg}) & -- & -- & $k + {\sf vi}$ & No poly kernel \\
    ({\gen}, {\sf prop-int}) & -- & -- & $k$ & W[1]-hard\\
    \bottomrule
  \end{tabular}
  \caption{Summary of Our Results.}
  \label{table:results-summary}
\end{table}


\subparagraph*{Related Work.} 
Recently, Kalyanasundaram and Kumar \cite{KalyanasundaramK26} have provided some structural and classical complexity results on {\CFCut}, subsequent to the first set of algorithmic results on the same problem by Rauch et al. \cite{RauchRS25}.
Banik et al. \cite{BanikFG0J024} have considered different variants of cut problems while being restricted to matroid constraints only.
Some of the other combinatorial problems have been extensively studied with respect to conflicts, e.g. {\sc Maximum Matching, Shortest Path} \cite{AgrawalJKS20,DarmannPSW11},  {\sc Minimum Spanning Tree} \cite{DarmannPSW11,ZhangKP2011}, {\sc Feedback Vertex Set} \cite{AgrawalJKL018}, {\sc Vertex Cover} \cite{JainKM20}, {\sc Set Cover} \cite{JacobMR21}, {\sc Knapsack} \cite{BettinelliCM17}, {\sc Bin Packing} \cite{CapuaFLV18,GendreauLS2004}, {\sc Scheduling} and its related variants \cite{BodlaenderHvL2025,BodlaenderJW94,HongBertrand2018, MallekBoudhar2024,TellacheB2018,ZhangZCCW2023}.
Some of previous papers have explored other variants of separation problems, e.g. {\sc Matching Cut} \cite{FeghaliLPR25,KomusiewiczKL20,LuckePR23,LuckePR24}, {\sc $d$-Cut} \cite{GomesS21}, {\sc Max-Cut} \cite{FariaKSSS20}, {\sc Multiway Cut} \cite{ChenLL2009,ChitnisCHPP16,CyganPW2013}, {\sc Multicut} \cite{MarxRazgon2014} but their conflict-free variant has remained unexplored.

\section{Preliminaries and Notations}
\label{sec:prelims}

\subparagraph*{Sets, Numbers, and Graph Theory.} 

Throughout the paper, we use undirected graphs.
Given a graph $G = (V, E)$, a set $S \subseteq V$ is {\em dissociation set} if maximum degree of $G - S$ is at most 1.
The {\em dissociation number} of a graph is the number of vertices in a dissociation set with the smallest cardinality.
Let $(A,B)$ denote a cut of $G$.
Two vertices $u$ and $v$ are \emph{\intra} vertices with respect to $(A, B)$, if either $\{u,v\} \subseteq A$ or $\{u,v\} \subseteq B$.
Conversely, $u$ and $v$ are called \emph{\inter} vertices with respect to $(A, B)$, if $u$ and $v$ are separated by $(A, B)$; equivalently $\{u, v\} \cap A, \{u, v\} \cap B \neq \emptyset$.
The same nomenclature extends naturally to subsets of vertices.
Given a graph $G=(V, E)$ and a set $X\subseteq V(G)$, we define the {\em identification of the vertex subset $X$ into a new vertex $u_{X}$} by constructing $G'$ as follows: first, delete the vertices in $X$ from $G$ and add a new vertex $u_{X}$.
Then, for every $v\in N_{G}(X)$, make $vu_{X}$ an edge of $G'$.
Similar notion of graph operation for $i=1$ has been defined in the literature \cite{comas_vertex_2009,diestel2025graph,majumdar_approximate_2020,morelle_vertex_2026,golovach2026identification}.
Additionally, given an integer $p$, a collection of edge-disjoint paths between two vertices $x$ and $y$ is called a {\em $p$-uncuttable gadget} between $x$ and $y$, if there are exactly $p+1$ edge-disjoint paths between $x$ and $y$ each having exactly 2 edges. 

\begin{lemma}
\label{lemma:triangle-double-edge-monochromatic}
Let $G$ be a (multi)graph and $\widehat{G}$ be a conflict graph such that $V(\widehat{G}) = E(G)$.
If there are two edges between $x, y \in V(G)$ and those two edges are adjacent in $\widehat{G}$, then $x$ and $y$ are \intra\ with respect to every conflict free cut of $G$.
Similarly, if there is a triangle $T = \{x, y, z\}$ in $G$ such that $xy, yz, xz$ are pairwise adjacent in $\widehat{G}$, then $T$ is \intra\ with respect to every conflict-free cut of $G$.
\end{lemma}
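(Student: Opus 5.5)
Both parts follow directly from the definitions. A conflict-free cut $(A,B)$ has edge cut $E_G(A,B)$, and this edge cut must be independent in $\widehat{G}$. So it suffices to show that in each configuration, separating the vertices would force two $\widehat{G}$-adjacent edges into $E_G(A,B)$.

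For the first statement, let $e_1,e_2$ be the two parallel edges between $x$ and $y$, with $e_1e_2 \in E(\widehat{G})$. Suppose for contradiction that $x$ and $y$ are \inter\ with respect to some conflict-free cut $(A,B)$, say $x\in A$ and $y\in B$. Both $e_1$ and $e_2$ have one endpoint in $A$ and the other in $B$, so $\{e_1,e_2\}\subseteq E_G(A,B)$. This contradicts the independence of $E_G(A,B)$ in $\widehat{G}$. Hence $\{x,y\}\subseteq A$ or $\{x,y\}\subseteq B$.

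For the triangle, suppose $T=\{x,y,z\}$ is not \intra, i.e.\ $T$ meets both $A$ and $B$. Since $|T|=3$, one side contains exactly one vertex of $T$ and the other side contains the remaining two. By symmetry, say $x$ is alone on its side and $y,z$ are on the other. Then both $xy$ and $xz$ cross the cut, so $\{xy,xz\}\subseteq E_G(A,B)$. These two edges are adjacent in $\widehat{G}$ by hypothesis, again contradicting conflict-freeness.

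The only point needing care is the pigeonhole step: a non-\intra\ triangle always has a vertex separated from the other two, which yields two crossing edges that share that vertex. Beyond this there is no real obstacle; the multigraph setting only matters in the first part, where parallel edges are distinct vertices of $\widehat{G}$.
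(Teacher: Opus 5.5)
Your proof is correct and follows essentially the same argument as the paper: separating the two parallel edges' endpoints, or isolating one vertex of the triangle from the other two, forces two $\widehat{G}$-adjacent edges into the edge cut and contradicts its independence in $\widehat{G}$. The paper makes the same case split, placing $x,y \in A$ and $z \in B$ so that $xz, yz$ both lie in the edge cut.
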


\begin{proof}
    First, let $F$ be any conflict free edge cut and $(A,B)$ be the partition corresponding to $F$.
    We first give the proof of the first item. Let $e_1, e_2$ be the two edges between $x$ and $y$ such that $\{e_1,e_2\} \in E(\widehat{G})$.  
    Suppose for the sake of contradiction that $x$ and $y$ are inter-partite. Then, without loss of generality, we can assume that $x \in A$ and $y \in B$. Consequently, $e_1, e_2 \in F$ which contradicts the face that $F$ is independent in $\widehat{G}$. Hence, $x$ and $y$ are \intra\ with respect to every conflict free cut of $G$.

    Now for the second item, let $T = \{x, y, z\}$ induces a triangle in $G$ such that $xy, yz, xz$ are pairwise adjacent in $\widehat{G}$. Suppose for the sake of contradiction that $T$ is \inter. Then, without loss of generality, we can assume that $x, y \in A$ and $z \in B$. Consequently, $xz, yz \in F$ which contradicts the fact that $F$ is independent in $\widehat{G}$ as $\{xz, yz \} \in E(\widehat{G})$. Hence, $T$ is \intra\ with respect to every conflict-free cut of $G$.
\end{proof}

\begin{definition}[$k$-Independence Covering Family, \cite{LokshtanovPSSZ20}]
\label{defn:k-ind-family}
    For a graph $G$ and $k \in \mathbb{N}$, a family of independent sets of $G$ is called a {\em $k$-independence covering family} for $(G, k)$, denoted by $\mathcal{F}(G, k)$, if for any independent set $X$ in $G$ of size at most $k$, there exists $Y \in \mathcal{F}(G, k)$ such that $X \subseteq Y$.
\end{definition}

\begin{proposition}[Deterministic Independence Covering Lemma, \cite{LokshtanovPSSZ20}]
\label{lemma:k-ind-covering}
    There is an algorithm that given an $\alpha$-degenerate graph $G$ and $k\in \mathbb{N}$, runs in time $\cOO(\binom{k^2 {(\alpha+1)}^2}{k} \cdot {(k(\alpha+1))}^{\cOO(1)} \cdot (n+m) \log n)$ and outputs a $k$-independence covering family for $(G, k)$ of size at most $\binom{k^2{(\alpha+1)}^2}{k} \cdot {(k(\alpha+1))}^{\mathcal{O}(1)} \cdot \log n$.
\end{proposition}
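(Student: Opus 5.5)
The plan is to prove the covering lemma in two stages. First, I would give a simple construction that turns any vertex set $R$ into an independent set. Second, I would derandomize the choice of $R$ with a hash family, which produces the stated bound $\binom{k^2(\alpha+1)^2}{k}$.

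\textbf{Step 1 (orientation and filtering).} Since $G$ is $\alpha$-degenerate, in $\cOO(n+m)$ time I compute an ordering of $V(G)$ in which every vertex has at most $\alpha$ neighbours later in the order. I orient each edge from the earlier endpoint to the later one. Then every vertex $v$ has out-neighbourhood $N^+(v)$ of size at most $\alpha$. For any $R \subseteq V(G)$, define
\[
Y_R = \{ v \in R : N^+(v) \cap R = \emptyset \}.
\]
$Y_R$ is always independent. Indeed, if $uv$ is an edge oriented $u \to v$ with $u,v \in R$, then $u$ has an out-neighbour in $R$, so $u \notin Y_R$.

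Now let $X$ be an independent set with $|X| \le k$, and put $N^+(X)=\bigcup_{x\in X}N^+(x)$. Since $X$ is independent, $N^+(X)$ is disjoint from $X$, and $|X \cup N^+(X)| \le k(\alpha+1)$. Moreover, $X \subseteq Y_R$ holds whenever $X \subseteq R$ and $N^+(X) \cap R = \emptyset$. So it suffices to build a small family $\mathcal{R}$ of vertex sets with the following property: for every such $X$, some $R \in \mathcal{R}$ contains $X$ and avoids $N^+(X)$. The output family is then $\{Y_R : R \in \mathcal{R}\}$.

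\textbf{Step 2 (derandomization).} Set $q = k^2(\alpha+1)^2$, which is at least the square of the size of $X \cup N^+(X)$. I would use a known explicit family $\mathcal{H}$ of functions $h : V(G) \to [q]$ with the following property: for every set $W$ with $|W| \le k(\alpha+1)$, some $h \in \mathcal{H}$ is injective on $W$. Suitable constructions (for example, derandomized two-level hashing in the style of Fredman--Koml\'os--Szemer\'edi or Naor--Schulman--Srinivasan) give $|\mathcal{H}| = (k(\alpha+1))^{\cOO(1)}\log n$ within the stated time.

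For every $h \in \mathcal{H}$ and every $T \in \binom{[q]}{k}$, I put $R = h^{-1}(T)$ into $\mathcal{R}$. (If $|X|<k$, pad $h(X)$ with arbitrary unused values; this is harmless, as explained next.) Take $W = X \cup N^+(X)$, a $T$ that contains $h(X)$ and avoids $h(N^+(X))$, and an $h$ injective on $W$. Injectivity gives $X \subseteq R$ and $N^+(X)\cap R=\emptyset$. Padding is harmless because injectivity also guarantees that the values outside $h(X)\cup h(N^+(X))$ number at least $q-k(\alpha+1) \ge k$, so a valid $T$ exists.

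This gives family size $\binom{q}{k}\cdot(k(\alpha+1))^{\cOO(1)}\log n$. Each $Y_R$ is computed in $\cOO(n+m)$ time, which matches the claimed running time.

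\textbf{Main obstacle.} The delicate part is Step 2: obtaining a hash family that is injective on all small sets, with size polynomial in $k(\alpha+1)$ times $\log n$, and constructible in near-linear time. I would not reprove this; I would invoke the standard perfect-hashing constructions as a black box. Everything else, namely the orientation, the filtering rule and the correctness argument, is elementary.
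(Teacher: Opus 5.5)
Your argument is correct and is essentially the proof of Lokshtanov et al.\ that the paper cites without reproducing: you orient along a degeneracy ordering, keep the vertices of $R$ with no out-neighbour in $R$, and derandomize the choice of $R$ by composing an $(n,k(\alpha+1),k^2(\alpha+1)^2)$-splitter (perfect hash family) with all $k$-subsets of its range, which is exactly where the $\binom{k^2(\alpha+1)^2}{k}$ factor comes from. The only blemish is cosmetic: the padding count should read $q-|N^+(X)|\ge q-k\alpha\ge k$, because your bound $q-k(\alpha+1)\ge k$ fails for $k=1,\alpha=0$, although the conclusion still holds in that case.
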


\subparagraph*{Parameterized Complexity and Kernelization.} A parameterized problem $L$ is a set of instances
$(x, k) \in \Sigma^{*}\times \mathbb{N}$ where $\Sigma$ is a finite alphabet and $k \in \mathbb{N}$ is a parameter.
A parameterized problem $L \subseteq \Sigma^* \times \mathbb{N}$ is fixed-parameter tractable (or FPT) if given $(x, k) \in \Sigma^{*}\times \mathbb{N}$, there is an algorithm $\mathcal{A}$ that correctly decides if $(x, k) \in L$ in $f(k)|x|^{\mathcal{O}(1)}$-time for some computable function $f : \mathbb{N}\rightarrow \mathbb{N}$. This algorithm $\mathcal{A}$ is called fixed-parameter algorithm (or FPT algorithm) for the problem $L$.
Observe that we allow combinatorial explosion in the parameter $k$ while the algorithm runs in polynomial-time with respect to $|x|$.
We say that two instances $(x, k)$ of $L$ and $(x', k')$ of $L$ are equivalent when $(x, k) \in L$ if and only if $(x', k') \in L$.
The notion of kernelization (preprocessing) for in the area of parameterized complexity is defined as follows.
A parameterized problem $L \subseteq \Sigma^{*} \times \mathbb{N}$ is said to admit a kernelization if there exists an algorithm that, given an instance $(x, k)$ of $L$, outputs in time polynomial in $|x| + k$ an equivalent instance $(x', k')$ of $L$ such that $|x'| + k' \leq g(k)$ for some computable function $g : \mathbb{N} \rightarrow \mathbb{N}$. The instance $(x', k')$ is called a kernel for the instance $(x, k)$. If $g(k)$ is $k^{\mathcal{O}(1)}$, then we say that $L$ admits a polynomial kernel.
%
A kernelization algorithm usually consists of a collection of reduction rules that have to be applied exhaustively. A reduction rule is called safe if given an instance $(x, k)$ of $L$, one application of the reduction rule outputs an equivalent instance $(x', k')$ of $L$. 
It is well-known due to \cite{downey2013fundamentals,cygan2015parameterized,fomin2019subquadratic} that a decidable parameterized problem is FPT if and only if it admits a kernelization.

\begin{definition}[Parameterized Reduction, \cite{cygan2015parameterized}]
\label{defn:parameterized-reduction}
Let $A, B \subseteq \Sigma^{*}\times \mathbb{N}$ be two parameterized problems. A parameterized reduction from $A$ to $B$ is an algorithm that, given an instance $(x, k)$ of $A$, outputs an instance $(x', k')$ of $B$ such that 
\begin{itemize}
    \item $(x, k)$ is a {\yesinst} of $A$ if and only if $(x', k')$ is a {\yesinst} of $B$,
    \item $k' \leq g(k)$ for some computable function $g : \mathbb{N} \rightarrow \mathbb{N}$, and
    \item the running time is $f(k) \cdot |x|^{\cOO(1)}$ for some computable function $f : \mathbb{N} \rightarrow \mathbb{N}$.
\end{itemize}
\end{definition}

\subparagraph*{W-hierarchy.} Within parameterized problems, the W-hierarchy may be seen as the parameterized equivalent to the class NP of classical decision problems. Without entering into details (see \cite{cygan2015parameterized,downey2013fundamentals} for the formal definitions), a parameterized problem being W-hard can be seen as a strong evidence that this problem is not FPT.  Parameterized complexity also has a completeness theory: many parameterized problems that appear not
to be fixed-parameter tractable have been classified as being complete under
parameterized reductions for the complexity classes in the so-called
weft hierarchy $\text{FPT} \subseteq \text{W[1]} \subseteq \text{W[2]} \subseteq \ldots \subseteq \text{XP}$,
where XP is the class of problems solvable in $\cOO(|x|^{f(k)})$ time for some computable function $f$.


\begin{definition}[Polynomial Equivalence Relation, \cite{bodlaender_cross-composition_2010}]
\label{defn:poly-eqv-relation}
    An equivalence relation $\mathcal{R}$ on $\Sigma^{*}$ is called a {\em polynomial equivalence relation} if the following two conditions hold:  
    \begin{itemize}
        \item There is an algorithm that given two strings $x, y \in \Sigma^{*}$ decides whether $x$ and $y$ belong to the same equivalence class in $(\lvert x\rvert + \lvert y\rvert)^{\cOO(1)}$-time. 
        \item For any finite set $S\subseteq \Sigma^{*}$ the equivalence relation $\mathcal{R}$ partitions the elements of $S$ into at  most $(\max\limits_{x\in S} \lvert x\rvert)^{\cOO(1)}$ equivalence classes.
    \end{itemize}
\end{definition}

\begin{definition}[OR-Cross-Composition, \cite{bodlaender_cross-composition_2010}]
\label{defn:OR-cross-composition}
    Let $L\subseteq \Sigma^{*}$ be a language and let $Q\subseteq \Sigma^{*}\times \mathbb{N}$ be a parameterized problem.
    We say that $L$ {\em cross-composes} into $Q$ if there is a polynomial equivalence relation $\mathcal{R}$ and an algorithm $\mathcal{A}$ which, given $2^t$ strings $x_1, x_2, \dots, x_{2^t} \in \Sigma^*$ that are equivalent with respect to $\mathcal{R}$, computes an instance $(x^*, k^*)\in \Sigma^{*}\times \mathbb{N}$ in time polynomial in $\sum\limits_{i=1}^{2^t} \lvert x_i\rvert$  such that:  
    \begin{itemize}
        \item $(x^*, k^*) \in Q$ if and only if $x_i \in L$ for some $1 \le i \le 2^t$, 
        \item $k^*$ is bounded by a polynomial in $\max\limits_{i\in [2^t]}\lvert x_i\rvert + {t}$.
    \end{itemize}
\end{definition}

\begin{proposition}[Corollary 10, \cite{bodlaender_cross-composition_2010}]
\label{prop:cross-compo}
    If some set $L\subseteq \Sigma^*$ is \nphard\ under Karp reduction and $L$ cross-composes into the parameterized problem $Q$ then there is no polynomial compression for $Q$ unless $\mathsf{NP}\subseteq \mathsf{coNP/poly}$.
\end{proposition}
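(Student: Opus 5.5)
This is the standard ``cross-composition plus compression yields OR-distillation'' argument of Bodlaender, Jansen and Kratsch. I would reduce to the theorem of Fortnow and Santhanam, in the strengthened form used by Bodlaender et al.:
\begin{quote}
if an \nphard\ language $L$ admits an OR-distillation into some language $R'$, then $\mathsf{NP}\subseteq\mathsf{coNP/poly}$.
\end{quote}
Here an OR-distillation is a polynomial-time algorithm that maps $t$ instances of $L$, each of size at most $n$, to a single instance of $R'$ of size $\mathrm{poly}(n)$ that is a yes-instance iff some input is. I expect invoking this theorem to be the main obstacle. It is a deep external result, based on an advice-counting argument that turns the distillation into a $\mathsf{coNP/poly}$ algorithm for $\overline{L}$, and I would cite it rather than reprove it. Everything else below is bookkeeping.

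So assume $Q$ has a polynomial compression into some language $R$, with output size at most $p(k)$. I would build an OR-distillation for $L$ in four steps.
\begin{enumerate}
\item Fix inputs $x_1,\dots,x_t$, each of length at most $n$. Since there are at most $2^{n+1}$ distinct strings of length at most $n$, I delete duplicates, which changes nothing about the OR. After this we may assume $t\le 2^{n+1}$, and hence $\log t\le n+1$.
\item Using the polynomial equivalence relation $\mathcal{R}$, I partition the inputs in polynomial time into at most $n^{\cOO(1)}$ classes. In each class I duplicate one member until the class size is a power of two, $2^{t_j}$; this at most doubles the number of inputs.
\item On each class I run the cross-composition algorithm $\mathcal{A}$. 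This yields $(x_j^*,k_j^*)$ in time polynomial in the total input size, with $k_j^*\le \mathrm{poly}(n+t_j)=\mathrm{poly}(n)$ because $t_j\le n+2$.
\item I compress each $(x_j^*,k_j^*)$ to an $R$-instance $y_j$ of size at most $p(k_j^*)=\mathrm{poly}(n)$.
\end{enumerate}

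The output is the tuple $(y_1,\dots,y_s)$, viewed as an instance of the language $\mathrm{OR}(R)$, i.e., the set of tuples containing at least one member of $R$. Its total size is $n^{\cOO(1)}\cdot\mathrm{poly}(n)=\mathrm{poly}(n)$. For correctness, it is a yes-instance iff some $y_j\in R$, iff some $(x_j^*,k_j^*)\in Q$, iff some $x_i\in L$; each step holds by the guarantee of the preceding step. Hence this is an OR-distillation of $L$ into $\mathrm{OR}(R)$, and the Fortnow--Santhanam theorem gives $\mathsf{NP}\subseteq\mathsf{coNP/poly}$.

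Two details need checking. First, instances that are malformed or not in any class can be handled by a trivial separate class. Second, $\mathcal{A}$ requires equivalent inputs, which is exactly why I partition first.
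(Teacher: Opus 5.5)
Your proposal is correct and follows the standard Bodlaender--Jansen--Kratsch proof; the paper gives no proof of its own and simply cites that source. The steps match: deduplicate, partition by the polynomial equivalence relation, pad each class to a power of two, cross-compose, compress, and output the tuple as an OR-distillation into $\mathrm{OR}(R)$ before invoking Fortnow--Santhanam. One small fix: for a general alphabet the deduplication bound is $|\Sigma|^{n+1}$ rather than $2^{n+1}$, which still gives $\log t=\cOO(n)$ and changes nothing else.
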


\begin{definition}[Polynomial Parameter Transformation, \cite{cygan2015parameterized}]
    \label{defn:poly-para-transformation}
    Let $P, Q\subseteq \Sigma^*\times \mathbb{N}$ be two parameterized problems.
    An algorithm $\mathcal{A}$ is called a {\em polynomial parameter transformation} (PPT) from $P$ to $Q$ if, given an instance $(x, k)$ of problem $P$, $\mathcal{A}$ works in polynomial time and outputs an equivalent instance $(\widehat{x}, \widehat{k})$ of problem $Q$, i.e. $(x, k)\in P$ if and only if $(\widehat{x}, \widehat{k})\in Q$ such that $\widehat{k}\le p(k)$ for some polynomial $p(\cdot)$. 
\end{definition}

\begin{proposition}[\cite{cygan2015parameterized}]
    \label{prop:no-poly-compression}
    Let $P,Q \subseteq \Sigma^*\times \mathbb{N}$ be two parameterized problems and assume there exists a polynomial parameter transformation from $P$ to $Q$.
    Then, if $P$ does not admit a polynomial compression, neither does $Q$.
\end{proposition}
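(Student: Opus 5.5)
We argue by contraposition: assuming $Q$ admits a polynomial compression, we build one for $P$ by running the polynomial parameter transformation and then the compression of $Q$. Recall that a polynomial compression of a parameterized problem $Q$ into a language $R\subseteq\Sigma^*$ is a polynomial-time algorithm $\mathcal{C}$. Given $(y,k')$, it outputs a string $z$ with $|z|\le q(k')$ for some polynomial $q(\cdot)$, such that $(y,k')\in Q$ if and only if $z\in R$. So the plan is: fix a PPT $\mathcal{A}$ from $P$ to $Q$ as in Definition~\ref{defn:poly-para-transformation}, with polynomial $p(\cdot)$, and suppose $Q$ has such a compression $\mathcal{C}$ into some $R$.

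The composed algorithm works as follows. On input $(x,k)$ of $P$, first run $\mathcal{A}$ to obtain $(\widehat{x},\widehat{k})$, and then run $\mathcal{C}$ on $(\widehat{x},\widehat{k})$ to obtain $z$. We then check the three defining properties of a compression of $P$ into $R$.

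\textbf{Correctness.} We have $(x,k)\in P$ iff $(\widehat{x},\widehat{k})\in Q$, by the PPT, and this holds iff $z\in R$, by the compression.

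\textbf{Output size.} Since $\widehat{k}\le p(k)$, we get $|z|\le q(\widehat{k})\le q(p(k))$. This is a polynomial in $k$, after replacing $q$ by a monotone polynomial upper bound if necessary.

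\textbf{Running time.} $\mathcal{A}$ runs in time polynomial in $|x|+k$, so $|\widehat{x}|+\widehat{k}$ is also polynomially bounded in $|x|+k$. Hence $\mathcal{C}$, run on $(\widehat{x},\widehat{k})$, takes time polynomial in $|x|+k$, and so does the composition.

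Together these give a polynomial compression of $P$ into $R$, contradicting the hypothesis on $P$.

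The only point needing care is the running-time bound. The output of $\mathcal{A}$ must have size polynomial in its input, which holds automatically because $\mathcal{A}$ runs in polynomial time. The size bound relies on $\widehat{k}\le p(k)$ and on taking $q$ monotone; this is where the requirement that the parameter blows up only polynomially is used. Everything else is routine composition of polynomials, so I expect no real obstacle.
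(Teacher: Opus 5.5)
Your proposal is correct and is the standard argument from the cited textbook; the paper itself states this proposition without proof. You compose the polynomial parameter transformation with a polynomial compression of $Q$ into some language $R$, use $\widehat{k}\le p(k)$ with a monotone bound on $q$ to keep the output size polynomial in $k$, and use the polynomial bound on $|\widehat{x}|+\widehat{k}$ to keep the running time polynomial.
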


\subparagraph*{Some Elementary Results.}
In this section, we want to explain some elementary result that states a precise relation between the fixed-parameter tractability of 
{\cfcut} and {\mincfcut}.

\ifshort
\begin{proposition}[$\star$]
\else
\begin{proposition}
\fi
\label{prop:FPT-CF-Cut-Min-CF-Cut}
Suppose that {\mincfcut} is FPT when parameterized by $\kappa$, then {\cfcut} is FPT when parameterized by $\kappa$.
\end{proposition}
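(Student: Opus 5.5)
The plan is a direct reduction: solve a {\cfcut} instance by calling the assumed FPT algorithm for {\mincfcut} once, with a budget that no edge cut can exceed. Let $\mathcal{A}$ be an algorithm deciding {\mincfcut} in time $f(\kappa)\cdot |x|^{\cOO(1)}$, where $\kappa$ is the structural parameter of the instance. Given an instance $(G,\widehat{G})$ of {\cfcut}, we build the instance $(G,\widehat{G},k)$ of {\mincfcut} with $k=|E(G)|$, leaving the primary and conflict graphs unchanged, and output the answer of $\mathcal{A}$ on it.

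\emph{Correctness.} Any edge cut $F$ of $G$ satisfies $F\subseteq E(G)$, so $|F|\le |E(G)|=k$ always holds. Hence the budget is never binding. A set $F$ is a conflict-free edge cut for $(G,\widehat{G})$ if and only if it witnesses that $(G,\widehat{G},k)$ is a \yesinst\ of {\mincfcut}. So the two instances are equivalent.

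\emph{Running time.} The new instance has size polynomial in the original: writing $k$ in binary adds only $\cOO(\log|E(G)|)$ bits. Its parameter $\kappa$ is unchanged. This is the one point needing care, and it holds because $\kappa$ depends only on $(G,\widehat{G})$, which we did not modify. If $\kappa$ were to involve $k$, the argument would break, but the proposition concerns parameters that make sense for {\cfcut}, which has no $k$. The total running time is therefore $f(\kappa)\cdot |x|^{\cOO(1)}$, so {\cfcut} is FPT parameterized by $\kappa$.

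I expect no real obstacle. The only thing to state explicitly is that the parameter is preserved and does not depend on $k$.
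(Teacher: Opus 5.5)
Your proof is correct and is essentially the paper's argument. The paper calls the assumed algorithm for every budget $k = 1, \ldots, m$ with $m = |E(G)|$, but no edge cut has more than $m$ edges, so your single call with $k = m$ already decides the instance (as the paper itself remarks in the introduction). Your explicit observation that $\kappa$ must depend only on $(G,\widehat{G})$ and not on $k$ is an assumption the paper's running-time bound also relies on implicitly.
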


\ifshort 
\else
\begin{proof}
Suppose that {\mincfcut} is FPT when parameterized by a parameter $\kappa$.
Let $(G, \widehat{G}, k)$ be an instance of {\mincfcut} parameterized by $\kappa$.
Since the problem is FPT, we can solve it in $f(\kappa) \cdot |G|^{\cOO(1)}$ time for some computable function $f:\mathbb{N} \rightarrow \mathbb{N}$. 
Hence, there exists an algorithm $\mathcal{A}$ that takes $(G, \widehat{G}, k)$, runs in  and outputs correctly whether $G$ has a conflict-free edge cut of at most $k$ edges.
We invoke this algorithm $\mathcal{A}$ for every $k = 1,2,\ldots,m$ in this order, where $m$ is the number of edges in $G$.
Note that $\mathcal{A}$ runs in $f(\kappa) \cdot |G|^{\cOO(1)}$ time for every $k = 1,\ldots,m$.
The smallest value $k$ for which we see that $\mathcal{A}$ outputs {\yesinst}, we return that $(G, \widehat{G})$ is a {\yesinst} for {\cfcut}.
If for each of the values $k$, we see that $\mathcal{A}$ outputs a {\noinst}, then we output that $(G, \widehat{G})$ is a {\noinst} for {\cfcut}.
Observe that our algorithm invokes $\mathcal{A}$ as a subroutine $m$ times.
Hence, the running time of our algorithm is $f(\kappa)\cdot m \cdot n^{\cOO(1)}$.
Therefore, {\cfcut} is FPT when parameterized by $\kappa$.
\end{proof}
\fi

\section{Fixed-Parameter Tractability Results}
\label{sec:solution-size-results}


In this section, we describe an FPT algorithm for {\mincfcut} when the conflict graph is $\alpha$-degenerate for a fixed constant $\alpha$.
Given a graph $G=(V,E)$, a vertex subset $S \subseteq E(G)$ and $k \in \mathbb{N}$, we define a weight function $w_S: E(G) \rightarrow [k]$ as follows.

$w_S(e) = \begin{cases} 1 & \text{if } e \in S \\[4pt] k+1 & \text{otherwise} \end{cases}$

\begin{lemma}
\label{lem:weight-cut}
Let $G=(V,E)$ be a graph, $S \subseteq E$ and $k \in \mathbb{N}$.
Then $G$ has a edge cut $F \subseteq S$ of size at most $k$ if and only if $(G,w_S)$ has an edge cut of total weight at most $k$.
\end{lemma}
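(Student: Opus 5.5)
We prove the two directions separately, working directly from the definition of $w_S$. Here an edge cut means the set $E_G(A,B)$ for some cut $(A,B)$ of $G$, and the weight of an edge cut is the sum of $w_S$ over its edges. The key numerical fact is simple. Every edge outside $S$ has weight $k+1>k$. So any set of edges of total weight at most $k$ contains no edge of $E\setminus S$. Since all weights are positive integers, such a set also has at most $k$ edges.

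For the forward direction, suppose $G$ has an edge cut $F = E_G(A,B)\subseteq S$ with $|F|\le k$. Every edge of $F$ lies in $S$ and so has weight $1$. Hence the total weight of $F$ under $w_S$ is $|F|\le k$. The same cut $(A,B)$ therefore witnesses an edge cut of $(G,w_S)$ of total weight at most $k$.

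For the backward direction, let $(A,B)$ be a cut whose edge cut $F=E_G(A,B)$ has $w_S(F)\le k$. Suppose some $e\in F$ were not in $S$. Then $w_S(F)\ge w_S(e)=k+1>k$, which contradicts $w_S(F)\le k$. Hence $F\subseteq S$, so every edge of $F$ has weight $1$, and therefore $|F| = w_S(F)\le k$. Thus $F$ is an edge cut of $G$ contained in $S$ of size at most $k$.

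Nothing here is a real obstacle. The only point to be careful about is the codomain. The paper writes $w_S:E(G)\to[k]$, but the value $k+1$ lies outside $[k]$, so we read the codomain as $[k+1]$ (or simply $\mathbb{N}$). The argument above uses only that edges outside $S$ have weight strictly greater than $k$ and that edges in $S$ have weight exactly $1$.
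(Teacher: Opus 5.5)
Your proof is correct and follows essentially the same argument as the paper. In both, weight-$1$ edges give the forward direction, and in the backward direction the weight $k+1$ excludes edges outside $S$, so cardinality equals weight. Your remark that the codomain should be $[k+1]$ rather than $[k]$ is a valid correction to the paper's notation.
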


\begin{proof}
For the forward direction ($\Rightarrow$), let $F$ be an edge cut in $G$ such that $F \subseteq S$ and $|F| \leq k$.
Since all the edges in $F$ have weight $1$, the total weight of edges in $F$ is at most $k$.
Thus, $(G,w_S)$ has an edge cut of total weight at most $k$.

In the backward direction ($\Leftarrow$), let $F'$ be a (minimum) edge cut of total weight at most $k$ for $(G,w_S)$.
Note that all the edges in $E \setminus S$ have weight $k+1$.
Hence none of these edges are in $F'$.
Thus, $F' \subseteq S$.
Since all the edges in $S$ have weight $1$ in $(G, w_S)$, the number of such edges in $F' \cap S$ is also at most $k$.
Thus $|F'| \leq k$, completing the proof.
\end{proof}

{\SolSizeFPTMinCFC*}


\begin{proof}
Let $(G, \widehat{G}, k)$ be an instance of {\mincfcut}, where $G=(V,E)$ is the connected primary graph, $\widehat{G}=(E,\widehat{E})$ is an $\alpha$-degenerate conflict graph, and $k$ is the maximum allowed size of the cut. 
Without loss of generality, we assume that $k \geq 0$.
The algorithm proceeds by first applying \Cref{lemma:k-ind-covering} to $\widehat{G}$ with parameter $k$. 
This generates a $k$-independence covering family $\mathcal{F}$ of vertex subsets of $\widehat{G}$ (which correspond to edge subsets of $G$).
We then iterate over each set $S \in \mathcal{F}$ one by one.
For a given set $S$, we define a weight function $w_S$ where we assign weight $1$ to every edge in $S$ and weight $k+1$ to every edge in $E \setminus S$.
We then compute a minimum weighted edge cut $F_S$ of $G$ in polynomial time using Stoer Wagner algorithm \cite{StoerWagner1997}.
If $w_S(F_S) \leq k$, the we output that $(G, \widehat{G}, k)$ is a {\yesinst} and stop. 
If no $S \subseteq \mathcal{F}$ has $w_S(F_S) \leq k$, we return that $(G, \widehat{G}, k)$ is a {\noinst}. 

Next, we prove the correctness of the algorithm by showing that $(G, \widehat{G}, k)$ is a {\yesinst} if and only if the algorithm outputs {\yesinst}.
For the forward direction ($\Rightarrow$), assume $(G, \widehat{G}, k)$ is a {\yesinst}.
Then there exists an edge cut $F \subseteq E$ such that $|F| \le k$ and $F$ is an independent set in $\widehat{G}$.
By the properties of the $k$-independence covering family, 
there exists $S \in \mathcal{F}$ such that $F \subseteq S$.
When the algorithm iterates over this specific $S$, \Cref{lem:weight-cut} guarantees that the edge cut of $(G,w_S)$ with minimum weight is at most $k$.
Hence the algorithm outputs {\yesinst}.

For the backward direction ($\Leftarrow$), assume the algorithm outputs {\yesinst}.
 This implies that for some $S \in \mathcal{F}$, the global minimum cut $F_S$ satisfies $w_S(F_S) \le k$.
 By \Cref{lem:weight-cut}, we have $F_S \subseteq S$ and $|F_S| \le k$.
 Since $S$ is an independent set in $\widehat{G}$, $F_S \subseteq S$ is also independent in $\widehat{G}$.
 Since $F_S$ is a cut in $G$, we can conclude that it is a conflict-free cut of size at most $k$.
 Thus, $(G, \widehat{G}, k)$ is a yes instance.

For the proof of the claimed running time, notice that applying the \Cref{lemma:k-ind-covering} to the $\alpha$-degenerate graph $\widehat{G}$ and parameter $k$ generates the family $\mathcal{F}$ of size at most $f(k, \alpha) \cdot |E|^{O(1)}$, and this family can be computed in time bounded by $f(k, \alpha) \cdot |E|^{O(1)}$.
Note that this $f(k, \alpha)$ is $\binom{k^2{(\alpha+1)}^2}{k} \cdot {(k(\alpha+1))}^{\mathcal{O}(1)}$.
For each set $S \in \mathcal{F}$, assigning the weights and finding a minimum weighted edge cut takes polynomial time.
Since the number of edges $|E|$ is bounded by $n^2$, the term $|E|^{O(1)}$ is bounded by $n^{O(1)}$. 
Therefore, the worst-case running time is strictly dominated by multiplying the size of the covering family by the polynomial time taken per branch.
Thus, the running time is $f(k, \alpha) \cdot n^{O(1)}$ and since $\alpha$ is a fixed constant, this provides an FPT algorithm when parameterized by $k$.
\end{proof}

In the next sentence, we prove that when the number of maximal independent sets of the conflict graph is polynomial in the input size, {\mincfcut} is polynomial-time solvable.

\begin{corollary}
    \label{cor:hereditary-conflict-graph}
Let $\mathcal{G}$ be a hereditary graph class such that for every graph $G \in \mathcal{G}$ with $n$ vertices, the collection of all maximal independent sets of $G$ is computable in $(|V(G)| + |E(G)|)^{O(1)}$ time.
Then the {\mincfcut} is polynomial-time solvable when the conflict graph is in $\mathcal{G}$.
\end{corollary}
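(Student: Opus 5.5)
The plan is to mirror the proof of \Cref{thm:degen-FPT-sol-size}, replacing the $k$-independence covering family by the family $\mathcal{M}$ of all maximal independent sets of $\widehat{G}$. Let $(G, \widehat{G}, k)$ be an instance with $\widehat{G} \in \mathcal{G}$. By assumption we compute $\mathcal{M}$ in $(|V(\widehat{G})| + |E(\widehat{G})|)^{\cOO(1)}$ time. Since $|V(\widehat{G})| = |E(G)|$, this is polynomial in the input size, and so is $|\mathcal{M}|$. The key observation is that $\mathcal{M}$ is a $k$-independence covering family for every $k$: any independent set $F$ of $\widehat{G}$, of any size, extends greedily to a maximal independent set $M \supseteq F$ with $M \in \mathcal{M}$.

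The algorithm iterates over every $M \in \mathcal{M}$. For each $M$ it defines $w_M$ as in \Cref{lem:weight-cut}, giving weight $1$ to edges of $M$ and weight $k+1$ to the remaining edges. It then computes a minimum weighted edge cut $F_M$ of $(G, w_M)$ with the Stoer--Wagner algorithm. It answers {\yesinst} if and only if $w_M(F_M) \le k$ for some $M$.

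Correctness follows exactly as before.
\begin{itemize}
\item If a conflict-free cut $F$ with $|F| \le k$ exists, pick $M \in \mathcal{M}$ with $F \subseteq M$. Then \Cref{lem:weight-cut} gives $w_M(F_M) \le k$, so the algorithm answers {\yesinst}.
\item Conversely, if $w_M(F_M) \le k$, then \Cref{lem:weight-cut} gives $F_M \subseteq M$ and $|F_M| \le k$. Since $M$ is independent in $\widehat{G}$, so is $F_M$, hence $F_M$ is a conflict-free cut of size at most $k$.
\end{itemize}

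The running time is $|\mathcal{M}|$ times a polynomial, which is polynomial overall.

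The hereditary hypothesis is not really needed for this argument. The one subtle point is confirming that "polynomial in the size of $\widehat{G}$" translates to polynomial in the input size. This holds because $\widehat{G}$ is part of the input and $|V(\widehat{G})| = |E(G)| \le n^2$.
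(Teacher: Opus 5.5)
Your proposal is correct and follows the paper's proof essentially step for step. Like the paper, you enumerate all maximal independent sets of $\widehat{G}$, compute a minimum weighted cut with weight $1$ on the set and $k+1$ elsewhere, and argue correctness via \Cref{lem:weight-cut}; your remark that the hereditary assumption is never used is also accurate, since the paper's argument does not invoke it either.
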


\begin{proof}
    Let $(G, \widehat{G}, k)$ be an instance of {\mincfcut}, where $G=(V, E)$ is the connected primary graph, and $\widehat{G}=(E,\widehat{E})$ is the conflict graph in $\mathcal{G}$.
    The algorithm proceeds by first computing the collection $\mathcal{F}$ of all maximal independent sets of $\widehat{G}$ in polynomial time. 
    Note that every independent set is a subset of at least one maximal independent set.
    Hence, if $G$ has a conflict-free edge cut $F$, then there exists $S \in \mathcal{F}$ such that $F \subseteq S$.
    For every $S \in \mathcal{F}$, we construct the edge weighted graph $(G, w_S)$ by defining weights as $w_S(e) = 1$ if $e \in S$ and $w_S(e) = k + 1$ otherwise.
    Subsequently, we execute the polynomial-time algorithm that computes a minimum weighted edge cut $F^*$ of $(G, w_S)$.
    If $|F^*| \leq k$, then we output that $(G, \widehat{G}, k)$ is a {\yesinst}.
    Otherwise, we move to the next set from $\mathcal{F}$ and repeat the same procedure.
    If for every $S \in \mathcal{F}$, we construct the weighted graph $(G, w_S)$ and we see that the minimum weighted edge cut of $(G, w_S)$ is more than $k$, then we output that $(G, \widehat{G}, k)$ is a {\noinst}.
    Observe that the first step was to enumerate the family $\mathcal{F}$ of all maximal independent sets which is a polynomial-time procedure.
    After that, for each set $S$ in $\mathcal{F}$, it involves another polynomial-time procedure to check whether there is $F \subseteq S$ of size at most $k$ such that $F$ is a conflict-free cut or not.
    Therefore, this algorithm runs in polynomial-time.

Now we prove the correctness of the algorithm by considering both directions of the proof.
For the forward direction ($\Rightarrow$), assume $(G, \widehat{G})$ is a {\yesinst}. Then there exists $F \subseteq E$ such that $G-F$ is disconnected and $F$ is independent in $\widehat{G}$.
It is clear that $F$ is a subset of a maximal independent set $S \in \mathcal{F}$.
When the algorithm iterates over every maximal independent set in $\mathcal{F}$, it follows from \Cref{lemma:k-ind-covering} that the corresponding edge weighted graph $(G, w_S)$ must have an edge cut of total weight at most $k$.
 Hence, the algorithm for this particular $S$ outputs a {\yesinst}.
 
For the backward direction ($\Leftarrow$), assume the algorithm outputs a {\yesinst}.
This implies that for some $S \in \mathcal{F}$, the weighted graph $(G, w_S)$ has an edge cut of total weight at most $k$.
In such a case, such an edge cut $F$ satisfies that $F \subseteq S$ and for every $e \in F$, $w_S(e) =1$.
Hence, $|F| \leq k$.
Therefore, $(G, \widehat{G}, k)$ is a {\yesinst}.
\end{proof}

\section{Hardness Results}
\label{sec:hardness-res}

In this section, we provide some results that prove the non-existence of FPT algorithms under various parameterizations.
We divide this section into two subsections.
In \Cref{sec:hardness-structural}, we provide a dichotomy result on the parameterized tractability status with respect to structural parameters of the input graph (\Cref{sec:hardness-structural}).
Subsequently, in \Cref{sec:hardness-sol-size}, we prove W-hardness when conflict graph is proper interval graphs.

\subsection{Structural Parameters}
\label{sec:hardness-structural}

First, we give the proof that in class of all connected graphs with dissociation number one, {\CFCut} is polynomial-time solvable.

\begin{lemma}
    \label{lem:diss-1-polytime}

    If a conflict-free cut exists when the primary graph is in a class of connected graphs with dissociation number one, then there exists a conflict-free cut with at most two edges and it can be computed in polynomial time.
\end{lemma}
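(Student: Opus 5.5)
The plan is to use the structure forced by dissociation number one and then show that every conflict-free cut contains a conflict-free cut of at most two edges. Since $G$ is connected and has a dissociation set of size one, there is a vertex $v$ such that $G - v$ has maximum degree at most one. So every component of $G - v$ is either an isolated vertex $u$ or a single edge $ab$. Because $G$ is connected, every isolated $u$ is adjacent to $v$, and every edge component $ab$ has at least one endpoint adjacent to $v$. The degenerate case where the dissociation number is $0$ (so $G$ is $K_1$ or $K_2$) is handled directly. I assume throughout that a single edge is always independent in $\widehat{G}$, i.e.\ $\widehat{G}$ has no loops.

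Next, fix a conflict-free cut $F = E_G(A,B)$ and assume without loss of generality that $v \in A$. Since $B \neq \emptyset$, some component $C$ of $G - v$ meets $B$. I will do a case analysis on $C$ and extract $F' \subseteq F$ with $|F'| \le 2$ such that $F'$ is itself an edge cut. Then $F'$ is automatically independent in $\widehat{G}$, being a subset of an independent set.
\begin{enumerate}
\item If $C = \{u\}$, then $uv \in F$. The edge $uv$ is a bridge, so $F' = \{uv\}$ works.
\item If $C = \{a,b\}$ and only $a$ is adjacent to $v$, then every edge of $C$ or between $C$ and $v$ is a bridge. Since $C$ meets $B$ and $v \in A$, at least one of $va, ab$ lies in $F$, and that single edge is a cut.
\item If $C = \{a,b\}$ forms a triangle with $v$, there are three subcases. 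If $a, b \in B$, then $F' = \{va, vb\}$, which separates $\{a,b\}$. If $a \in B$ and $b \in A$, then $F' = \{va, ab\}$, which isolates $a$. The case $b \in B$, $a \in A$ is symmetric.
\end{enumerate}
In each case $F' \subseteq F$ and $G - F'$ is disconnected.

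Finally, the algorithm enumerates all $\cOO(m^2)$ sets of one or two edges. For each candidate it checks in polynomial time that the set is independent in $\widehat{G}$ and that deleting it disconnects $G$. By the argument above, it outputs a conflict-free cut exactly when one exists. A vertex $v$ witnessing dissociation number one can also be found by trying all $n$ vertices, although the enumeration does not even need it.

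I expect the main obstacle to be making the case analysis exhaustive. The key point is that choosing a single component $C$ that meets $B$ localizes the argument: the edges of $F$ incident to $C$ already form a cut, so no global reasoning about $F$ is needed.
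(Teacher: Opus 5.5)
Your proof is correct, and it takes a somewhat different route from the paper. The paper first handles pendant vertices (a degree-one vertex gives a one-edge cut), then uses $\delta(G)\ge 2$ to show that every vertex other than the dissociation vertex $x$ has degree exactly two, so $G$ is a windmill of triangles through $x$. Its algorithm scans these triangles for two edges that are non-adjacent in $\widehat{G}$, and correctness rests on a claim characterizing conflict-free cuts of size exactly two. You instead keep all three component types of $G-v$ (isolated vertex, pendant edge, triangle). You show directly that any conflict-free cut $F$ contains an edge cut $F'$ of at most two edges, which is conflict-free because independence in $\widehat{G}$ is hereditary. Your algorithm is then plain enumeration of $\cOO(m^2)$ candidate sets, which does not even need $v$. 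Your shrinking argument buys something real: it explicitly proves the first half of the lemma, that any conflict-free cut implies one with at most two edges. The paper's claim only treats cuts of size two and leaves implicit why the absence of a suitable triangle also rules out larger conflict-free cuts; that missing step is exactly your Case 3 applied to the windmill. In return, the paper's route gives an explicit certificate for yes-instances (a pendant vertex, or a triangle through $x$ with two non-conflicting edges) and a correspondingly cheaper check.
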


\begin{proof}
    Let $(G, \widehat{G})$ be an instance of {\CFCut} where the dissociation number (${\sf diss}$) of $G$ is one. Since $G$ is connected, $\delta(G) \geq 1$. Now if there exists a degree 1 vertex in $G$, then we can find a conflict-free cut of size 1 in polynomial time by cutting the edge incident to that degree 1 vertex. 
    Therefore, we can assume that $\delta(G) \geq 2$. Now since the dissociation number of $G$ is one, there exists a vertex $x \in V(G)$ such that $G - x$ is a disjoint union of edges.
    We can find such a vertex in polynomial time by checking for each vertex whether the graph obtained by removing that vertex is a disjoint union of edges. 

    \begin{claim}
    \label{claim:deg-2}
    For every $v \in V(G) \setminus \{x\}$, \ $d_G(v)=2 $ 
    \end{claim}

    \begin{claimproof}
    Assume for the sake of contradiction that there exists a vertex $v \in V(G) \setminus \{x\}$ such that $d_G(v) > 2$. Then there exists $u, w \in N_G(v) \setminus \{x\}$ such that $uv, wv \in E(G-x)$. This contradicts the fact that $G-x$ is a disjoint union of edges. Therefore, $d_G(v) \leq 2$ for every $v \in V(G) \setminus \{x\}$. Hence, the claim follows by the fact that $\delta(G) \geq 2$.
    \end{claimproof}

    \noindent
    Therefore, as a consequence of \Cref{claim:deg-2}, the structure of $G$ is as follows: \\
    Each edge in $G-x$ forms a triangle with $x$ in $G$ by adding the edges between $x$ and the endpoints of that edge. 

    \subparagraph*{Algorithm.} For each triangle in $G$ formed by $x$ and an edge in $G-x$, we check whether the edges of that triangle are pairwise adjacent in $\widehat{G}$. If there exists a triangle, with two non-adjacent edges in $\widehat{G}$, then we return those two edges as a conflict-free cut. Otherwise, we return that there is no conflict-free cut in $(G, \widehat{G})$. 
    
    \begin{claim}
    \label{claim:diss-1-correctness}
    $(G, \widehat{G})$ has a conflict-free cut of size 2 if and only if there exists a triangle in $G$ such that at least two of its edges are non-adjacent in $\widehat{G}$.
    \end{claim}

    \begin{claimproof}
        First, we give the forward direction $(\Rightarrow)$ of the proof. Suppose $F$ is a conflict-free cut of size 2 in $(G, \widehat{G})$. Now if $F$ contains two edges from different triangles in $G$, then $G-F$ remains connected as each triangle contains $x$ and $x$ is adjacent to all other vertices in $G$. This contradicts the fact that $F$ is an edge cut. Hence, $F$ contains two edges from the same triangle. Since $F$ is independent in $\widehat{G}$, those two edges are non-adjacent in $\widehat{G}$. 

        For the backward direction $(\Leftarrow)$, suppose there exists a triangle in $G$ such that at least two of its edges are non-adjacent in $\widehat{G}$. Let that triangle be formed by vertices $x, u, v \in V(G)$ and edges $xu, xv, uv \in E(G)$. If $xu$ and $xv$ are non-adjacent in $\widehat{G}$, then $F=\{xu, xv\}$ is a conflict-free cut as it separates the edge $uv$ from the graph. If $xu$ and $uv$ are non-adjacent in $\widehat{G}$, then $F=\{xu, uv\}$ is a conflict-free cut as it separates the vertex $u$ from the graph. If $xv$ and $uv$ are non-adjacent in $\widehat{G}$, then $F=\{xv, uv\}$ is a conflict-free cut as it separates the vertex $v$ from the graph. In all three cases, we obtain a conflict-free cut of size 2 in $(G, \widehat{G})$.
        
    \end{claimproof}

    \noindent
    \Cref{claim:diss-1-correctness} ensures the correctness of the algorithm, and clearly it runs in polynomial time as we only need to check for each triangle in $G$ whether at least two of its edges are non-adjacent in $\widehat{G}$.
    It completes the proof of lemma.
\end{proof}

{\DissNPhard*}

\begin{proof}
    The proof is by a reduction from {\ThreeColor} that is we state below.
    
    \defproblem{{\ThreeColor}}{An undirected graph $H = (V, E)$.}{Does there exist a function $c: V(H) \rightarrow [3]$ such that for every edge $uv \in E(H)$, $c(u) \neq c(v)$?}
     

        	\subparagraph*{Construction.} 
    Now we construct a graph $G$ as follows:
    \begin{enumerate}[(i)]
        \item We create two vertices $s$ and $t$, and add them to $G$.
        \item For every $v \in V(H)$, we create an $s$-$t$ path $P_{v}$ with vertices $\alpha_v, \beta_v$, and edges $s\alpha_v, \alpha_v \beta_v, \beta_v t$.
    \end{enumerate}
    
    \noindent
    Note that $V(G) = \{\alpha_v, \beta_v \mid v \in V(H)\} \cup \{s, t\}$ and $E(G) = \{s\alpha_v, \alpha_v \beta_v, \beta_v t \mid v \in E(H)\}$.
    What is crucial here is that $G - \{s,t\}$ is a disjoint union of edges.
    Therefore, $G$ has a dissociation set of size at most two.
    It remains to construct the conflict graph $\widehat{G}$ such that $V(\widehat{G}) = E(G)$.
    We define the edge set $E(\widehat{G})$ as follows:
    \begin{enumerate}[(i)]
        \item For every path $P_{v}$, the edges of $P_v$ forms a triangle in $\widehat{G}$, i.e. $s\alpha_v, \alpha_v \beta_v$, and $\beta_v t$ are pairwise adjacent in $\widehat{G}$.
        \item For every $xy \in E(H)$, we place conflicts between the corresponding edges in $G$, i.e.,
      $\{s\alpha_x, s\alpha_y\}, \{\alpha_x \beta_x, \alpha_y \beta_y\}, \{\beta_x t, \beta_y t\} \in E(\widehat{G})$.
    \end{enumerate}
    
    \noindent
    It remains to prove that $H$ is 3-colorable if and only if there exists a conflict-free cut in $(G,\widehat{G})$.
    First, we prove a structural claim about every conflict-free cut of $(G, \widehat{G})$.
    
    \begin{claim}
    \label{claim:color-reduction-cfc}
     If $F \subseteq E(G)$ is a conflict-free cut to $(G, \widehat{G})$, then $F$ separates $s$ and $t$.    
    \end{claim}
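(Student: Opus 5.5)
The plan is to argue by contradiction, using the triangle structure that the conflict graph imposes on each path $P_v$. Suppose $F$ is a conflict-free cut with corresponding partition $(A,B)$, and suppose $s$ and $t$ lie on the same side, say $s,t \in A$. Since $(A,B)$ is a cut, $B$ is nonempty. By construction every vertex of $G$ other than $s,t$ is some $\alpha_v$ or $\beta_v$, so $B$ contains a vertex of some path $P_v$.

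Now look at this path $P_v = s\,\alpha_v\,\beta_v\,t$. Its three edges $s\alpha_v, \alpha_v\beta_v, \beta_v t$ are pairwise adjacent in $\widehat{G}$ by item (i) of the construction, so $F$ contains at most one of them. On the other hand, both endpoints $s$ and $t$ lie in $A$, while at least one internal vertex lies in $B$. There are three possible placements of the internal vertices.
\begin{itemize}
\item If $\alpha_v \in B$ and $\beta_v \in A$, then $s\alpha_v$ and $\alpha_v\beta_v$ both cross the cut.
\item If $\alpha_v \in A$ and $\beta_v \in B$, then $\alpha_v\beta_v$ and $\beta_v t$ both cross the cut.
\item If $\alpha_v, \beta_v \in B$, then $s\alpha_v$ and $\beta_v t$ both cross the cut.
\end{itemize}
More simply: a closed walk from $A$ that enters $B$ must cross the cut an even, positive number of times, so at least two edges of $P_v$ lie in $F$. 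In every case, $F$ contains two edges that are adjacent in $\widehat{G}$, contradicting that $F$ is independent in $\widehat{G}$. Hence $s$ and $t$ are \inter, and $F$ separates $s$ and $t$.

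There is no real obstacle here. The only point needing care is to note that every non-terminal vertex of $G$ lies on some $P_v$, so that $B$ must meet a path whose edges form a conflict triangle. This is essentially the same argument as \Cref{lemma:triangle-double-edge-monochromatic}, applied to the cycle formed by $P_v$ together with the fact that $s$ and $t$ are on the same side.
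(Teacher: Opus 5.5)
Your proposal is correct and follows essentially the same route as the paper: assume $s$ and $t$ lie on the same side, observe that some $\alpha_w$ or $\beta_w$ must lie on the other side, and conclude that two edges of $P_w$ lie in $F$, contradicting the conflict triangle on $\{s\alpha_w,\alpha_w\beta_w,\beta_w t\}$. Your three-case analysis, or equivalently the parity observation, makes explicit the step ``$F$ must contain two edges of $P_w$'' that the paper asserts without justification.
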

    
    \begin{claimproof}
    Assume for the sake of contradiction that $F$ does not separate $s$ and $t$.
    This implies that both $s$ and $t$ are in the same connected component of $G-F$.
    Since $F$ is a cut set of $G$, then without loss of generality, $F$ must separate some vertex other than $t$ from $s$ in $G$.
    Such a vertex is $\alpha_w$ or $\beta_w$ from $s$ for some $w \in V(H)$.
    For that to happen, $F$ must contain two edges from $\{s\alpha_w, \alpha_w \beta_w, \beta_w t\}$.
    But those edges are in pairwise conflict with each other.
    This leads to a contradiction.
    Thus, $F$ separates $s$ and $t$.
    \end{claimproof}

    \begin{claim}
    \label{claim:3-color-CFC-eqv}
    $H$ is 3-colorable if and only if $(G, \widehat{G})$ contains a conflict-free cut.
    \end{claim}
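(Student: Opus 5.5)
We prove the two directions separately, using \Cref{claim:color-reduction-cfc} to control the shape of every conflict-free cut.

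\emph{Forward direction ($\Rightarrow$).} Fix a proper $3$-colouring $c: V(H) \rightarrow [3]$. For each $v \in V(H)$, the colour tells us which edge of $P_v$ to cut:
\begin{itemize}
\item colour $1$: cut $s\alpha_v$, so both $\alpha_v$ and $\beta_v$ go to the side of $t$;
\item colour $2$: cut $\alpha_v\beta_v$, so $\alpha_v$ joins $s$ and $\beta_v$ joins $t$;
\item colour $3$: cut $\beta_v t$, so both $\alpha_v$ and $\beta_v$ join $s$.
\end{itemize}
Let $A$ be $\{s\}$ together with all $\alpha_v,\beta_v$ placed on the side of $s$, and let $B = V(G)\setminus A$. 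Then $A \ni s$ and $B \ni t$ are both nonempty. The edge cut $F = E_G(A,B)$ contains exactly one edge from each path $P_v$, because each path crosses from $A$ to $B$ exactly once. Hence no triangle conflict of type (i) is violated.

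Now consider a conflict of type (ii). It joins two edges of the same position (first, middle, or last) on $P_x$ and $P_y$ with $xy \in E(H)$. Both edges lie in $F$ only if $c(x) = c(y)$, which is impossible for a proper colouring. So $F$ is independent in $\widehat{G}$ and is a conflict-free cut.

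\emph{Backward direction ($\Leftarrow$).} Let $F$ be a conflict-free cut with sides $(A,B)$.
\begin{enumerate}
\item By \Cref{claim:color-reduction-cfc}, $s$ and $t$ are separated; assume $s \in A$ and $t \in B$.
\item Each $P_v$ is an $s$–$t$ path, so $F$ contains at least one edge of $P_v$.
\item The three edges of $P_v$ are pairwise adjacent in $\widehat{G}$, so $F$ contains exactly one edge of $P_v$.
\item Define $c(v) \in [3]$ as the position of that edge on $P_v$: $s\alpha_v$, $\alpha_v\beta_v$, or $\beta_v t$.
\item If $xy \in E(H)$ and $c(x) = c(y)$, then the two same-position edges of $P_x$ and $P_y$ both lie in $F$. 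These edges are adjacent in $\widehat{G}$ by construction (ii), contradicting independence. Hence $c$ is a proper $3$-colouring.
\end{enumerate}

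The only delicate step is step 3, forcing exactly one cut edge per path; it follows directly from the triangle conflicts of type (i). This proves the equivalence, and with \Cref{lem:diss-1-polytime} it completes the proof of \Cref{thm:diss-hardness}. The construction is clearly polynomial-size.
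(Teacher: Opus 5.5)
Your proof is correct and is essentially the paper's argument. In the backward direction you use the $s$--$t$ separation claim to force at least one cut edge per path $P_v$, use the triangle conflicts to force at most one, and read the colour off that edge's position. In the forward direction you cut the edge indexed by the colour and rule out the same-position conflicts of type (ii). Your explicit bipartition $(A,B)$ is slightly more careful than the paper, which only asserts that one edge per path yields a cut.
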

	
	\begin{claimproof}
	First, we give the backward direction $(\Leftarrow)$ of the proof. 
    Suppose that there exists a conflict-free cut $F$ in $(G,\widehat{G})$.
    Then $G-F$ is disconnected and $F$ is independent in $\widehat{G}$.
    It follows from \Cref{claim:color-reduction-cfc}, $F$ is a separator between $s$ from $t$.
    Hence, $F$ must contain at least one edge from each path $P_{v}$.
    Now as $F$ is independent in $\widehat{G}$, it can contain at most one edge from each path $P_{v}$.
    Hence, $F$ contains exactly one edge from each path $P_{v}$.

    Now we define a function $g : V(H) \to \{1,2,3\}$ as follows.
    \[
    	g(v) =
    	\begin{cases}
    		1 & \text{ if }s\alpha_v \in F\\
    		2 & \text{ if }\alpha_v \beta_v \in F\\
    		3 & \text{ otherwise}
    	\end{cases}
    \]
    We show that $g$ is a proper coloring of $H$.
    Assume for the sake of contradiction that this is not the case.
    Then, there is $xy \in E(H)$ such that $g(x) = g(y)$.
    This implies that $F$ contains one of these combinations of edges.
    If $g(x) = g(y) = 1$, then $s\alpha_x, s\alpha_y \in F$.
    Then, $s\alpha_x$ and $s\alpha_y$ are adjacent in $\widehat{G}$, leading to a contradiction that $F$ is independent in $\widehat{G}$.
    If $g(x) = g(y) = 2$, then $\alpha_x \beta_x, \alpha_y \beta_y \in F$.
    Then, then $\alpha_x \beta_x$ and  $\alpha_y \beta_y$ are adjacent in $\widehat{G}$, leading to a contradiction that $F$ is independent in $\widehat{G}$.
    If $g(x) = g(y) = 3$, then $\alpha_x t, \alpha_y t \in F$.
    Then $\alpha_x t$, and $\alpha_y t$ are adjacent in $\widehat{G}$, leading to a contradiction that $F$ is independent in $\widehat{G}$.
    Thus, $g$ is a proper 3-coloring of $H$.
	
    For the forward direction $(\Rightarrow)$,
    suppose that $H$ is 3-colorable.
    Then there exists a function $c : V(H) \to \{1,2,3\}$ 
    such that for every edge $uv \in E(H)$, we have $c(u) \neq c(v)$.\\
    We construct $F$ as follows.
    If $c(u) = 1$, then we put $s\alpha_u$ into $F$.
    If $c(u) = 2$, then we add $\alpha_u \beta_u$ into $F$.
    If $c(u) = 3$, then we add $\beta_u t$ into $F$.
    We claim that $F$ is a conflict-free cut in $(G,\widehat{G})$.\\
    As $c$ is a well-defined function, therefore, $F$ contains exactly one edge from each path $P_v$.
    Hence, $F$ is a cut set of $G$.\\
    Now we show that $F$ is independent in $\widehat{G}$.
    Assume for the sake of contradiction that $F$ is not independent.
    Then, it must be that there are $x, y \in V(H)$ such that $s\alpha_x, s\alpha_y \in F$ or $\alpha_x \beta_x, \alpha_y \beta_y \in F$ or $\beta_x t, \beta_y t \in F$. 
    Note that the choice of the edge in $F$ was based on $c(x)$ and $c(y)$.
    Additionally, each of the above three cases arise when $x$ and $y$ are adjacent in $H$.
    Observe that each of the three cases arise only when $c(x) = c(y)$.
    This leads to a contradiction that $c$ is a proper-coloring.
    Therefore, $F$ is independent in $\widehat{G}$.
    \end{claimproof}
    
    Observe that the above claims provide a polynomial-time reduction from \ThreeColor\ to \CFCut\ when the dissociation number of the input graph is at most 2.
    Therefore, {\CFCut} is NP-Complete even when the dissociation number of the input graph is a fixed constant.
    Hence, {\CFCut} is para-NP-hard when parameterized by ${\sf diss}$, the dissociation number of the input graph.
    This completes the proof of the first claim.
    
    Subsequently, it follows from \Cref{lem:diss-1-polytime} that {\CFCut} is polynomial-time solvable when the dissociation number is one.
    Therefore, our second claim also holds true and this completes the proof of our dichotomy result.
\end{proof}

\subsection{Parameterized by Solution Size}
\label{sec:hardness-sol-size}

\subparagraph*{Hardness when Conflict Graph is Unit Interval.}
Our next result is based on a W-hardness reduction from {\UnitMIS}.
We formally state the definition of {\UnitMIS} as follows.

\defparproblem{{\UnitMIS}}{Two unit-interval graphs $H_1 = (V,E_1)$ and $H_2 = (V,E_2)$, and a partition $V_1, V_2, \ldots, V_k$ of $V$.}{$k$}{Is there a set $S \subseteq V$, such that $S$ is an independent set in both $H_1$ and $H_2$, and for each $i \in [k]$, we have $|S \cap V_i| = 1$?}

We have the following proposition for {\UnitMIS}.

\begin{proposition}[Theorem 4 from \cite{AgrawalJKS20}]
\label{prop:hardness-unit-MIS}
{\UnitMIS} is W[1]-hard when parameterized by $k$.
\end{proposition}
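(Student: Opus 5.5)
The statement is imported verbatim as Theorem~4 of \cite{AgrawalJKS20}, so in the paper it suffices to cite it. If I had to reprove it, I would use a parameterized reduction from {\sc Multicolored Clique}. The input is a graph $G$ with colour classes $C_1,\dots,C_k$, each of size $n$, with the vertices of every class indexed $1,\dots,n$. The new parameter will be $k'=\mathcal{O}(k^2)$.

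\textbf{Parts.} For every ordered pair $(i,j)$ with $i\neq j$, I create a part $V_{ij}$. It contains one element $(i,j,e)$ for every edge $e=uv$ of $G$ with $u\in C_i$ and $v\in C_j$. Choosing one element from each part means choosing each edge twice, once per orientation. Such a choice encodes a multicoloured clique exactly when the following two consistency conditions hold.
\begin{enumerate}
\item For each colour $i$, all the chosen edges in $V_{ij}$, over $j\neq i$, have the same endpoint in $C_i$.
\item For each unordered pair $\{i,j\}$, the chosen element of $V_{ij}$ and the chosen element of $V_{ji}$ are the same edge.
\end{enumerate}
For colour $i$, I order its parts as a path $V_{i,j_1},\dots,V_{i,j_{k-1}}$. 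Condition (1) is then enforced by equality checks between consecutive parts on this path.

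\textbf{Equality gadget.} Each equality check compares a numerical key: the index of the $C_i$-endpoint, or the index of the edge. I split the check into ``key of $a$ $\le$ key of $b$'' on track $H_1$ and ``key of $a$ $\ge$ key of $b$'' on track $H_2$. On a track, the intervals of part $A$ are $[p,p+1]$ with $p=x\varepsilon$, where $x$ is the key. The intervals of part $B$ sit at $p=1+y\varepsilon+\varepsilon/2$, where $y$ is the key. Two such intervals intersect iff $y\varepsilon+\varepsilon/2\le x\varepsilon$, i.e.\ iff $y<x$. So independence on this track forces $y\ge x$, and the reversed placement on the other track forces $y\le x$. Intervals inside one part may overlap each other, which is harmless because exactly one element per part is chosen.

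\textbf{Global layout.} I lay out all comparisons along a single line per track. Consecutive gadgets are separated by gaps larger than $2$, so that intervals from different gadgets never meet.

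\textbf{Main obstacle.} The hard step is this global layout. Every element has exactly one interval per track, so on a given track it can serve only as the left side of one comparison and the right side of another. Each part therefore takes part in at most two comparisons per track. A part $V_{ij}$ in the interior of its colour path needs three checks: two chain neighbours and its partner $V_{ji}$.

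I would fix this first by inserting an extra copy part $V'_{ij}$ with an equality gadget between $V_{ij}$ and $V'_{ij}$. Then every part has degree at most two in the ``comparison graph''. That graph is a disjoint union of paths, which can be embedded left to right on both tracks with alternating orientations, each edge realised as the $\le/\ge$ pair above.

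After that, two things remain to check. First, the resulting graphs are unit interval graphs, since all intervals have length $1$. Second, the solutions correspond bijectively: a multicoloured clique yields an independent transversal, and conversely the equality chains force a consistent choice of endpoints, so the chosen edges span a clique. The reduction is polynomial and $k'=\mathcal{O}(k^2)$, which gives W[1]-hardness.
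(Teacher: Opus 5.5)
Citing is exactly what the paper does: it gives no proof and imports the result as Theorem~4 of \cite{AgrawalJKS20}. Your reproof sketch, however, cannot work. In your layout, intervals of two distinct parts can intersect only if the parts are joined by an equality gadget, and you make the gadget graph $\Gamma$ a disjoint union of paths. Every instance produced this way is solvable in polynomial time. Along a path $P_1,\dots,P_m$ of $\Gamma$, put $R_1=P_1$ and let $R_{t+1}$ be the elements of $P_{t+1}$ that are non-adjacent, in both $H_1$ and $H_2$, to some element of $R_t$. The instance is a yes-instance iff $R_m\neq\emptyset$ on every path. So a correct reduction of this shape from {\sc Multicolored Clique} would give $\mathrm{P}=\mathrm{NP}$, and your soundness step (``the equality chains force a consistent choice of endpoints'') must fail. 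The concrete error is the copy trick. In a graph of maximum degree two, every connected set of parts has at most two gadgets leaving it, so $\{V_{ij},V'_{ij}\}$ still cannot be linked to both chain neighbours and to $V_{ji}$. Adding $V'_{ij}$ only raises the degree of $V_{ij}$ to three, and dropping any of the three checks loses soundness.

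The missing idea is that the two tracks must carry \emph{different} adjacency structures, each adjacency imposing a single one-sided comparison. Pairing $\le$ on $H_1$ with $\ge$ on $H_2$ for the same two parts collapses both tracks onto one path structure. A clean fix is a reduction from {\sc Grid Tiling with $\le$}, which is W[1]-hard \cite{cygan2015parameterized}. Its input is sets $S_{i,j}\subseteq[n]\times[n]$ for $i,j\in[k]$. An element $(a,b)$ of cell $(i,j)$ and an element $(a',b')$ of cell $(i+1,j)$ must satisfy $a\le a'$, and elements of cells $(i,j)$ and $(i,j+1)$ must satisfy $b\le b'$. Make one part per cell, and set $\varepsilon=1/(4n)$. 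Give the element $(a,b)$ of cell $(i,j)$ the unit interval starting at $10kj+i(1+\varepsilon/2)+a\varepsilon$ on $H_1$ and at $10ki+j(1+\varepsilon/2)+b\varepsilon$ on $H_2$. Your own shift computation then shows three things. Elements of vertically adjacent cells are non-adjacent in $H_1$ iff $a\le a'$. Elements of horizontally adjacent cells are non-adjacent in $H_2$ iff $b\le b'$. No other pair of distinct cells interacts on either track. So multicoloured independent sets are exactly grid tilings, with $k^2$ parts, and the union of the two tracks is a $k\times k$ grid rather than a union of paths. Each cell also uses a single key per track. This removes a further issue your sketch leaves open: an element has only one interval per track, so that one position must serve both its left and its right comparison, even when those use different keys.
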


Now, we describe the proof of our result.

{\UnitIntervalHard*}


\begin{proof}
We give a reduction from {\UnitMIS} problem defined as follows:
   
\subparagraph*{Construction Procedure.}
    Given an instance $(H_1, H_2, k, V_1, \ldots, V_k)$ of {\UnitMIS}, we construct an instance $(G, \widehat{G}, k')$ of {\mincut} as follows:
	%
	%
    Our construction procedure has two parts, first procedure constructs the primary graph, and the second procedure constructs the conflict graph.
    
    We construct the primary graph $G$ as follows:
    \begin{enumerate}[(i)]
        \item We create two designated terminal vertices $s$ and $t$.
        \item For every $v \in V(H)$, we create four vertices $v_1, v_2, v_3$ and $v_4$. We add the edges $v_1 v_2$, $v_1 v_3$ and $v_1 v_4$ to $G$.
        Add a $3k$-uncuttable gadget both between $v_2$ and $v_4$, and between $v_3$ to $v_4$.
        We call the subgraph induced by $\{v_1, v_2, v_3, v_4\}$ together with the gadgets, one between $v_2$ and $v_4$ and the other between $v_3$ and $v_4$ as a {\em rhombus} corresponding to $v$.
        	We consider $v_1$ as the {\em first vertex} of the rhombus corresponding $v$ and $v_4$ as the {\em last vertex} of the rhombus corresponding to $v$. 
        	
        \item For each partition of $V$, we create a series of rhombi.
        Specifically, for $V_i = \{v_{i,1}, v_{i,2}, \ldots, v_{i,n_i}\}$ where $i \in [k]$, we construct a rhombi $R_{i,1}, R_{i,2},  \ldots, R_{i,n_i}$ as in step $(ii)$.
        	We put the rhombi corresponding to the vertices of $V_i$ in the order $v_{i, 1},\ldots,v_{i, n_i}$ in such a way that the last vertex of $R_{i,j}$ with the first vertex of $R_{i,j+1}$ for every $1 \leq j \leq n_i - 1$.
        \item Finally, for every $i \in [k]$, we add a $3k$-uncuttable gadget between $s$ and the first vertex of $R_{i, 1}$ and add a $3k$-uncuttable gadget between the last vertex of $R_{i, n_i}$ with $t$.
    \end{enumerate}

    \noindent
    Observe that $G$ is connected.
    Now, we describe the construction procedure of the conflict graph $\widehat{G}$ such that $V(\widehat{G}) = E(G)$.
    We define the edge set $E(\widehat{G})$ as follows:
    \begin{enumerate}[(i)]
        \item For each $3k$-uncuttable gadget, the two edges of each length $2$ path are adjacent in $\widehat{G}$.
        Thus each gadget contributes a disjoint union of edges in $\widehat{G}$.
        \item The remaining vertices of $\widehat{G}$ correspond to the three edges $v_1 v_2, v_1 v_3, v_1 v_4$, of each rhombus corresponding to $v$ in $G$.
        In particular, for each $v \in V$, let
        $e_v^{2} := v_1v_2$, $e_v^{3} := v_1v_3$ and $e_v^{4} := v_1v_4$ denote these three (non-gadget) edges of the rhombus corresponding to $v$ in graph $G$.
        
        \item Now the collection $\{e_v^{2} \mid v \in V\}$ forms a graph isomorphic to $H_1$ and the collection $\{e_v^{3} \mid v \in V\}$ forms a graph isomorphic to $H_2$.
        	Equivalently, if $x, y \in V$ are adjacent in $H_1$, then we make $e_x^2$ and $e_y^2$ adjacent in $\widehat{G}$.
        	Similarly, if $x, y \in V$ are adjacent in $H_2$, then we make $e_x^3$ and $e_y^3$ adjacent in $\widehat{G}$. 
        	 
        \item Finally, for each $i \in [k]$, the edges $\{ e_v^{4} \mid v \in V_i\}$ forms a clique in $\widehat{G}$.
    \end{enumerate}
    
    Finally, we set $k' = 3k$. This completes the description of our reduction procedure and our {\mincfcut} instance is $(G, \widehat{G}, k')$.
    First, our next claim ensures that $\widehat{G}$ is a unit interval graph.

\begin{claim}
\label{claim:proper-interval-conflict}
    $\widehat{G}$ is a unit interval graph.
\end{claim}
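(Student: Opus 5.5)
We show that $\widehat{G}$ is a disjoint union of unit interval graphs; this suffices, since disjoint unions of unit interval graphs are unit interval (place the representations of the components on pairwise far-apart segments of the real line). The plan is to list the connected components of $\widehat{G}$ from the construction and check each one.

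First, the gadget edges. Every edge of a $3k$-uncuttable gadget lies on exactly one length-two path of that gadget. By item (i), it is adjacent in $\widehat{G}$ only to the other edge of that path. Items (ii)--(iv) add no adjacencies involving gadget edges. Hence every gadget edge lies in a component isomorphic to $K_2$, which is trivially unit interval.

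Next, the non-gadget edges, which we split into three classes according to the conflicts each class receives.
\begin{itemize}
\item The edges $e_v^2$ are adjacent only among themselves, and exactly as in $H_1$, by item (iii). So the subgraph they induce is isomorphic to $H_1$, which is unit interval by hypothesis.
\item The edges $e_v^3$ similarly induce a copy of $H_2$, which is unit interval by hypothesis.
\item The edges $e_v^4$ are adjacent exactly when the two vertices lie in the same part $V_i$, by item (iv). So they induce a disjoint union of $k$ cliques, and cliques are unit interval (take identical intervals).
\end{itemize}

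The main point to verify is that there are no conflict edges between these classes, nor between any class and the gadget edges. Inspecting items (i)--(iv), every conflict edge joins two edges from the same class: gadget with gadget in the same path, $e^2$ with $e^2$, $e^3$ with $e^3$, or $e^4$ with $e^4$. In particular, the three edges $e_v^2, e_v^3, e_v^4$ of a single rhombus are deliberately left non-adjacent in $\widehat{G}$. Therefore $\widehat{G}$ is the disjoint union of $H_1$, $H_2$, $k$ cliques and copies of $K_2$, and is thus a unit interval graph.
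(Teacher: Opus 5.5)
Your proof is correct and takes essentially the same approach as the paper. Both split $\widehat{G}$ into the copies of $K_2$ coming from the gadget paths, copies of $H_1$ and $H_2$ on the edges $e_v^2$ and $e_v^3$, and the cliques on $\{e_v^4 \mid v \in V_i\}$, and then use that unit interval graphs are closed under disjoint union. Your explicit check that no conflict edge joins two different classes is a slightly more careful version of the paper's remark that these vertex sets are pairwise disjoint.
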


\begin{claimproof}
    Consider a $3k$-uncuttable gadget.
    By construction, every $3k$-uncuttable gadget has $3k + 1$ edge-disjoint paths each of length exactly 2.
    And both the edges of each such length 2 path are adjacent in $\widehat{G}$.
    These vertices forms a disjoint union of edges in $\widehat{G}$.
    The remaining edges of $\widehat{G}$ are the edges of $H_1$, $H_2$ and the cliques corresponding to each partition $V_i$.
    Also, observe that the vertices of $H_1$, $H_2$, the cliques with vertices $e_v^4$ and the edges for the $3k$-uncuttable gadget are pairwise disjoint.
    That is, $\widehat{G}$ is isomorphic to the disjoint union of $H_1$, $H_2$, and a cluster graph $H^*$.
    Every component of this cluster graph $H^*$ is a clique based on the vertices $e_v^4 \in V_i$ or an edge whose endpoints are the two edges of a length 2 path of a $3k$-uncuttable gadget.
    As cluster graphs are unit interval graphs, and since unit interval graphs are closed under disjoint union, we conclude that $\widehat{G}$ is a unit interval graph.
\end{claimproof}

Before we prove the equivalence between two instances, we first justify some structural properties of our solution if $(G, \widehat{G}, k')$ is a {\yesinst}.

\begin{claim}
\label{claim:G-G-hat-property}
If $(G, \widehat{G}, k')$ is a {\yesinst}, then there exists a conflict-free edge cut $F$ of $G$ that 
\begin{itemize}
	\item disconnects $s$ from $t$, and
	\item for every $i \in [k]$, there exists a unique $u \in V_i$ such that $e_u^2, e_u^3, e_u^4 \in F$.
\end{itemize}
\end{claim}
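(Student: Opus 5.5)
The plan is to show that every conflict-free edge cut $F$ with $|F| \le k' = 3k$ already has both properties, so we simply take the solution $F$ guaranteed by the hypothesis. Let $(A,B)$ be the partition with $F = E_G(A,B)$. The first step is to show that every $3k$-uncuttable gadget, together with its endpoints and all its middle vertices, is \intra{} with respect to $(A,B)$.

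\begin{itemize}
\item \emph{Endpoints.} The gadget between $x$ and $y$ consists of $3k+1$ edge-disjoint $x$-$y$ paths. If $x$ and $y$ were \inter, $F$ would contain an edge from each of these paths. That gives $|F| \ge 3k+1 > k'$, which is impossible.
\item \emph{Middle vertices.} Once $x$ and $y$ lie on the same side, consider a middle vertex $m$ of a path $x m y$. If $m$ were on the other side, both $xm$ and $my$ would lie in $F$. These two edges are adjacent in $\widehat{G}$, contradicting that $F$ is independent.
\end{itemize}

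This is the same style of argument as \Cref{lemma:triangle-double-edge-monochromatic}.

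Next I will contract every gadget and describe the resulting multigraph $G^*$. The crossing edges of $(A,B)$ are then exactly the non-gadget edges $e_v^2, e_v^3, e_v^4$ whose endpoints are separated. In each rhombus, $v_2, v_3, v_4$ become a single vertex $\bar v$. This vertex is joined to $v_1$ by the bundle of three parallel edges $e_v^2, e_v^3, e_v^4$. Since the last vertex of $R_{i,j}$ is identified with the first vertex of $R_{i,j+1}$, the rhombi of $V_i$ form a path of bundles. The gadgets at the two ends merge the first vertex of $R_{i,1}$ with $s$ and the last vertex of $R_{i,n_i}$ with $t$. Hence $G^*$ consists of $k$ internally disjoint $s$-$t$ paths, one per class $V_i$, and each edge of these paths is a bundle of three parallel edges. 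A bundle is either entirely in $F$ or disjoint from $F$.

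I then use the clique on $\{e_v^4 \mid v \in V_i\}$ in $\widehat{G}$: $F$ can contain at most one bundle from the path of class $i$.

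\begin{itemize}
\item \emph{$F$ separates $s$ from $t$.} Suppose instead that $s$ and $t$ lie on the same side, say $A$. Because $(A,B)$ is a genuine cut, some vertex of $G^*$ lies in $B$, and it lies on the path of some class $i$. The maximal subpath of $B$-vertices containing it is bounded on both sides by bundles in $F$, since both ends $s,t$ are in $A$. That puts two $e^4$ edges of class $i$ into $F$, a conflict.
\item \emph{Exactly one bundle per class.} Since $s$ and $t$ are separated, every class-$i$ path contains at least one cut bundle. By the clique argument it contains at most one. Hence it contains exactly one bundle, say that of $u \in V_i$. This gives $e_u^2, e_u^3, e_u^4 \in F$, and $u$ is unique because any second such $u'$ would put $e_{u'}^4$ into $F$ as well, again conflicting.
\end{itemize}

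The main obstacle I expect is the middle-vertex step for gadgets. Without it, $G^*$ might not be a clean contraction and stray cut edges could appear. The pairwise conflicts inside each length-$2$ gadget path are exactly what rule these out.
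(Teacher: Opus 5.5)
Your proposal is correct and follows the same route as the paper. The budget $|F|\le 3k$ makes every $3k$-uncuttable gadget intra-partite, so each rhombus behaves like the bundle $\{e_v^2,e_v^3,e_v^4\}$, and the clique on $\{e_v^4 \mid v\in V_i\}$ forbids two cut rhombi in one class; this forces $s$ and $t$ apart and leaves exactly one cut rhombus per class. Your write-up is tidier than the paper's in two places: you explicitly rule out gadget middle vertices switching sides via the conflicts on each length-$2$ path, and you use a maximal-subpath argument where the paper only asserts informally which rhombus vertices lie on which side (the paper gets uniqueness from $|F|\le 3k$ rather than from the clique, but either works).
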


\begin{claimproof}
Suppose that $(G, \widehat{G}, k')$ is a {\yesinst}. 
Then there exists a set $F \subseteq E(G)$ of size at most $3k$ such that $G-F$ is disconnected and $F$ is independent in $\widehat{G}$.
Assume that $(A, B)$ is the cut of $G$ such that $F = E_G(A, B)$.
We prove the items in the given order.

For the first item, suppose for the sake of contradiction, that $F$ does not separate $s$ and $t$ in $G$. 
This implies that both $s$ and $t$ belong to the same connected component of $G-F$. 
Since $(A, B)$ is a cut of $G$, there exists $x_i \in V(G) \setminus \{s, t\}$ such that $x_i \in B$ for some $i \in \{1,2,3,4\} $.
Note that there are $3k$-uncuttable gadgets between $x_2$ and $x_4$ and between $x_3$ and $x_4$.
In addition, if $x$ is the vertex of $H_1$ corresponding to the first rhombus, then there is a $3k$-uncuttable gadget between $s$ and $x_1$.
Hence, it must be that $x_2, x_3, x_4 \in B$ but $x_1 \in A$.
Assume that $x \in V_i$.
But additionally, there is a different vertex $y \in V_i$ such that $y_2, y_3, y_4 \in A$ and $y_1 \in B$.
Then, $x_1 x_4, y_1 y_4 \in F$.
But note that both $x$ and $y$ are in $V_i$ and $x_1 x_4 = e_x^4, y_1 y_4 = e_y^4$ and $\{e_v^4 \mid v \in V_i\}$ forms a clique in $\widehat{G}$.
This leads to a contradiction that both $e_x^4, e_y^4 \in F$.
Therefore, $s$ and $t$ must be separated by $F$.

Now, we consider the second item.
Due to the first item, $s$ is separated from $t$ by a conflict-free edge cut $F$.
Now, observe that $|F| \leq 3k$.
As $s \in A$, $t \in B$, and for every other vertex $w$, there is a $3k$-uncuttable gadget between $w_2$ and $w_4$ as well as between $w_3$ and $w_4$.
Hence, either $w_2, w_3, w_4 \in A$ $A$ or all in $w_2, w_3, w_4 \in B$.
In addition, for every $i \in [k]$, there is a vertex $u \in V_i$ such that $u_1 \in A$ and $u_2, u_3, u_4 \in B$.
These facts imply that $e_u^2, e_u^3, e_u^4 \in F$.
As $|F| \leq 3k$, there can exist only one $u \in V_i$ for every $i \in [k]$ such that $e_u^2, e_u^3, e_u^4 \in F$.
Hence, the second item follows.
\end{claimproof}

\begin{claim}
    \label{claim:unitMIS-cfc-eqv}
    $(H_1, H_2, k, V_1, \ldots, V_k)$ is a {\yesinst} if and only if $(G, \widehat{G}, k')$ is a {\yesinst}.
\end{claim}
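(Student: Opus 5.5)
We prove the two directions separately, using \Cref{claim:G-G-hat-property} to handle the backward direction.

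\emph{Forward direction ($\Rightarrow$).} Let $S=\{u^1,\ldots,u^k\}$ be a solution of the {\UnitMIS} instance with $u^i\in V_i$. We take $F=\{e_{u^i}^2,e_{u^i}^3,e_{u^i}^4 \mid i\in[k]\}$, so $|F|=3k=k'$.

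First we check that $F$ is an edge cut. Each rhombus chain for $V_i$ is a path of rhombi joined at shared first/last vertices, and its only $s$-side to $t$-side passage through the rhombus of $u^i$ goes from $u^i_1$ to $\{u^i_2,u^i_3,u^i_4\}$. The vertices $u^i_2,u^i_3,u^i_4$ are connected to each other only via the gadgets, and $u^i_1$ reaches them only via the three edges $e^2,e^3,e^4$. Hence, taking $A$ to consist of $s$ together with every vertex of chain $i$ up to and including $u^i_1$, and $B$ the rest, $E_G(A,B)=F$. We verify that no gadget edge crosses: the gadgets attached to $s$ lie inside $A$ (the first vertex of $R_{i,1}$ is in $A$), and the gadgets attached to $t$ lie inside $B$, also in the boundary case where $u^i$ is the first or the last vertex of $V_i$.

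Next we check that $F$ is independent in $\widehat{G}$. The $e^4$-edges lie in distinct cliques, since there is one per $V_i$. The $e^2$-edges are pairwise non-adjacent because $S$ is independent in $H_1$, and the $e^3$-edges likewise because $S$ is independent in $H_2$. There are no conflicts across the types $e^2,e^3,e^4$, and $F$ contains no gadget edges.

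\emph{Backward direction ($\Leftarrow$).} By \Cref{claim:G-G-hat-property}, there is a conflict-free cut $F$ with $|F|\le 3k$ and, for each $i$, a unique $u^i\in V_i$ with $e_{u^i}^2,e_{u^i}^3,e_{u^i}^4\in F$. Set $S=\{u^1,\ldots,u^k\}$; then $|S\cap V_i|=1$ for every $i$. Since $F$ is independent in $\widehat{G}$ and the $e^2$-edges induce a copy of $H_1$, the set $S$ is independent in $H_1$. Similarly, the $e^3$-edges show that $S$ is independent in $H_2$.

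The main obstacle is rigorously justifying that the cut in the forward direction avoids all gadget edges, that is, correctly identifying the side $A$ in each chain. The cleanest route is to define $A$ explicitly as the component of $s$ in $G-F$ and verify that $t\notin A$.
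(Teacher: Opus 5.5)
Your proof is correct and follows the paper's argument. In the forward direction you use the same cut $F=\{e_u^2,e_u^3,e_u^4\mid u\in S\}$ with the same type-by-type independence check, and in the backward direction you make the same appeal to \Cref{claim:G-G-hat-property}. The one difference is that you exhibit the bipartition $(A,B)$ with $E_G(A,B)=F$ explicitly, including the boundary cases at the ends of a chain, whereas the paper simply asserts that $F$ separates $s$ from $t$.
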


\begin{claimproof}
First, we give the forward direction $(\Rightarrow)$ of the proof. 
Suppose that $(H_1, H_2, k, V_1, \ldots, V_k)$ is a {\yesinst}. 
Then there exists a set $S \subseteq V$ independent in both $H_1$ and $H_2$ such that for each $i \in [k]$, $|S \cap V_i| = 1$. 
Let $S = \{v_{1}, v_{2}, \ldots, v_{k}\}$ where $v_i \in V_i$ for each $i \in [k]$.
Now consider the set $F = \{ e_{v}^{2}, e_{v}^{3}, e_{v}^{4} \mid v \in S \} \subseteq E(G)$. 
Observe that $|F| = 3k = k'$ and $F$ separates $s$ and $t$ in $G$. Now we show that $F$ is independent in $\widehat{G}$. 
By construction, there are no edges of the form $e_u^{j} e_v^{\ell}$ in $\widehat{G}$ for $j \neq \ell$. 
Now since $S$ is independent in both $H_1$ and $H_2$, by step (iii) of the construction of $\widehat{G}$, there are no edges of the form $e_u^{2}e_v^{2}$ or $e_u^{3}e_v^{3}$ in $\widehat{G}$ for any $u, v \in S$. 
Finally, since $|S \cap V_i| = 1$ for each $i \in [k]$, step (iv) of the construction of $\widehat{G}$ ensures there are no edges of the form $e_u^{4}e_v^{4}$ in $\widehat{G}$ for any $u, v \in S$. 
Hence, $F$ is independent in $\widehat{G}$ and thus $(G, \widehat{G}, k')$ is a {\yesinst}.

For the backward direction $(\Leftarrow)$, 
Suppose that $(G, \widehat{G}, k')$ is a yes-instance.
Then, due to Claim \ref{claim:G-G-hat-property}, $s$ and $t$ are separated and for every $i \in [k]$, there exists a unique $w \in V_i$ such that $e_w^2, e_w^3, e_w^4 \in F$.
Now, we consider the set $S = \{w \in V(H_1) \mid e_w^2, e_w^3, e_w^4 \in F\}$.
Observe that $|S \cap V_i| = 1$.
Consider two arbitrary vertices $u, v \in S$.
Observe that $e_u^2 e_v^2 \notin E(H_1)$ as $F$ is conflict-free.
Similarly, $e_u^3 e_v^3 \notin E(H_2)$ as $F$ is conflict-free.
Hence, $S$ forms an independent set both in $H_1$ as well as in $H_2$.
Additionally, since $S$ satisfies the property that $|S \cap V_i| = 1$, it follows that $S$ is an independent set both in $H_1$ and $H_2$ such that $S$ intersects every $V_i$ in exactly one vertex.
This completes the proof that $(H_1, H_2, k, V_1, \ldots, V_k)$ is a {\yesinst}.
%
\end{claimproof}

Due to \Cref{prop:hardness-unit-MIS}, {\UnitMIS} is W[1]-hard and our reduction procedure constructs $(G, \widehat{G}, k')$ from $(H_1, H_2, V_1,\ldots, V_k)$ in polynomial-time such that $k' = 3k$.
Therefore, this procedure is a parameterized reduction.
Hence, {\mincfcut} is W[1]-hard when the conflict graph is a unit interval graph.
\end{proof}


\subparagraph*{Hardness when Conflict Graph is Bipartite.}
Our next result is based on a W-hardness reduction from {\MBC} \cite{cygan2015parameterized}.
We formally state the definition of {\MBC} as follows.

\defparproblem{{\MBC}}{A bipartite graph $H=(A \cup B, E)$, an integer $k$, a partition of $A$ into $k$ sets $A_1, A_2, \ldots, A_k$ and a partition of $B$ into $k$ sets $B_1, B_2, \ldots, B_k$.}{$k$}{Does there exist a subgraph of $H$ isomorphic to the biclique $K_{k,k}$ with one vertex from each of the sets $A_i$ and $B_i$?}

{\BipartiteHardness*}

\begin{proof}
    We give a reduction from {\MBC} problem defined as follows:
    
    \subparagraph*{Construction Procedure.}
    Given an instance $(H, k, A_1, \ldots, A_k, B_1, \ldots, B_k)$ of {\MBC}, we construct an instance $(G, \widehat{G}, k')$ of {\mincut} as follows: 
Our construction procedure has two parts, first procedure constructs the primary graph, and the second procedure constructs the conflict graph.
    
    We construct the primary graph $G$ as follows:
    \begin{enumerate}[(i)]
        \item We create two designated terminal vertices $s$ and $t$.
        \item For every $v \in V(H)$, we create four vertices $v_1, v_2, v_3$ and $v_4$. We add the edges $v_1 v_2$ and $v_1 v_3$ to $G$.
        Add a $4k$-uncuttable gadget both between $v_2$ and $v_4$, and between $v_3$ to $v_4$.
        We call the subgraph induced by $\{v_1, v_2, v_3, v_4\}$ together with the gadgets, one between $v_2$ and $v_4$ and the other between $v_3$ and $v_4$ as a {\em rhombus} corresponding to $v$.
        	We consider $v_1$ as the {\em first vertex} of the rhombus corresponding $v$ and $v_4$ as the {\em last vertex} of the rhombus corresponding to $v$.
        \item For each partition of $A$ and $B$, we create a series of rhombi. Specifically, for $A_i = \{a_{i,1}, a_{i,2}, \ldots, a_{i,n_i}\}$ and $B_i = \{b_{i,1}, b_{i,2}, \ldots, b_{i,m_i}\}$ where $i \in [k]$, we construct rhombi $R_{i,1}, R_{i,2}, \ldots, R_{i,n_i}$ and $R'_{i,1}, R'_{i,2}, \ldots, R'_{i,m_i}$ as in step $(ii)$. 
        We put the rhombi corresponding to the vertices of $A_i$ in the order $a_{i, 1},\ldots,a_{i, n_i}$ in such a way that the last vertex of $R_{i,j}$ identifies with the first vertex of $R_{i,j+1}$ for every $1 \leq j \leq n_i - 1$. 
        Similarly, we put the rhombi corresponding to the vertices of $B_i$ in the order $b_{i, 1},\ldots,b_{i, m_i}$ in such a way that the last vertex of $R'_{i,j}$ identifies with the first vertex of $R'_{i,j+1}$ for every $1 \leq j \leq m_i - 1$.
        
        \item Finally, for every $i \in [k]$, we add a $4k$-uncuttable gadget between $s$ and the first vertex of $R_{i, 1}$ and $R'_{i, 1}$ and add a $4k$-uncuttable gadget between the last vertex of $R_{i, n_i}$ and $R'_{i, m_i}$ with $t$.
    \end{enumerate}

    \noindent
    Observe that $G$ is connected.
    Now, we describe the construction procedure of the conflict graph $\widehat{G}$ such that $V(\widehat{G}) = E(G)$.
    We define the edge set $E(\widehat{G})$ as follows:
    \begin{enumerate}[(i)]
        \item For each $4k$-uncuttable gadget, the two edges of each length $2$ path are adjacent in $\widehat{G}$.
        Thus each gadget contributes a disjoint union of edges in $\widehat{G}$.
        \item The remaining vertices of $\widehat{G}$ correspond to the two edges $v_1 v_2, v_1 v_3$, of each rhombus corresponding to $v$ in $G$.
        In particular, for each $v \in V$, let
        $e_v^{2} := v_1v_2$ and $e_v^{3} := v_1v_3$ denote these two (non-gadget) edges of the rhombus corresponding to $v$ in graph $G$.
        \item We define the bipartite complement of a bipartite graph $H = (A \cup B, E)$ as $\overline{H} = (A \cup B, \overline{E})$, where $\overline{E} = \{ uv \mid u \in A, v\in B, uv \notin E \}$. Now the collection $\{ e_v^2 \mid v \in V(H)\}$ forms a graph, say $\overline{H}_2$, isomorphic to $\overline{H}$ and the collection $\{ e_v^3 \mid v \in V(H)\}$ forms a graph, say $\overline{H}_3$, isomorphic to $\overline{H}$. 
        Equivalently, if $x \in A, y \in B$ are non-adjacent in $H$, then we make $e_x^2$ and $e_y^2$ adjacent in $\widehat{G}$ and $e_x^3$ and $e_y^3$ adjacent in $\widehat{G}$. 
         
        \item Finally, for each $i \in [k]$:
        \begin{enumerate}
            \item $\{ e_u^2, e_v^3 \} \in E(\widehat{G})$ for all $u,v \in A_i, u \neq v$.
            \item $\{ e_u^2, e_v^3 \} \in E(\widehat{G})$ for all $u,v \in B_i, u \neq v$.
        \end{enumerate}
        
    \end{enumerate}

     \noindent
    We set $k' = 4k$. This completes the description of our reduction procedure and our {\mincfcut} instance is $(G, \widehat{G}, k')$.
    First, our next claim ensures that $\widehat{G}$ is a bipartite graph.
    
\begin{claim}
\label{claim:bipartite-conflict}
    $\widehat{G}$ is a bipartite graph.
\end{claim}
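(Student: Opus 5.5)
We will exhibit an explicit bipartition $(X, Y)$ of $V(\widehat{G}) = E(G)$ and check that every edge of $\widehat{G}$ goes between $X$ and $Y$. The natural candidate uses the two sides $A$ and $B$ of $H$, together with the superscripts $2$ and $3$. We let
\[
X = \{e_a^2 \mid a \in A\} \cup \{e_b^3 \mid b \in B\}, \qquad Y = \{e_a^3 \mid a \in A\} \cup \{e_b^2 \mid b \in B\}.
\]
The gadget edges are handled separately. For each length-$2$ path of each $4k$-uncuttable gadget, we put one of its two edges in $X$ and the other in $Y$. By construction step (i), these gadget edges span only a perfect matching in $\widehat{G}$ and have no neighbours outside it, so this placement is consistent.

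Next we go through the conflict edges of steps (iii) and (iv) one type at a time.

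\emph{Step (iii), superscript $2$.} An edge $e_x^2e_y^2$ with $x \in A$ and $y \in B$ (non-adjacent in $H$) joins $e_x^2 \in X$ to $e_y^2 \in Y$.

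\emph{Step (iii), superscript $3$.} An edge $e_x^3e_y^3$ with $x\in A$ and $y\in B$ joins $e_x^3 \in Y$ to $e_y^3 \in X$.

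\emph{Step (iv)(a).} An edge $e_u^2e_v^3$ with $u,v\in A_i$ joins $e_u^2\in X$ to $e_v^3\in Y$.

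\emph{Step (iv)(b).} An edge $e_u^2e_v^3$ with $u,v\in B_i$ joins $e_u^2 \in Y$ to $e_v^3\in X$.

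Since these are all the edges of $\widehat{G}$, no edge lies inside $X$ or inside $Y$.

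The step needing the most care is confirming that the listed edges really are all of $E(\widehat{G})$. In particular, no conflict edge has both endpoints on the same side of $H$ with the same superscript, and no edge mixes gadget edges with rhombus edges. This follows directly from reading off steps (i)--(iv): step (iii) only joins an $A$-vertex to a $B$-vertex, and step (iv) only joins superscript $2$ to superscript $3$ within the same side. Hence $(X,Y)$ is a proper bipartition, and $\widehat{G}$ is bipartite.
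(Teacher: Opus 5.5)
Your proof is correct and is essentially the paper's argument. The paper uses the bipartition $(\widehat{A_2}\cup\widehat{B_3},\ \widehat{A_3}\cup\widehat{B_2})$ of the rhombus edges, which is exactly your $(X,Y)$, and it treats the gadget edges as a separate $1$-regular (hence bipartite) part, just as you do by putting the two edges of each gadget path on opposite sides.
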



\begin{claimproof}
By definition of bipartite complement, $\overline{H}_2$ and $\overline{H}_3$ are bipartite graphs.
Let the partition of $\overline{H}_2$ be $(\widehat{A_2}, \widehat{B_2})$ and the partition of $\overline{H}_3$ be $(\widehat{A_3}, \widehat{B_3})$.
Now consider the partition $(\widehat{A_2} \cup \widehat{B_3}, \ \widehat{A_3} \cup \widehat{B_2})$ of vertices of $\widehat{G}[V(\overline{H}_2) \cup V(\overline{H}_3)]$.
Since $\overline{H}_2$ and $\overline{H}_3$ are bipartite, each of $\widehat{A_2}, \widehat{A_3}, \widehat{B_2}$ and $\widehat{B_3}$ are independent sets.
Now by the construction of $\widehat{G}$, vertices of $\widehat{A_2}$ are adjacent to vertices of $\widehat{A_3}$ and $E(\widehat{A_2}, \widehat{B_3}) = \emptyset$. Similarly, vertices of $\widehat{B_2}$ are adjacent to vertices of $\widehat{B_3}$ and $E(\widehat{B_2}, \widehat{A_3}) = \emptyset$. 
Hence, $\widehat{G}[V(\overline{H}_2) \cup V(\overline{H}_3)]$ is a bipartite graph with this partition. 
Also, the edges of each gadget form a $1$-regular graph in $\widehat{G}$. Since the disjoint union of bipartite graphs is bipartite, and $\widehat{G}$ is the edge-disjoint union of the graphs induced by the gadgets and $\widehat{G}[V(\overline{H}_2) \cup V(\overline{H}_3)]$, we conclude that $\widehat{G}$ is bipartite.  
\end{claimproof}

Before we prove the equivalence between two instances, we first justify some structural properties of our solution if $(G, \widehat{G}, k')$ is a {\yesinst}.

\begin{claim}
\label{claim:G-G-hat-property-bip}
If $(G, \widehat{G}, k')$ is a {\yesinst}, then every conflict-free edge cut $F$ of $G$ of size at most $4k$ satisfies the following properties:
\begin{itemize}
	\item disconnects $s$ from $t$, and
	\item for every $i \in [k]$, there exists a unique $u \in A_i$  and a unique $v \in B_i$ such that $e_u^2, e_u^3, e_v^2, e_v^3 \in F$.
\end{itemize}
\end{claim}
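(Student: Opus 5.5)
I will follow the template of \Cref{claim:G-G-hat-property} and use the uncuttable gadgets to force the structure. Let $F$ be a conflict-free edge cut with $|F| \le 4k$, and let $(A', B')$ be the corresponding cut, writing $A'$ for the side containing $s$. The key observation comes first. Each $4k$-uncuttable gadget between $x$ and $y$ consists of $4k+1$ edge-disjoint paths of length two. The two edges of each path conflict in $\widehat{G}$, so $F$ contains at most one edge of each path. If $x$ and $y$ were on different sides, every path would contribute an edge to $F$, giving $|F| \ge 4k+1$, a contradiction. The same contradiction follows from the budget alone. Hence the endpoints of every gadget are \intra. In particular, for every $v \in V(H)$ the vertices $v_2, v_3, v_4$ lie on the same side, and the middle vertices of each gadget path must also lie with its endpoints, since otherwise a single path would need both of its conflicting edges in $F$. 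Also, $s$ lies with the first vertices of all $2k$ chains, and $t$ with the last vertices.

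With this, the only edges that can lie in $F$ are the non-gadget edges $e_v^2, e_v^3$. Contracting the gadgets, $G$ becomes $2k$ internally disjoint $s$--$t$ chains. Along each chain, rhombus $v$ contributes the parallel pair $e_v^2, e_v^3$ between $v_1$ and the merged vertex $v_4 = (\text{next})_1$.

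\emph{First item.} Suppose $s$ and $t$ are on the same side, say $A'$. Since the cut is nontrivial, some vertex of some chain lies in $B'$; by the observation above we may take it to be a chain vertex. Walking along that chain from $s$ to $t$, the side changes at least twice. Each change occurs at some rhombus $v$ whose $v_1$ and $v_4$ are inter-partite, which forces $e_v^2, e_v^3 \in F$. So there are two distinct vertices $u \ne w$ in the same part ($A_i$ or $B_i$) with $e_u^2, e_u^3, e_w^2, e_w^3 \in F$. Construction step (iv) makes $e_u^2$ adjacent to $e_w^3$ in $\widehat{G}$, which contradicts independence. Hence $s \in A'$ and $t \in B'$.

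\emph{Second item.} Every chain now goes from the $s$-side to the $t$-side, so it switches sides an odd number of times, and in particular at least once. Each switch at rhombus $v$ puts both $e_v^2$ and $e_v^3$ into $F$. There are $2k$ chains, each contributing at least two edges, and $|F| \le 4k$. Therefore each chain switches exactly once, at a unique rhombus. This gives a unique $u \in A_i$ and a unique $v \in B_i$ for each $i$, as claimed. Uniqueness also follows independently from step (iv), as in the first item.

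The main subtlety is the bookkeeping in the observation: checking that gadget middle vertices and chain-identified vertices cannot be split cheaply, so that the problem genuinely reduces to the chain structure. I would state this as a separate small claim before the two items.
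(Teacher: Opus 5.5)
Your proof is correct and takes essentially the same route as the paper. A $4k$-uncuttable gadget can never be split, so a side change along an $A_i$- or $B_i$-chain can only happen inside a rhombus $v$ and forces $e_v^2,e_v^3$ into $F$; two such rhombi in one chain conflict by step (iv), giving the first item, and the budget of $4k$ spread over $2k$ chains gives the second. Your explicit handling of the gadget middle vertices and your count of side changes along each chain make rigorous what the paper's case analysis on the position of $x$ leaves implicit.
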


\begin{claimproof}
Suppose that $(G, \widehat{G}, k')$ is a {\yesinst}. 
Then there exists a set $F \subseteq E(G)$ of size at most $4k$ such that $G-F$ is disconnected and $F$ is independent in $\widehat{G}$.
Assume that $(L, R)$ is the cut of $G$ such that $F = E_G(L, R)$. 
We prove the items in the given order.


For the first item, suppose for the sake of contradiction, that $F$ does not separate $s$ and $t$ in $G$. 
This implies that both $s$ and $t$ belong to the same connected component of $G-F$. 
Without loss of generality, assume that $s,t \in L$. Since $(L, R)$ is a cut of $G$, there exists $x \in V(G) \setminus \{s, t\}$ such that $x \in R$. 
Without loss of generality, assume that $x$ is a vertex of the series of rhombi corresponding to the partition $A_1$ and for simplicity, let $A_1 = \{ a_1, a_2, \ldots, a_{n_1} \}$. Let $R_1, R_2, \ldots, R_{n_1}$ denote the corresponding series of rhombi in the primary graph $G$. 
Let $a_j \in A_1$ be the vertex corresponding to the rhombus $R_j$ such that $x$ is a vertex of $R_j$. 
Now if $x$ is the first vertex of $R_1$, then there is a $4k$-uncuttable gadget between $s$ and $x$. Therefore, $x$ cannot be the first vertex of $R_1$. Similarly, since the fourth vertex of $R_{n_1}$ has a $4k$-uncuttable gadget with the second vertex of $R_{n_1}$, the third vertex of $R_{n_1}$ and $t$; hence $x$ cannot be the second, third or fourth vertex of $R_{n_1}$. 
Now if $x$ is a first vertex of any rhombus other than $R_1$, then in order to separate $s$ and $x$ without cutting any $4k$-uncuttable gadgets, $F$ must contain $e_{a_i}^2$ and $e_{a_i}^3$ for some $a_i \in A_1$ such that $i < j$. 
Similarly, in order to separate $t$ and $x$ without cutting any $4k$-uncuttable gadgets, $F$ must contain $e_{a_i}^2$ and $e_{a_i}^3$ for some $a_i \in A_1$ such that $i \geq j$.
Now if $x$ is a second, third or fourth vertex of any rhombus other than $R_{n_1}$, then in order to separate $s$ and $x$ without cutting any $4k$-uncuttable gadgets, $F$ must contain $e_{a_i}^2$ and $e_{a_i}^3$ for some $a_i \in A_1$ such that $i \leq j$.
Similarly, in order to separate $t$ and $x$ without cutting any $4k$-uncuttable gadgets, $F$ must contain $e_{a_i}^2$ and $e_{a_i}^3$ for some $a_i \in A_1$ such that $i > j$.
In both cases, we have $F$ must contain $e_{a_i}^2$ and $e_{a_i}^3$ for some $a_i \in A_1$ and $e_{a_{\ell}}^2$ and $e_{a_{\ell}}^3$ for some $a_{\ell} \in A_1$ such that $i \neq \ell$. 
But then $\{e_{a_i}^2, e_{a_{\ell}}^3\} \in E(\widehat{G})$ and hence $F$ is not independent in $\widehat{G}$, which is a contradiction.

Now, we consider the second item.
Due to the first item, $s$ is separated from $t$ by a conflict-free edge cut $F$.
Now, observe that $|F| \leq 4k$.
As $s \in L$, $t \in R$, and for every other vertex $w$, there is a $4k$-uncuttable gadget between $w_2$ and $w_4$ as well as between $w_3$ and $w_4$.
Hence, either $w_2, w_3, w_4 \in L$  or all in $w_2, w_3, w_4 \in R$.
In addition, for every $i \in [k]$, there is a vertex $u \in A_i$ such that $u_1 \in L$ and $u_2, u_3, u_4 \in R$ and there is a vertex $v \in B_i$ such that $v_1 \in R$ and $v_2, v_3, v_4 \in L$.
These facts imply that $e_u^2, e_u^3, e_v^2, e_v^3 \in F$.
As $|F| \leq 4k$, there can exist only one $u \in A_i$ and only one $v \in B_i$ for every $i \in [k]$ such that $e_u^2, e_u^3, e_v^2, e_v^3 \in F$.
Hence, the second item follows.
\end{claimproof}

\begin{claim}
\label{claim:mbc-cfc-eqv}
    $(H, k, A_1, \ldots, A_k, B_1, \ldots, B_k)$ is a {\yesinst} if and only if $(G, \widehat{G}, k')$ is a {\yesinst}.
\end{claim}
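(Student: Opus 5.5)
The equivalence is proved one direction at a time. For each direction, Claim~\ref{claim:G-G-hat-property-bip} is used to pin down the shape of any solution, and the conflict edges of $\widehat{G}$ are then read off against the biclique condition.

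For the forward direction ($\Rightarrow$), suppose $H$ has a multicolored biclique with vertices $a_i \in A_i$ and $b_i \in B_i$ for $i \in [k]$. Take
\[
F = \{e_{a_i}^2, e_{a_i}^3, e_{b_i}^2, e_{b_i}^3 \mid i \in [k]\},
\]
so $|F| = 4k = k'$.

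\emph{$F$ is a cut.} In the rhombus of $a_i$, put $s$ and everything up to and including the first vertex $(a_i)_1$ on one side. Put $(a_i)_2,(a_i)_3,(a_i)_4$ and the rest of the chain up to $t$ on the other side. Since the rhombus has only the two edges $e^2, e^3$ leaving $v_1$ (everything else is gadgets towards $v_4$), removing $e_{a_i}^2, e_{a_i}^3$ disconnects the $A_i$-chain. The same holds for each $B_i$-chain, so $F$ separates $s$ from $t$.

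\emph{$F$ is conflict-free.} Gadget edges are not in $F$. Conflicts of type (iii) join $e_x^2,e_y^2$ (or $e_x^3,e_y^3$) only when $x\in A$, $y\in B$ are non-adjacent in $H$. Every pair $a_i,b_j$ is adjacent because the chosen vertices form a biclique, so no such conflict lies inside $F$. Conflicts of type (iv) join $e_u^2,e_v^3$ only for distinct $u,v$ in the same part $A_i$ or $B_i$. $F$ picks exactly one vertex per part, and the pair $e_u^2,e_u^3$ for the same $u$ is not a conflict, so no such conflict lies inside $F$ either.

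For the backward direction ($\Leftarrow$), let $F$ be a conflict-free cut with $|F|\le 4k$. By Claim~\ref{claim:G-G-hat-property-bip}, for each $i$ there is a unique $u_i\in A_i$ and a unique $v_i \in B_i$ with $e_{u_i}^2,e_{u_i}^3,e_{v_i}^2,e_{v_i}^3\in F$. Let $S=\{u_i,v_i \mid i\in[k]\}$; it picks one vertex from each part. Suppose some $u_i$ and $v_j$ were non-adjacent in $H$. Then $e_{u_i}^2e_{v_j}^2$ would be an edge of $\widehat{G}$ by construction (iii), and both endpoints lie in $F$, contradicting independence. Hence every $u_i$ is adjacent to every $v_j$, and $H[S]$ contains the required multicolored $K_{k,k}$.

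The main obstacle is the forward direction's cut argument. We must check that cutting only $e^2_v,e^3_v$ in a single rhombus per chain really disconnects the chain: there must be no other edge from $v_1$, and the gadgets must sit only between $v_2,v_3$ and $v_4$. We must also check that the $A_i$- and $B_i$-chains, which share $s$ and $t$ through gadgets, all end up consistently on the two sides. This is where the side assignment is done: vertices before the cut rhombus (and its first vertex) go with $s$, those after go with $t$.
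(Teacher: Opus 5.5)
Your proof is correct and follows the paper's argument. In the forward direction you take the same $4k$ edges and rule out conflicts of types (iii) and (iv) using the biclique property and the one-vertex-per-part choice; in the backward direction you invoke Claim~\ref{claim:G-G-hat-property-bip} and derive a contradiction from the conflict $\{e_{u_i}^2,e_{v_j}^2\}$, exactly as the paper does (your explicit side assignment showing that $F$ is an $s$--$t$ cut only fills in a step the paper merely asserts).
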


\begin{claimproof}
    First, we give the forward direction $(\Rightarrow)$ of the proof. Suppose $H$ has a multicolored biclique. Then there exists a subgraph of $H$ isomorphic to the biclique $K_{k,k}$ with one vertex from each of the sets $A_i$ and $B_i$. Let that subgraph be induced by the vertices $\{a_1, a_2, \ldots, a_k\} \cup \{b_1, b_2, \ldots, b_k\}$ where $a_i \in A_i$ and $b_i \in B_i$. 
    Now we construct a conflict-free cut in $(G, \widehat{G})$ as follows. For each rhombus in $\widehat{G}$ corresponding to the vertex $a_i$, we select the edges $e_{a_i}^2$ and $e_{a_i}^3$ into the cut. Similarly, for each rhombus corresponding to the vertex $b_i$, we select the edges $e_{b_i}^2$ and $e_{b_i}^3$ into the cut. Therefore, $F = \{e_{a_i}^2, e_{a_i}^3, e_{b_i}^2, e_{b_i}^3 \mid  i \in [k]\} \subseteq E(G)$ is a cut of size exactly $4k$ as it separates $s$ from $t$.
    Now we show that $F$ is independent in $\widehat{G}$. Observe that $F \subseteq V(\overline{H}_2) \cup V(\overline{H}_3)$ and since $\widehat{G}[V(\overline{H}_2) \cup V(\overline{H}_3)]$ is a bipartite graph with the partition $(\widehat{A_2} \cup \widehat{B_3}, \ \widehat{A_3} \cup \widehat{B_2})$, the only possible edges among the vertices of $F$ in $\widehat{G}$ are: 
    \begin{enumerate}[(i)]
        \item $\{e_{a_i}^2, e_{a_j}^3\}$ for some $i,j \in [k]$.
        \item $\{e_{b_i}^2, e_{b_j}^3\}$ for some $i,j \in [k]$.
        \item $\{e_{a_i}^2, e_{b_j}^2\}$ for some $i,j \in [k]$.
        \item $\{e_{b_i}^3, e_{a_j}^3\}$ for some $i,j \in [k]$.
    \end{enumerate}
Now for case (i) and (ii), the step $(iv)$ of the construction of $\widehat{G}$ ensures that there are no such edges in $\widehat{G}$ as either $a_i$ and $a_j$ lies in different partitions of $A$ or $a_i = a_j$ (equivalently, $b_i$ and $b_j$ lies in different partitions of $B$ or $b_i = b_j$) and in both cases, $e_{a_i}^2$ and $e_{a_j}^3$ (equivalently, $e_{b_i}^2$ and $e_{b_j}^3$) are non-adjacent in $\widehat{G}$ for all $i,j \in [k]$. For case (iii) and (iv), the step $(iii)$ of the construction of $\widehat{G}$ ensures that there are no such edges in $\widehat{G}$ as $a_i b_j \in E(H)$ for all $i,j \in [k]$ and hence, $e_{a_i}^2$ and $e_{b_j}^2$ (equivalently, $e_{b_i}^3$ and $e_{a_j}^3$) are non-adjacent in $\widehat{G}$ for all $i,j \in [k]$. Therefore, $F$ is independent in $\widehat{G}$. Hence, $F$ is a conflict-free cut of size at most $4k$ in $(G, \widehat{G})$.
     
    For the backward direction $(\Leftarrow)$, suppose $(G, \widehat{G}, k')$ is a {\yesinst}. Then there exists a conflict-free cut $F$ of size at most $4k$ in $(G, \widehat{G})$. 
Then, due to \Cref{claim:G-G-hat-property-bip}, $s$ and $t$ are separated and for every $i \in [k]$, there exists a unique $u \in A_i$  and a unique $v \in B_i$ such that $e_u^2, e_u^3, e_v^2, e_v^3 \in F$.
 Now for each series of rhombi corresponding to the partition $A_i$, let the two edges in $F$ be $e_{a_i}^2$ and $e_{a_i}^3$ for some vertex $a_i \in A_i$. Similarly, for each series of rhombi corresponding to the partition $B_i$, let the two edges in $F$ be $e_{b_i}^2$ and $e_{b_i}^3$ for some vertex $b_i \in B_i$. Now we claim that $\{a_1, a_2, \ldots, a_k\} \cup \{b_1, b_2, \ldots, b_k\}$ induces a complete bipartite graph in $H$.
    Assume for the sake of contradiction that there exists $i,j \in [k]$ such that $a_i b_j \notin E(H)$. Then by the construction of $\widehat{G}$, we have $\{e_{a_i}^2, e_{b_j}^2\} \in E(\widehat{G})$ and $\{e_{b_j}^3, e_{a_i}^3\} \in E(\widehat{G})$. This contradicts the fact that $F$ is independent in $\widehat{G}$. 
    Since for every $i \in [k]$, $a_i \in A_i$ and $b_i \in B_i$, we conclude that $\{a_1, a_2, \ldots, a_k\} \cup \{b_1, b_2, \ldots, b_k\}$ induces a multicolored biclique in $H$.
\end{claimproof}

SInce {\MBC} is W[1]-hard and our reduction procedure constructs $(G, \widehat{G}, k')$ from $(H, k, A_1, \ldots, A_k, B_1, \ldots, B_k)$ in polynomial-time such that $k' = 4k$.
Therefore, this procedure is a parameterized reduction.
Hence, {\mincfcut} is W[1]-hard when the conflict graph is a bipartite graph.

\end{proof}

\section{Negative Kernelization Result}
\label{sec:neg-ker-res}

In this section, we complement our fixed-parameter tractability result of \Cref{sec:solution-size-results}.
Particularly, we provide that unless {\nka}, {\mincfcut} admits no polynomial kernel when parameterized by the smallest number of edges crossing the conflict-free edge cut even when the conflict graph is bounded degenerate.
Additionally, we also provide the nonexistence of polynomial kernel for {\mincfcut} when parameterized by the vertex cover number of the input graph.
Our result uses cross-composition framework as illustrated in \Cref{sec:prelims}.

\subsection{Parameterized by Solution Size when Conflict Graph is 1 regular}

We now present the cross-composition from {\cfcut} to {\mincfcut} parameterized by the sum of the solution size and the vertex integrity of the primary graph even when the conflict graph is 1-regular. 
This establishes the non-existence of a polynomial kernel for the latter problem, even under the structural restriction that the conflict graph is 1-regular. 



\begin{lemma}
\label{lemma:cross-compo-sol-size}
Given $2^t$ distinct instances $(G_1,\widehat{G_1}),\ldots,(G_{2^t},\widehat{G_{2^t}})$ of {\cfcut} such that for every $i$, $|V(G_i)|=n$, $G_i$ is 3-degenerate and $\widehat{G_i}$ is 1-regular, there exists a polynomial-time algorithm that constructs an instance $(H, \widehat{H})$ of {\cfcut} such that:
\begin{enumerate}[(i)]
    \item The primary graph $H$ is 3-degenerate and the conflict graph $\widehat{H}$ is 1-regular.
    \item $(H, \widehat{H})$ is a {\yesinst} if and only if there exists an $i \in [2^t]$ such that $(G_i, \widehat{G_i})$ is a {\yesinst}.
    \item Moreover, suppose that $(A, B)$ is a conflict-free edge cut of $H$ with the smallest possible number of edges.
    Then, there exists $i \in [2^t]$ such that the edge cut $E_H(A, B)$ is also a conflict-free edge cut of smallest cardinality of $G_i$.
    \item The vertex integrity of $H$ and the minimum number of edges of any conflict-free edge cut of $H$ if exists is at most polynomial in $n + t$.
\end{enumerate}
\end{lemma}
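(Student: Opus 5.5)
The construction is an OR-composition that glues the $2^t$ instances together so that any conflict-free cut of the combined graph is forced to live inside a single instance. Take disjoint copies of $G_1,\ldots,G_{2^t}$ together with their conflict graphs $\widehat{G_i}$. Add one new hub vertex $c$. For every $i$ and every $v\in V(G_i)$, join $v$ to $c$ by a bundle of length-two paths $v\,x\,c$, where $x$ is a fresh subdivision vertex. Within each path, the two edges $vx$ and $xc$ are made adjacent in $\widehat{H}$, and they have no other conflicts. The conflict graph $\widehat{H}$ is then the disjoint union of the $\widehat{G_i}$'s and these matched pairs. Since every vertex lies in exactly one conflict edge, $\widehat{H}$ is $1$-regular.

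The primary graph $H$ stays $3$-degenerate. Each subdivision vertex $x$ has degree $2$, so we peel these off first. What remains is the disjoint union of the $G_i$'s plus the isolated vertex $c$, which is $3$-degenerate.

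Vertex integrity is bounded as follows. Deleting $c$ leaves components of the form $G_i$ together with its pendant subdivision vertices. Each such component has at most $n+n\cdot p$ vertices, where $p$ is the number of paths per bundle. So ${\sf vi}(H)\le 1+n(p+1)$.

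The main obstacle is the cut analysis, and the central tool is a pendant-path argument. Let $(A,B)$ be a conflict-free cut of $H$. Each path $v\,x\,c$ cannot have both of its edges cut, since those two edges conflict. Hence every vertex $v$ is connected to $c$ unless all $p$ of its paths are cut. Suppose we choose $p$ to exceed the size of a minimum conflict-free cut of any $G_i$. There are two cases.

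\emph{Case 1: the cut separates some vertex $x$ or $v$ from $c$ by cutting hub paths.} Then it uses at least $p$ edges. (A subdivision vertex $x$ is never separable alone, since that would require cutting both of its edges.)

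\emph{Case 2: otherwise.} Every vertex of every $G_i$ lies on $c$'s side, so the cut is empty, a contradiction.

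This exposes a flaw in the hub-and-bundle design: if all vertices are attached to $c$ by uncuttable bundles, then no cut that stays inside $G_i$ exists at all. I therefore modify the construction. For each $i$, fix a root $r_i\in V(G_i)$ and attach only $r_i$ to $c$, by a single plain edge $r_ic$. These plain edges must still be matched in $\widehat{H}$. I pair each $r_ic$ with a pendant edge $cy_i$, where $y_i$ is a new vertex joined to $c$ by $p$ two-edge conflict paths, which makes $y_i$ effectively uncuttable from $c$.

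Now classify the possible cuts of $H$.
\begin{enumerate}
\item A cut $F$ of $G_i$ that leaves $r_i$ connected to $c$. It remains a cut of $H$, is conflict-free, and has the same size as in $G_i$.
\item A cut isolating the whole of $G_i$ by cutting $r_ic$. This costs one edge, so I must forbid it. The fix is to make each $r_i$ linked to $c$ by two parallel subdivided paths that mutually conflict. Then isolating $G_i$ requires cutting edges that conflict, so by \Cref{lemma:triangle-double-edge-monochromatic} $r_i$ and $c$ are always \intra.
\item Cuts spanning several $G_i$'s. Since every $r_i$ is \intra\ with $c$, any component not containing $c$ lies inside a single $G_i$.
\end{enumerate}

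With this forcing, every conflict-free cut of $H$ contains a conflict-free cut of some $G_i$. Moreover, every conflict-free cut of a $G_i$ lifts to $H$ unchanged, because the only edges through which $G_i$ is attached are at $r_i$, and the side containing $r_i$ stays with $c$. This gives (ii).

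For (iii), take a minimum conflict-free cut of $H$. Its $c$-free side lies in one $G_i$, so restricting the cut to $G_i$ gives a conflict-free cut of $G_i$. This restriction is minimum for $G_i$, because any smaller conflict-free cut of $G_i$ would lift to a smaller cut of $H$. For (iv), the minimum cut size is at most $|E(G_i)|=O(n^2)$, which is polynomial in $n$.

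It remains to re-check that $\widehat{H}$ is $1$-regular after these modifications. The two parallel paths from $r_i$ to $c$ contain four edges in total. I pair them as the two first edges together and the two second edges together. Under this pairing, cutting $r_i$ off from $c$ would require cutting one edge on each path. If those two edges occupy the same position, they conflict. If they occupy different positions, the separation isolates only a subdivision vertex, which is itself a cut. That cut must be excluded as well. To do so I use a triangle $r_i\,c\,z$ where the edges are suitably conflicted, and verify the resulting degeneracy and integrity bounds.
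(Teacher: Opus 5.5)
There is a genuine gap: the proposal never reaches a working construction, and the gadget you finally settle on breaks (ii).

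Take the two parallel paths $r_i\,a\,c$ and $r_i\,b\,c$ with conflicts $\{r_ia,r_ib\}$ and $\{ac,bc\}$.
\begin{itemize}
\item Deleting $\{r_ia,bc\}$ leaves the edges $r_ib$ and $ac$, so it separates $V(G_i)\cup\{b\}$ from $c$. These two edges occupy different positions, yet they do not isolate a subdivision vertex, contrary to your claim. They also do not conflict.
\item Likewise, $\{r_ia,ac\}$ isolates $a$ without conflict.
\end{itemize}
So $H$ has a conflict-free cut whether or not any $G_i$ is a yes-instance.

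No other pairing rescues this gadget. Each of the six pairs of edges of the $4$-cycle $r_i\,a\,c\,b$ is an edge cut of $H$, and a $1$-regular $\widehat{H}$ can forbid at most two of them. The triangle patch cannot rely on \Cref{lemma:triangle-double-edge-monochromatic} either. That lemma needs the three triangle edges to be pairwise conflicting, which gives them degree $2$ in $\widehat{H}$ and contradicts $1$-regularity.

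More fundamentally, your size-based ``uncuttable'' reasoning (bundles of $p$ paths, the vertices $y_i$) is irrelevant for (ii), because {\cfcut} has no budget. For example, cutting $cy_i$ together with the edge at $y_i$ on each of its $p$ paths isolates $y_i$. This cut is conflict-free however large $p$ is.

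The missing idea is that no gadget is needed. Forcing $r_i$ to lie on $c$'s side is exactly what identifying the roots achieves. The paper identifies $w_1,\ldots,w_{2^t}$ into a single vertex $w$ and lets $\widehat{H}$ be the disjoint union of the $\widehat{G_i}$, so $1$-regularity is immediate.

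Since $w$ is a cut vertex and lies on one side of any cut, a conflict-free cut of $H$ restricts to a conflict-free cut of every $G_i$ containing one of its crossing edges, and at least one such $i$ exists. Conversely, a conflict-free cut of $G_j$ lifts to $H$ by placing all other $G_i$ on $w$'s side. Your arguments for (iii) and (iv) then go through essentially verbatim, and ${\sf vi}(H)\le n$ follows by deleting $w$.

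One point to watch: $w_i$ must be chosen as the first vertex of an ordering of $G_i$ in which every vertex has at most three earlier neighbours. Gluing at arbitrary vertices can increase the degeneracy.
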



\begin{proof}
Suppose that there are $2^t$ distinct instances $(G_1,\widehat{G_1}),\ldots,(G_{2^t},\widehat{G_{2^t}})$ of {\cfcut} such that for every $i$, $|V(G_i)|=n$, $G_i$ is 3-degenerate and $\widehat{G_i}$ is 1-regular.
    We construct a single instance $(H, \widehat{H})$ of \CFCut\ as follows, where $H$ and $\widehat{H}$ is the primary graph and the conflict graph, respectively.
    The construction proceeds as follows:
    \begin{itemize}
        \item \textbf{Primary graph $H$:} Start with the disjoint union of the graphs $G_1, \dots, G_{2^t}$.
        As every $G_i$ is 3-degenerate, we consider the degeneracy sequence of $G_i$.
        For each $i \in [2^t]$, choose {the first vertex $w_i$ in the 3-degeneracy sequence of $G_i$}.
        Identify {the vertex subset $\{w_1,\ldots,w_{2^t}\}$} into a single vertex $w$.
        Formally,
        \[
            V(H) = \{w\} \cup \bigcup_{i\in [2^t]} \left( V(G_i) \setminus \{w_i\} \right),
        \qquad
            E(H) = \bigcup_{i\in [2^t]} E(G_i - w_i) \cup \{wx \mid xw_i \in E(G_i)\},
        \]
        \item \textbf{Conflict graph $\widehat{H}$:} The vertex set of $\widehat{H}$ is precisely $E(H)$.
        	{If $xw_i$ is adjacent to $uv$ in $\widehat{G}_i$, then make $xw$ and $uv$ adjacent in $\widehat{H}$.
        	Additionally, if $xy$ and $uv$ are adjacent in $\widehat{G_i}$ such that $w_i \neq x, y, u, v$, then $xy$ and $uv$ are adjacent in $\widehat{H}$.}
    \end{itemize}
    This construction is clearly computable in time polynomial in the sum of the size of all the instances.
    We prove the items in the given order.
%
%
%
    %
    
    {\bf Proof of (i):}
    Let us prove the first item that the primary graph $H$ is 3-degenerate and conflict graph $\widehat{H}$ is 1-regular.
    
    {First, we argue that $H$ is 3-degenerate.
    Note that every $G_i$ is 3-degenerate and $w_i$ is the first vertex of $G_i$ in the 3-degeneracy sequence.
    We consider the following sequence of $V(H)$ as follows.
    First, we put $w$ in the sequence.
    Next, we put the vertices of $G_1$ in the same order as the 3-degeneracy sequence of $G_1$.
    After that, for each $i = 2,3,\ldots,2^t$ in this order, we put the vertices of $G_i$ as in the same order as per the 3-degeneracy sequence of $G_i$.
    Observe that every vertex $u$ other than $w$ is present only in a unique $G_i$.
    Additionally, $u$ is only adjacent to at most 3 vertices (including $w$) that appears before $u$ in the sequence.
    As $w$ is the first vertex in the sequence, this sequence satisfies the property of being 3-degenerate.
    Therefore, $H$ is 3-degenerate.
    
    Now, we consider $\widehat{H}$.
    It is sufficient to argue that any edge of $H$ is adjacent to exactly one other edge of $H$.
    First, we consider the case for an edge $uv \in E(H)$ such that $w \neq u, v$.
    Note that $uv$ is present in a unique $G_i$.
    Observe that in the graph $\widehat{G_i}$, $uv$ is adjacent to only one edge of $G_i$, which is $w_ix$ or some other edge for which $w_i$ is not an endpoint.
    If $uv$ is adjacent to $w_ix$ in $\widehat{G_i}$, then $uv$ is adjacent to only $wx$ in $\widehat{H}$.
    Otherwise, $uv$ is adjacent to the unique edge of $H$ in which $w_i$ is not an endpoint.
    Now, we consider an edge $xw \in E(H)$.
    Note that this edge is represent by $xw_i \in E(G_i)$ for a unique $G_i$, and in the graph $\widehat{G_i}$, $xw_i$ is adjacent to a unique edge ($yw_i$ or an edge for which $w_i$ is not an endpoint).
    Then, $xw$ is also adjacent to a unique edge that is $yw$ or an edge for which $w_i$ is not an endpoint.
    Hence, every edge of $H$ is adjacent to a unique vertex in $\widehat{H}$.
    Therefore, $\widehat{H}$ is 1-regular.}

    {\bf Proof of (ii):}
    Now, we prove the second item. 
    We prove the two implications; forward direction ($\Rightarrow$) and backward direction ($\Leftarrow$) separately.
        For the sake of forward $(\Rightarrow)$ direction proof, suppose that $(H, \widehat{H})$ is a {\yesinst} and let $F \subseteq E(H)$ be a conflict-free cut with the smallest number of edges.
        It implies that there exists a nontrivial bipartition $(A, B)$ of $V(H)$ such that $E_H(A, B)$ is independent in $\widehat{H}$.
        For each $i \in [2^t]$, let $A_i = A \cap V(G_i)$ and $B_i = B \cap V(G_i)$.
        Note that $(A_i, B_i)$ is precisely $F \cap E(G_i)$ is a valid edge cut of $G_i$ when both $A_i, B_i \neq \emptyset$.
        Suppose that $F_i = F\cup E(G_i)$.
        Since $F$ is independent in $\widehat{H}$ and conflicts are only contained within each individual $\widehat{G_i}$, it follows that each $F_i$ is independent in $\widehat{G_i}$.

        It remains to show that at least one $F_i$ is a \emph{valid} edge cut of $G_i$, meaning that the bipartition is nontrivial.
        Suppose, for the sake of contradiction, that for every $i \in [2^t]$, $F_i = \emptyset$.
        Since each $G_i$ is connected, the absence of crossing edges implies that $V(G_i)$ is {\intra}; i.e., either $V(G_i) \subseteq A$ or $V(G_i) \subseteq B$.
        However, for all $G_i$'s when looked at in $H$ share the single vertex $w$ (which is the identification of the $\{w_i \mid i \in [2^t]\}$).
        Since $w$ belongs to a single side of the partition, say $w \in A$, it follows that for every $i$, the entire vertex set $V(G_i)$ is contained in $A$.
        Thus, $A = V(H)$ and $B = \emptyset$, contradicting the assumption that $E_H(A, B)$ is {nonempty}.
        Therefore, there exists an index $j \in [2^t]$ such that $F_j \neq \emptyset$.

        Since $F_j \neq \emptyset$, the bipartition $(A_j, B_j)$ of $G_j$ is nontrivial, so $F_j$ is an edge cut of $G_j$.
        Moreover, $|F_j| \le |F|$ and $F_j$ is independent in $\widehat{G_j}$.
        Hence, $(G_j, \widehat{G_j}, k)$ is a \yesinst\ of {\CFCut}.

        Now, for the sake of backward $(\Leftarrow)$ direction, suppose that there exists $j \in [2^t]$ such that $(G_j, \widehat{G_j})$ is a \yesinst.
        Then, there exists a bipartition $(A_j, B_j)$ of $V(G_j)$ such that $E_{G_j}(A_j, B_j) \neq \emptyset$ and is independent in $\widehat{G_j}$.
        We consider the $j \in [2^t]$ such that $E_{G_j}(A_j, B_j) \neq \emptyset$ is smallest among all possible conflict-free cuts in $G_1,\ldots,G_{2^t}$.
        Without loss of generality, assume $w_j \in A_j$.
        We extend this to a bipartition $(A, B)$ of $V(H)$:
        \begin{itemize}
            \item Assign $w$ to $A$.
            Subsequently, for all vertices $v \in V(G_j)$, assign $v$ to $A$ if and only if $v \in A_j$.
           	Otherwise, assign $w$ to $B$.
            \item For every $i \neq j$, since the entire subgraph $G_i$ is attached to the rest of $H$ only through the identified vertex $w$, assign every vertex of $V(G_i) \setminus \{w_i\}$ to $A$.
        \end{itemize}
        
        Now, consider the edge-cut $F = E_H(A, B)$.
        The only edges in $H$ that cross between $A$ and $B$ are exactly the edges the cut $(A_j, B_j)$, because all other $G_i$'s are placed entirely within side $A$.
        Hence, $F = F_j$, {which is a conflict-free cut of the smallest cardinality that $H$ can have}.
        As $F_j$ is independent in $\widehat{G_j}$, $F$ is independent in $\widehat{H}$.
        Therefore, $(A, B)$ is a valid conflict-free edge cut of $H$, and $(H, \widehat{H})$ is a \yesinst.
        This completes the proof of the claim.
    Now, we consider the third and fourth item.
    
    {\bf Proof of (iii):}
    Suppose that $(H, \widehat{H})$ is a {\yesinst} and $(A, B)$ denote a conflict-free edge cut with smallest number of edges crossing the cut.
    In the proof argument of item-(ii), we have already explained that there exists $i \in [2^k]$ such that $(G_i, \widehat{G_i})$ that has a nonempty conflict-free cut.
    We consider every $i \in [2^t]$ such that the cut $(A, B)$ restricted to $G_i$ which is $(A_i, B_i)$ provides a nonempty edge cut.
    Note that every such $(A_i, B_i)$ provides a nonempty conflict-free edge cut.
    Among all such restrictions to $(G_i, \widehat{G_i})$, we choose the one $j \in [2^t]$ such that the restriction $(A_j, B_j)$ provides the smallest number of edges in the corresponding edge-cut.
    Observe that keeping $A_j$ on one side and every other vertex of $H$ on the other side also provides a conflict-free cut of $H$ and it contains the same number of edges as $E_G(A_j, B_j)$.
    In case $E_G(A_j, B_j) \subsetneq E_H(A, B)$, then this contradicts our choice of $(A, B)$ as a conflict-free cut of $H$ with the smallest number of edges.
    This completes the proof of item-(iii).
    
    {\bf Proof of item-(iv):}
    It remains to argue that both the vertex integrity of $H$ and the minimum number of edges crossing any conflict-free cut of $H$ is indeed upper bounded by polynomial in $n + t$.
    Now, we consider the vertex integrity of $H$.
    Observe that $H$ has a cut vertex $w$ and every connected component of $H - \{w\}$ has at most $n-1$ vertices.
    Hence, the vertex integrity of $H$ is at most $n$ which is at most polynomial in the maximum possible size of all the instances and $t$.
    Additionally, the minimum number of edges crossing any conflict-free cut $H$ is also upper bounded by the maximum possible size of a conflict-free cut of among all the graphs $G_j$s.
    Hence, all our claims hold.
\end{proof}

Now, using the above result, we prove the following statement.

{\SolSizeNoKernelCFC*}


\begin{proof}
Note that {\CFCut} is NP-hard when the input graph is 3-degenerate and the conflict graph is 1-regular.
We prove this statement by providing a cross composition of {\CFCut} to {\mincfcut}.
An instance $(G, \widehat{G})$ of {\CFCut} is {\em malformed} if $E(G) \neq \widehat{G}$, and is {\em well-formed} otherwise.
We define an equivalence relation in which all malformed instances are equivalent.
Subsequently, two well-formed instances $(G_i, \widehat{G_i})$ and $(G_j, \widehat{G_j})$ are equivalent if $|V(G_i)| = |V(G_j)|$.
{Whether two well-formed} instances $(G_1, \widehat{G_1})$ and $(G_2, \widehat{G_2})$ of {\CFCut} belong to the same equivalence class or not can be decided in polynomial-time by verifying the conditions $|V(G_1)| = |V(G_2)|$.
{We argue why this is a polynomial equivalence relation.
    We can check in polynomial-time whether two given instances belong to one equivalence class or different equivalence classes.
    Note that all malformed instances form a single equivalence class.
    Two instances with the same number of vertices belong to one equivalence class.
    Hence, a finite set $S$ of instances with at most $n$ vertices form at most $\cOO(n)$ equivalence classes.}

Consider $2^t$ instances $(G_1, \widehat{G_1}), (G_2, \widehat{G_2}),\ldots,(G_{2^t}, \widehat{G_{2^t}})$ of {\CFCut} such that each $G_i$ has exactly $n$ vertices.
We invoke \Cref{lemma:cross-compo-sol-size} to produce $(H, \widehat{H})$ such that the claimed properties are satisfied.
Suppose that $k^*$ is the maximum number of edges among all graphs in $G_1,\ldots,G_{2^t}$.
 We set $(H, \widehat{H}, k^*)$ as our output instance of {\mincfcut}.
 
First of all observe from the item-(ii) of \Cref{lemma:cross-compo-sol-size} that if no $(G_i, \widehat{G_i})$ is a yes-instance to {\CFCut}, then there cannot exist any conflict-free cut of $H$ and vice versa.
Subsequently, from item-(iii) of \Cref{lemma:cross-compo-sol-size}, the conflict-free edge cut $E_H(A, B)$ of smallest cardinality in $H$ is also a conflict-free edge cut in $G_i$ for some $i \in [2^t]$.
Hence, the number of edges in $E_H(A, B)$ is at most the number of edges in $G_i$ which is also upper bounded by $k^*$.
Therefore, $|E_H(A, B)| \leq k^*$.
Hence, $(H, \widehat{H}, k^*)$ is a {\yesinst} to {\mincfcut}.
Due to item-(iv), $k^*$ is also upper bounded by polynomial in $n + t$ which is $\cOO(n^2)$ and the vertex integrity $H$ is also upper bounded by polynomial in $n + t$.
Therefore, this procedure provides us a valid cross composition from {\cfcut} to {\mincfcut}.
Due to item-(i), $H$ is 3-degenerate and $\widehat{H}$ is 1-regular.
Hence, {\mincfcut} parameterized by the sum of solution size $k$ and the vertex integrity of the primary graph admits no polynomial kernel unless {\nka}.
\end{proof}

\subsection{Parameterized by the Vertex Cover Number}

Given an undirected graph $G$, the {\em distance to clique} is the minimum number of vertices whose deletion results in a clique.
We use ${\sf dtc}$ to denote the distance to clique.
We provide this result by a polynomial parameter transformation from {\sc Independent Set} parameterized by the distance to clique (${\sf dtc}$) to {\mincut} parameterized by the vertex cover number (${\sf vc}$) as follows.
For the sake of completeness, we define {\sc Independent Set} below.

\defproblem{{\sc Independent Set}}{An undirected graph $H=(V_H, E_H)$, an integer $\ell$.}{Does $H$ have an independent set of size at least $\ell$?}

Due to Bodlaender et al. \cite{bodlaender_cross-composition_2010}, unless {\nka}, {\sc Independent Set} admits no polynomial kernel with respect to the the parameter distance to clique (${\sf dtc}$) of the input graph.

\begin{lemma}
\label{lemma:min-cf-cut-vc}
There is a polynomial parameter transformation from the {\sc Independent Set} parameterized by the distance to clique (${\sf dtc}$) to the {\mincfcut} parameterized by the vertex cover number (${\sf vc}$) of the primary graph.
In particular, this polynomial parameter transformation also satisfies that the vertex cover number (${\sf vc}$) is $\cOO({\sf dtc})$.
\end{lemma}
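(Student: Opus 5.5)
The plan is to encode the independent-set choice into which side of each vertex-gadget is cut. The primary graph will have a constant-size vertex cover, and all the freedom will sit in degree-bounded vertices $a_v$. Since ${\sf vc}$ is the only parameter, the budget $k$ may be polynomially large. In particular I will not need the clique structure of $H$ beyond the fact that ${\sf vc}(G)=\cOO(1)\subseteq\cOO({\sf dtc}(H))$.

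\textbf{Construction.} Given $(H,\ell)$ with $V_H=\{v_1,\dots,v_n\}$, I create hub vertices $s,t,t_1,t_2$. For every $v\in V_H$ I add a vertex $a_v$ with edges $sa_v$, $a_vt_1$ and $a_vt_2$. I also add the triangle $t t_1 t_2$. The set $\{s,t,t_1,t_2\}$ is a vertex cover, so ${\sf vc}(G)\le 4$.

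The conflicts in $\widehat{G}$ are of three kinds:
\begin{itemize}
\item the three triangle edges $tt_1,tt_2,t_1t_2$ are pairwise adjacent, so by \Cref{lemma:triangle-double-edge-monochromatic} the set $T=\{t,t_1,t_2\}$ is \intra\ in every conflict-free cut;
\item for each $v$, the edge $sa_v$ conflicts with both $a_vt_1$ and $a_vt_2$;
\item for each $uv\in E_H$, the edges $sa_u$ and $sa_v$ conflict.
\end{itemize}
I set $k=2n-\ell$. The primary graph is connected, and the map is polynomial with parameter $\cOO(1)$.

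\textbf{Structural claim.} I would first show that every conflict-free cut $(A,B)$ has $s$ and $T$ on different sides. Suppose some $a_v$ lies on a side different from $s$ and from $T$ while $s$ and $T$ are together. Then the cut contains $sa_v$ together with $a_vt_1$, which is a conflict. Any other way of splitting while keeping $s$ and $T$ together would have to split $T$, which the triangle conflicts forbid. Hence $s\in A$ and $T\subseteq B$, up to symmetry. Each $a_v$ then contributes either the single edge $sa_v$ (if $a_v\in B$) or the two edges $a_vt_1,a_vt_2$ (if $a_v\in A$). Let $I=\{v: a_v\in B\}$. The cut size is $|I|+2(n-|I|)=2n-|I|$, and the cut is conflict-free exactly when $I$ is independent in $H$, because the per-vertex conflicts never fire when only one side of $a_v$ is cut.

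\textbf{Equivalence.} With this claim, $H$ has an independent set of size at least $\ell$ iff $G$ has a conflict-free cut of size at most $2n-\ell$. Given an independent set $I$ of size at least $\ell$, placing $\{a_v: v\in I\}\cup T$ on one side yields such a cut. Conversely, any such cut yields $|I|\ge \ell$. The edge case $\ell\le 0$ is trivial, and I would handle it separately so that the cut is nonempty.

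\textbf{Conclusion.} This gives a PPT, and combined with \Cref{prop:no-poly-compression} and the known lower bound for {\sc Independent Set} parameterized by ${\sf dtc}$ it yields \Cref{thm:min-cut-vc}.

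The main obstacle is the structural claim, namely ruling out every cut other than an $s$ versus $T$ separation: cuts that isolate a single $a_v$, cuts that separate $t_i$ from $t$, and cuts that put $s$ on the same side as $T$. I would verify this by a case analysis on the side of $s$ relative to $T$, using the two families of local conflicts. If this constant-${\sf vc}$ version is too strong to be correct, the fallback is to replicate the hubs according to the set $D$ with $|D|={\sf dtc}$, still keeping ${\sf vc}=\cOO({\sf dtc})$.
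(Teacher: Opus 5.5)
Your proposal is correct and is essentially the paper's construction. It uses the same hubs $s,t,t_1,t_2$, the conflict-triangle on $\{t,t_1,t_2\}$, the edges $sv, vt_1, vt_2$ per vertex with the $sv$/$vt_1$ conflict, the edges of $H$ encoded as conflicts among the edges at $s$, the budget $2n-\ell$, and the same two structural claims giving cut size $2n-|I|$.

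The only difference is that you omit the paper's conflict-triangles $\{x,a_x,b_x\}$ for $x\in X\cup\{s\}$. These play no role in the paper's correctness argument, so your version gives ${\sf vc}(G)\le 4$ without computing $X$. Since it never uses that $H-X$ is a clique, it in fact shows that {\mincfcut} is NP-hard already for constant vertex cover, and your fallback is unnecessary.
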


\begin{proof}
We provide a parameter preserving reduction from {\sc Independent Set} (with respect to the parameter ${\sf dtc}$, {distance to clique}) to {\mincfcut} (with respect to the parameter ${\sc vc}$).
 %

   \noindent
   Given an instance $(H, \ell)$ of {\sc Independent Set}, we first in polynomial-time compute a set $X$ of at most $2{\sf dtc}$ vertices such that $G - X$ is clique.
   From this instance, we construct an instance $(G, \widehat{G}, \ell')$ of {\mincfcut} such that $Y$ is vertex cover of $G$ of size $k' = 4{\sf dtc} + 4$.

    \subparagraph*{Construction.} 
    We construct the primary graph $G$ as follows:
        \begin{enumerate}[(i)]
        \item The vertex set of $G$ is as follows:
        $V(G) = V(H)  \cup \{s,t,t_1,t_2\} \cup \{a_x,b_x \mid x\in X \cup \{s \} \}$ 
        \item For every vertex $v \in V(H)$, we add three edges $sv,\ vt_1$ and $vt_2$.
        \item For every $x \in X \cup \{s\}$, we add the edges $xa_x,\ xb_x$ and $a_x b_x$.
        \item Finally, we add three more edges $t_1 t,\ t_2 t$ and $t_1 t_2$.
    \end{enumerate}

    \noindent
Note that $E(G) = \{sv, vt_1, vt_2 \mid v \in V(H)\} \cup \{xa_x, xb_x, a_xb_x \mid x \in X \cup \{s\}\} \cup \{t_1t, t_2t, t_1t_2\}$. 
Observe that the set $Y = \{ s,t_1,t_2\} \cup X \cup \{a_x \mid x \in X \cup \{s\} \}$ is a minimum vertex cover of $G$ of size $2|X|+4$.
It remains to construct the conflict graph $\widehat{G}$ such that $V(\widehat{G}) = E(G)$.
    We define the edge set $E(\widehat{G})$ as follows:
\begin{enumerate}[(i)]
        \item For every edge $uv \in E(H)$, we add an edge between $su$ and $sv$ in $\widehat{G}$.
        \item For every $v \in V(H)$, we add an edge between $sv$ and $vt_1$ in $\widehat{G}$.
        \item For every $x \in X \cup \{s\}$ the edges $xa_x,\ xb_x$ and $a_x b_x$ are pairwise adjacent in $\widehat{G}$.
        \item Finally, the edges $t_1 t,\ t_2 t$ and $t_1 t_2$ are pairwise adjacent in $\widehat{G}$.
    \end{enumerate}

    \noindent
    We set the target parameters to $k' = 2|X| + 4 \leq 4{\sf dtc} + 4$ and $\ell' = 2n - \ell$, where $n = |V(H)|$.
    
    \noindent
    It remains to prove that $H$ has an independent set of size of at least $\ell$ if and only if there exists a conflict-free cut in $(G,\widehat{G})$ of size at most $\ell'$. 
    First, we prove the following two structural claims about every conflict-free cut of $(G, \widehat{G})$.

    \begin{claim}
    \label{claim:intra-partite}
    The sets $\{t_1, t_2, t\}$ and for every $x \in X \cup \{s\}$, $\{x, a_x, b_x\}$  are intra-partite. 
    \end{claim}

    \begin{claimproof}
    By construction, the vertices $\{t_1, t_2, t\}$ induces a triangle in $G$ and the corresponding edges forms a clique in $\widehat{G}$.
    Similarly, for every $x \in X \cup \{s\}$, the vertices $\{x, a_x, b_x\}$ are pairwise adjacent in $\widehat{G}$.
    Due to \Cref{lemma:triangle-double-edge-monochromatic}, $\{t_1, t_2, t\}$ and $\{x, a_x, b_x\}$ are intra-partite.
    \end{claimproof}

    \begin{claim}
    \label{claim:inter-partite}
    The two sets $\{s\}$ and $\{t_1, t_2, t\}$ are inter-partite.
    \end{claim}

    \begin{claimproof}
    Let $(A,B)$ be any non-trivial vertex partition of $V(G)$ corresponding to a valid conflict-free edge cut.
    Suppose, for the sake of contradiction, that $\{s\}$ and $\{t_1, t_2, t\}$ lie in the same partition, say $A$.
    Since $B$ is nonempty, as a consequence of \Cref{claim:intra-partite} there exists $v \in V(H)$ such that $v \in B$.
    Therefore, the edges $sv$ and $vt_1$ belong to the edge cut, contradicting the fact that the edge cut is conflict free as $\{sv, vt_1\} \in E(\widehat{G})$.
    Hence, the claim follows.
    \end{claimproof}

	Now, we prove that the two instances $(H, \ell)$ and $(G, \widehat{G}, \ell')$.
	
    \begin{claim}
    \label{claim:independent-CFC-eqv}
    $H$ has an independent set of size at least $\ell$ if and only if $G$ contains a conflict-free cut of size at most $\ell'$.
    \end{claim}

    \begin{claimproof}
    First, we give the backward direction $(\Leftarrow)$ of the proof.
    Suppose that $(G, \widehat{G})$ contains a conflict-free edge cut $F$ of size at most $\ell'$.
    Let $(A, B)$ be the corresponding vertex partition. 
    By \Cref{claim:inter-partite}, we can assume without loss of generality that $s \in A$ and $t_1, t_2, t \in B$. Now \Cref{claim:intra-partite} ensures that no edge of any induced triangle in $G$ can be part of the cut $F$.
    In particular, for every $x \in X \cup \{s\}$, the edges $xa_x,\ xb_x$ and $a_x b_x$ are not in $F$. 
    Also, the edges $t_1 t,\ t_2 t$ and $t_1t_2$ are not in $F$. 
    Therefore, the only edges of $G$ that can be part of the cut $F$ are $\{sv, vt_1, vt_2 \mid v \in V(H)\}$. 
    
    Now, every vertex $v \in V(H)$ must be placed in either $A$ or $B$. If $v \in B$, the edge $sv \in F$. If $v \in A$, the edges $vt_1, vt_2 \in F$. Let $I = \{v \in V(H) \mid v \in B\}$.
    Observe that $I \neq \emptyset$, otherwise $E_G(A,B) = \{vt_1,\ vt_2 \mid v \in V(H)\}$ which is conflict free, but its size is $2n$ which contradicts the assumption that $|F| \leq \ell'$.
    The total size of $F$ is exactly $|F| = |I| + 2(n - |I|) = 2n - |I|$.
    Given that $|F| \leq \ell' = 2n - \ell$, it algebraically follows that $|I| \geq \ell$.
    Because $F$ is a conflict-free cut, no two edges from $\{sv \mid v \in I\}$ can be adjacent in $\widehat{G}$.
    By the construction of $\widehat{G}$, this strictly implies no two vertices in $I$ share an edge in $H$.
    Therefore, $I$ is a valid independent set of size at least $\ell$ in $H$.

    For the forward direction $(\Rightarrow)$, suppose that $H$ has an independent set $I$ of size at least $\ell$.
    We construct a non-trivial partition $(A, B)$ of $V(G)$ by defining 
    $B = I \cup \{t_1, t_2, t\} \cup \{a_x, b_x \mid x \in X \cap I\}$ and setting $A = V(G) \setminus B$.
    Therefore, the edge cut corresponding to the partition $(A, B)$ is $F = E_G(A,B) = \{sv \mid v \in I\} \cup \{vt_1, vt_2 \mid v \in V(H) \setminus I\}$.
    The total size of the cut-set is $|F| = |I| + 2(n - |I|) = 2n - |I|$. 
    Since $|I| \geq \ell$, it follows that $|F| \leq 2n - \ell = \ell'$. 
    
    It remains to verify that $F$ is an independent set in the conflict graph $\widehat{G}$. Observe that, by construction, $\{vt_1, vt_2 \mid v \in V(H) \setminus I\}$ is independent in $\widehat{G}$.
    The only edges in $F$ incident to $s$ are $\{sv \mid v \in I\}$. Since $I$ is an independent set in $H$, there is no edge $uv \in E(H)$ such that both $u, v \in I$.
    Consequently, no two edges in $\{sv \mid v \in I\}$ trigger a conflict. 
    Furthermore, $\widehat{G}$ strictly conflicts $sv$ with $vt_1$ for the same vertex $v$.
    By our partition, for any $v \in V(H)$, $F$ contains either $sv$ (when $v \in I$) or $vt_1$ (when $v \notin I$), but never both simultaneously.
    Therefore, $F$ is an independent set in $\widehat{G}$, and thus a valid conflict-free cut of size at most $\ell'$.
    \end{claimproof}
    
    As ${\sf vc}(G) \leq 4{\sf dtc} + 4$.
    This completes the proof of lemma.
\end{proof}

Now, we use this lemma to prove our final result that we restate below.

{\MinCutVC*}

\begin{proof}
Due to \Cref{lemma:min-cf-cut-vc}, given an instance $(H, \ell)$ of {\sc Independent Set}, there exists a polynomial-time algorithm that constructs an instance $(G, \widehat{G}, 2n - \ell)$ where $|V(H)| = n$.
In addition, it also satisfies that the ${\sf vc}(G) \leq 4{\sf dtc} + 4$.
Hence, this satisfies the requirement of a polynomial parameter transformation.
Since {\sc Independent Set} parameterized by ${\sf dtc}$ admits no polynomial kernel (unless {\nka}), and \Cref{lemma:min-cf-cut-vc} provides a polynomial parameter transformation, hence by \Cref{prop:no-poly-compression}, {\mincfcut} does not admit any polynomial kernel unless {\nka}.
\end{proof}


\section{Conclusions and Future Research}
\label{sec:conclusions}

Our paper makes significant progress towards providing a comprehensive picture on the fixed-parameterized tractability and kernelization complexity of {\CFCut}. 
A similar variant can also be defined for the vertex variant {\sc Conflict-Free Vertex Cut} in which the objective is to find a vertex cut that is independent in the conflict graph.
We believe that some of our (tractability/hardness) results can be extended for the ``vertex variant'' (in which the objective is to find a vertex cut that is independent in the conflict graph) of this problem as well.
It would be interesting to identify a situation in which the parameterized complexity status of {\CFCut} behaves differently from {\sc Conflict-Free Vertex Cut}.
This itself provides a direction of future research.
While our result proves that unless {\nka}, {\mincut} admits no polynomial kernel when the vertex cover number is the parameter, but that does not automatically translate to {\cfcut}.
Hence, the polynomial kernelization status of {\cfcut} when parameterized by the vertex cover number (${\sf vc}$) remains an open problem.
Additionally, {\sc Max-Cut} is NP-Complete problem and it would be interesting to investigate the conflict-free variant of {\sc Max Cut} from the perspective of parameterized complexity as well.

\subparagraph*{AI Disclosure:} The ideas of the algorithms, hardness reductions, and correctness proof arguments are all contributions of the authors themselves. 
The proofs ideas of \Cref{thm:diss-hardness} and \Cref{lemma:min-cf-cut-vc} were scaffolded to Gemini-Pro tool to provide the LaTeX code and it was subsequently carefully polished by the authors.
For the rest of the results, the proof ideas and subsequently, writing and polishing of the proofs are contributions of the authors themselves.
Apart from the above, Claude Pro was used for polishing texts in various parts of the paper.


\bibliography{refs.bib}

@inproceedings{golovach2026identification,
  author    = {Golovach, Petr A. and Morelle, Laure and Paulusma, Dani{\"e}l},
  title     = {Identification to Subclasses of Chordal Graphs},
  booktitle = {21st International Symposium on Parameterized and Exact Computation (IPEC 2026)},
  year      = {2026},
  note      = {To appear}
}

@inproceedings{BodlaenderHvL2025,
  author    = {Bodlaender, Hans L. and Hermelin, Danny and van Leeuwen, Erik Jan},
  title     = {Concurrency Constrained Scheduling with Tree-Like Constraints},
  booktitle = {Graph-Theoretic Concepts in Computer Science -- 51st International Workshop (WG 2025)},
  series    = {Lecture Notes in Computer Science},
  volume    = {16124},
  pages     = {105--120},
  year      = {2025},
  publisher = {Springer}
}

@article{TellacheB2018,
  author  = {Tellache, Nour El Houda and Boudhar, Mourad},
  title   = {Flow Shop Scheduling Problem with Conflict Graphs},
  journal = {Annals of Operations Research},
  volume  = {261},
  number  = {1},
  pages   = {339--363},
  year    = {2018},
  publisher = {Springer},
  doi     = {10.1007/s10479-017-2560-x}
}

@article{MallekBoudhar2024,
  author  = {Mallek, Amin and Boudhar, Mourad},
  title   = {Scheduling on Uniform Machines with a Conflict Graph: Complexity and Resolution},
  journal = {International Transactions in Operational Research},
  volume  = {31},
  number  = {2},
  pages   = {863--888},
  year    = {2024}
}

@article{ZhangZCCW2023,
  author  = {Zhang, An and Zhang, Liang and Chen, Yong and Chen, Guangting and Wang, Xing},
  title   = {Complexity and Approximation Algorithms for Two Parallel Dedicated Machine Scheduling with Conflict Constraints},
  journal = {Theoretical Computer Science},
  volume  = {941},
  pages   = {167--179},
  year    = {2023},
  publisher = {Elsevier},
  doi     = {10.1016/j.tcs.2022.11.012}
}

@article{HongBertrand2018,
  author  = {Hong, Huai-Che and Lin, Bertrand M. T.},
  title   = {Parallel Dedicated Machine Scheduling with Conflict Graphs},
  journal = {Computers \& Industrial Engineering},
  volume  = {124},
  pages   = {316--321},
  year    = {2018}
}

@article{BodlaenderJW94,
  author  = {Bodlaender, Hans L. and Jansen, Klaus and Woeginger, Gerhard J.},
  title   = {Scheduling with Incompatible Jobs},
  journal = {Discrete Applied Mathematics},
  volume  = {55},
  number  = {3},
  pages   = {219--232},
  year    = {1994}
}

@article{ChitnisCHPP16,
  author  = {Chitnis, Rajesh and Cygan, Marek and Hajiaghayi, MohammadTaghi and Pilipczuk, Marcin and Pilipczuk, Micha{\l}},
  title   = {Designing FPT Algorithms for Cut Problems Using Randomized Contractions},
  journal = {SIAM Journal on Computing},
  volume  = {45},
  number  = {4},
  pages   = {1171--1229},
  year    = {2016}
}

@article{CyganPW2013,
  author  = {Cygan, Marek and Pilipczuk, Marcin and Pilipczuk, Micha{\l} and Wojtaszczyk, Jakub Onufry},
  title   = {On Multiway Cut Parameterized above Lower Bounds},
  journal = {ACM Transactions on Computation Theory},
  volume  = {5},
  number  = {1},
  pages   = {3:1--3:11},
  year    = {2013}
}

@article{MarxRazgon2014,
  author  = {Marx, D{\'a}niel and Razgon, Igor},
  title   = {Fixed-Parameter Tractability of Multicut Parameterized by the Size of the Cutset},
  journal = {SIAM Journal on Computing},
  volume  = {43},
  number  = {2},
  pages   = {355--388},
  year    = {2014}
}

@article{ChenLL2009,
  author  = {Chen, Jianer and Liu, Yang and Lu, Songjian},
  title   = {An Improved Parameterized Algorithm for the Minimum Node Multiway Cut Problem},
  journal = {Algorithmica},
  volume  = {55},
  number  = {1},
  pages   = {1--13},
  year    = {2009}
}

@article{FariaKSSS20,
  author       = {Lu{\'{e}}rbio Faria and
                  Sulamita Klein and
                  Ignasi Sau and
                  U{\'{e}}verton S. Souza and
                  Rubens Sucupira},
  title        = {Maximum cuts in edge-colored graphs},
  journal      = {Discret. Appl. Math.},
  volume       = {281},
  pages        = {229--234},
  year         = {2020}
}

@article{GomesS21,
  author       = {Guilherme de C. M. Gomes and
                  Ignasi Sau},
  title        = {Finding Cuts of Bounded Degree: Complexity, {FPT} and Exact Algorithms,
                  and Kernelization},
  journal      = {Algorithmica},
  volume       = {83},
  number       = {6},
  pages        = {1677--1706},
  year         = {2021}
}

@article{ZhangKP2011,
  author  = {Zhang, Ruonan and Kabadi, Santosh N. and Punnen, Abraham P.},
  title   = {The Minimum Spanning Tree Problem with Conflict Constraints and Its Variations},
  journal = {Discrete Optimization},
  volume  = {8},
  number  = {2},
  pages   = {191--205},
  year    = {2011}
}

@article{GendreauLS2004,
  author  = {Gendreau, Michel and Laporte, Gilbert and Semet, Fr{\'e}d{\'e}ric},
  title   = {Heuristics and Lower Bounds for the Bin Packing Problem with Conflicts},
  journal = {Computers \& Operations Research},
  volume  = {31},
  number  = {3},
  pages   = {347--358},
  year    = {2004},
  publisher = {Elsevier},
  doi     = {10.1016/S0305-0548(02)00195-8}
}

@article{CapuaFLV18,
  author  = {Capua, Renatha and Frota, Yuri and Ochi, Luiz Satoru and Vidal, Thibaut},
  title   = {A Study on Exponential-Size Neighborhoods for the Bin Packing Problem with Conflicts},
  journal = {Journal of Heuristics},
  volume  = {24},
  number  = {4},
  pages   = {667--695},
  year    = {2018},
  publisher = {Springer},
  doi     = {10.1007/s10732-018-9372-2}
}

@article{BettinelliCM17,
  author  = {Bettinelli, Andrea and Cacchiani, Valentina and Malaguti, Enrico},
  title   = {A Branch-and-Bound Algorithm for the Knapsack Problem with Conflict Graph},
  journal = {INFORMS Journal on Computing},
  volume  = {29},
  number  = {3},
  pages   = {457--473},
  year    = {2017}
}

@article{DarmannPSW11,
  author       = {Andreas Darmann and
                  Ulrich Pferschy and
                  Joachim Schauer and
                  Gerhard J. Woeginger},
  title        = {Paths, trees and matchings under disjunctive constraints},
  journal      = {Discret. Appl. Math.},
  volume       = {159},
  number       = {16},
  pages        = {1726--1735},
  year         = {2011}
}

@article{gupta2026computational,
  author  = {Gupta, Shiwali and Mathew, Rogers},
  title   = {Computational and Combinatorial Results on Conflict-free Choosability},
  journal = {arXiv preprint arXiv:2605.10776},
  year    = {2026}
}

@article{BodlaenderKP2021,
  author  = {Bodlaender, Hans L. and Kolay, Sudeshna and Pieterse, Astrid},
  title   = {Parameterized Complexity of Conflict-Free Graph Coloring},
  journal = {SIAM Journal on Discrete Mathematics},
  volume  = {35},
  number  = {3},
  pages   = {2003--2038},
  year    = {2021}
}

@article{BhyravarapuHHKR24,
  author  = {Bhyravarapu, Sriram and Hartmann, Tim A. and Hoang, Hung P. and Kalyanasundaram, Subrahmanyam and Reddy, I. Vinod},
  title   = {Conflict-Free Coloring: Graphs of Bounded Clique Width and Intersection Graphs},
  journal = {Algorithmica},
  volume  = {86},
  number  = {7},
  pages   = {2250--2274},
  year    = {2024}
}

@inproceedings{AgrawalJKL018,
  author    = {Akanksha Agrawal and
               Pallavi Jain and
               Lawqueen Kanesh and
               Daniel Lokshtanov and
               Saket Saurabh},
  editor    = {Igor Potapov and
               Paul G. Spirakis and
               James Worrell},
  title     = {Conflict Free Feedback Vertex Set: {A} Parameterized Dichotomy},
  booktitle = {43rd International Symposium on Mathematical Foundations of Computer
               Science, {MFCS} 2018, Liverpool, UK, August 27-31, 2018},
  series    = {LIPIcs},
  volume    = {117},
  pages     = {53:1--53:15},
  publisher = {Schloss Dagstuhl - Leibniz-Zentrum f{\"{u}}r Informatik},
  year      = {2018},
  url       = {https://doi.org/10.4230/LIPIcs.MFCS.2018.53},
  doi       = {10.4230/LIPICS.MFCS.2018.53},
  bibsource = {dblp computer science bibliography, https://dblp.org}
}

@article{AgrawalJKS20,
  author    = {Akanksha Agrawal and
               Pallavi Jain and
               Lawqueen Kanesh and
               Saket Saurabh},
  title     = {Parameterized Complexity of Conflict-Free Matchings and Paths},
  journal   = {Algorithmica},
  volume    = {82},
  number    = {7},
  pages     = {1939--1965},
  year      = {2020},
  url       = {https://doi.org/10.1007/s00453-020-00681-y},
  doi       = {10.1007/S00453-020-00681-Y},
  bibsource = {dblp computer science bibliography, https://dblp.org}
}

@inproceedings{AravindS21,
  author    = {N. R. Aravind and
               Roopam Saxena},
  editor    = {Paola Flocchini and
               Lucia Moura},
  title     = {An {FPT} Algorithm for Matching Cut and d-Cut},
  booktitle = {Combinatorial Algorithms - 32nd International Workshop, {IWOCA} 2021,
               Ottawa, ON, Canada, July 5-7, 2021, Proceedings},
  series    = {Lecture Notes in Computer Science},
  volume    = {12757},
  pages     = {531--543},
  publisher = {Springer},
  year      = {2021},
  url       = {https://doi.org/10.1007/978-3-030-79987-8\_37},
  doi       = {10.1007/978-3-030-79987-8\_37},
  bibsource = {dblp computer science bibliography, https://dblp.org}
}

@inproceedings{BanikFG0J024,
  author    = {Aritra Banik and
               Fedor V. Fomin and
               Petr A. Golovach and
               Tanmay Inamdar and
               Satyabrata Jana and
               Saket Saurabh},
  editor    = {Timothy M. Chan and
               Johannes Fischer and
               John Iacono and
               Grzegorz Herman},
  title     = {Cuts in Graphs with Matroid Constraints},
  booktitle = {32nd Annual European Symposium on Algorithms, {ESA} 2024, Royal Holloway,
               London, United Kingdom, September 2-4, 2024},
  series    = {LIPIcs},
  volume    = {308},
  pages     = {17:1--17:15},
  publisher = {Schloss Dagstuhl - Leibniz-Zentrum f{\"{u}}r Informatik},
  year      = {2024},
  url       = {https://doi.org/10.4230/LIPIcs.ESA.2024.17},
  doi       = {10.4230/LIPICS.ESA.2024.17},
  bibsource = {dblp computer science bibliography, https://dblp.org}
}

@inproceedings{bodlaender_cross-composition_2010,
  author    = {Hans L. Bodlaender and
               Bart M. P. Jansen and
               Stefan Kratsch},
  editor    = {Thomas Schwentick and
               Christoph D{\"{u}}rr},
  title     = {Cross-Composition: {A} New Technique for Kernelization Lower Bounds},
  booktitle = {28th International Symposium on Theoretical Aspects of Computer Science,
               {STACS} 2011, Dortmund, Germany, March 10-12, 2011},
  series    = {LIPIcs},
  volume    = {9},
  pages     = {165--176},
  publisher = {Schloss Dagstuhl - Leibniz-Zentrum f{\"{u}}r Informatik},
  year      = {2011},
  url       = {https://doi.org/10.4230/LIPIcs.STACS.2011.165},
  doi       = {10.4230/LIPICS.STACS.2011.165},
  bibsource = {dblp computer science bibliography, https://dblp.org}
}

@article{comas_vertex_2009,
  title   = {Vertex fusion under distance constraints},
  volume  = {30},
  issn    = {0195-6698},
  url     = {https://www.sciencedirect.com/science/article/pii/S019566980900064X},
  doi     = {https://doi.org/10.1016/j.ejc.2009.03.014},
  number  = {7},
  journal = {European Journal of Combinatorics},
  author  = {Comas, Marc and Serna, Maria},
  year    = {2009},
  pages   = {1612--1623}
}

@book{cygan2015parameterized,
  title     = {Parameterized algorithms},
  author    = {Cygan, Marek and Fomin, Fedor V and Kowalik, {\L}ukasz and Lokshtanov, Daniel and Marx, D{\'a}niel and Pilipczuk, Marcin and Pilipczuk, Micha{\l} and Saurabh, Saket},
  volume    = {5},
  year      = {2015},
  publisher = {Springer}
}

@book{diestel2025graph,
  author    = {Reinhard Diestel},
  title     = {Graph Theory, 4th Edition},
  series    = {Graduate texts in mathematics},
  volume    = {173},
  publisher = {Springer},
  year      = {2012},
  isbn      = {978-3-642-14278-9},
  bibsource = {dblp computer science bibliography, https://dblp.org}
}

@book{downey2013fundamentals,
  title     = {Fundamentals of parameterized complexity},
  author    = {Downey, Rodney G and Fellows, Michael R and others},
  volume    = {4},
  year      = {2013},
  publisher = {Springer}
}

@article{FeghaliLPR25,
  author    = {Carl Feghali and
               Felicia Lucke and
               Dani{\"{e}}l Paulusma and
               Bernard Ries},
  title     = {Matching Cuts in Graphs of High Girth and H-Free Graphs},
  journal   = {Algorithmica},
  volume    = {87},
  number    = {8},
  pages     = {1199--1221},
  year      = {2025},
  url       = {https://doi.org/10.1007/s00453-025-01318-8},
  doi       = {10.1007/S00453-025-01318-8},
  bibsource = {dblp computer science bibliography, https://dblp.org}
}

@article{fomin2019subquadratic,
  title     = {Subquadratic kernels for implicit 3-hitting set and 3-set packing problems},
  author    = {Fomin, Fedor V and Le, Tien-Nam and Lokshtanov, Daniel and Saurabh, Saket and Thomass{\'e}, St{\'e}phan and Zehavi, Meirav},
  journal   = {ACM Transactions on Algorithms (TALG)},
  volume    = {15},
  number    = {1},
  pages     = {1--44},
  year      = {2019},
  publisher = {ACM New York, NY, USA}
}

@article{JacobMR21,
  author    = {Ashwin Jacob and
               Diptapriyo Majumdar and
               Venkatesh Raman},
  title     = {Parameterized Complexity of Conflict-Free Set Cover},
  journal   = {Theory Comput. Syst.},
  volume    = {65},
  number    = {3},
  pages     = {515--540},
  year      = {2021},
  url       = {https://doi.org/10.1007/s00224-020-10022-9},
  doi       = {10.1007/S00224-020-10022-9},
  bibsource = {dblp computer science bibliography, https://dblp.org}
}

@article{JainKM20,
  author    = {Pallavi Jain and
               Lawqueen Kanesh and
               Pranabendu Misra},
  title     = {Conflict Free Version of Covering Problems on Graphs: Classical and
               Parameterized},
  journal   = {Theory Comput. Syst.},
  volume    = {64},
  number    = {6},
  pages     = {1067--1093},
  year      = {2020},
  url       = {https://doi.org/10.1007/s00224-019-09964-6},
  doi       = {10.1007/S00224-019-09964-6},
}

@inproceedings{KalyanasundaramK26,
  author    = {Subrahmanyam Kalyanasundaram and
               Subodh Kumar},
  editor    = {Florent Foucaud and
               Aline Parreau},
  title     = {Conflict-Free Cuts in Planar and 3-Degenerate Graphs with 1-Regular
               Conflicts},
  booktitle = {Combinatorial Algorithms - 37th International Workshop, {IWOCA} 2026,
               Clermont-Ferrand, France, June 8-11, 2026, Proceedings},
  series    = {Lecture Notes in Computer Science},
  volume    = {16587},
  pages     = {385--399},
  publisher = {Springer},
  year      = {2026}
}

@article{KomusiewiczKL20,
  author    = {Christian Komusiewicz and
               Dieter Kratsch and
               Van Bang Le},
  title     = {Matching cut: Kernelization, single-exponential time FPT, and exact
               exponential algorithms},
  journal   = {Discret. Appl. Math.},
  volume    = {283},
  pages     = {44--58},
  year      = {2020},
  url       = {https://doi.org/10.1016/j.dam.2019.12.010},
  doi       = {10.1016/J.DAM.2019.12.010},
  bibsource = {dblp computer science bibliography, https://dblp.org}
}

@article{LokshtanovPSSZ20,
  author    = {Daniel Lokshtanov and
               Fahad Panolan and
               Saket Saurabh and
               Roohani Sharma and
               Meirav Zehavi},
  title     = {Covering Small Independent Sets and Separators with Applications to
               Parameterized Algorithms},
  journal   = {{ACM} Trans. Algorithms},
  volume    = {16},
  number    = {3},
  pages     = {32:1--32:31},
  year      = {2020},
  url       = {https://doi.org/10.1145/3379698},
  doi       = {10.1145/3379698},
  bibsource = {dblp computer science bibliography, https://dblp.org}
}

@article{LuckePR23,
  author  = {Felicia Lucke and
             Dani{\"{e}}l Paulusma and
             Bernard Ries},
  title   = {Finding Matching Cuts in H-Free Graphs},
  journal = {Algorithmica},
  volume  = {85},
  number  = {10},
  pages   = {3290--3322},
  year    = {2023}
}

@article{LuckePR24,
  author    = {Felicia Lucke and
               Dani{\"{e}}l Paulusma and
               Bernard Ries},
  title     = {Dichotomies for Maximum Matching Cut: H-freeness, bounded diameter,
               bounded radius},
  journal   = {Theor. Comput. Sci.},
  volume    = {1017},
  pages     = {114795},
  year      = {2024},
  url       = {https://doi.org/10.1016/j.tcs.2024.114795},
  doi       = {10.1016/J.TCS.2024.114795},
  bibsource = {dblp computer science bibliography, https://dblp.org}
}

@article{majumdar_approximate_2020,
  title   = {On the {Approximate} {Compressibility} of {Connected} {Vertex} {Cover}},
  volume  = {82},
  issn    = {1432-0541},
  url     = {https://doi.org/10.1007/s00453-020-00708-4},
  doi     = {10.1007/s00453-020-00708-4},
  number  = {10},
  journal = {Algorithmica},
  author  = {Majumdar, Diptapriyo and Ramanujan, M. S. and Saurabh, Saket},
  month   = oct,
  year    = {2020},
  pages   = {2902--2926}
}

@article{morelle_vertex_2026,
  title    = {Vertex identification to a forest},
  volume   = {349},
  issn     = {0012-365X},
  url      = {https://www.sciencedirect.com/science/article/pii/S0012365X25003073},
  doi      = {https://doi.org/10.1016/j.disc.2025.114699},
  number   = {1},
  journal  = {Discrete Mathematics},
  author   = {Morelle, Laure and Sau, Ignasi and Thilikos, Dimitrios M.},
  year     = {2026},
  pages    = {114699}
}

@article{RauchRS25,
  author    = {Johannes Rauch and
               Dieter Rautenbach and
               U{\'{e}}verton S. Souza},
  title     = {On conflict-free cuts: Algorithms and complexity},
  journal   = {Inf. Process. Lett.},
  volume    = {187},
  pages     = {106503},
  year      = {2025},
  url       = {https://doi.org/10.1016/j.ipl.2024.106503},
  doi       = {10.1016/J.IPL.2024.106503},
  bibsource = {dblp computer science bibliography, https://dblp.org}
}

@article{StoerWagner1997,
  author    = {Mechthild Stoer and Frank Wagner},
  title     = {A Simple Min-Cut Algorithm},
  journal   = {Journal of the ACM},
  volume    = {44},
  number    = {4},
  pages     = {585--591},
  year      = {1997},
  doi       = {10.1145/263867.263872}
}

\end{document}